\documentclass[sigconf,nonacm]{acmart} 

\usepackage{preamble}

\usepackage{balance} 

\title[Optimal Voting]{Optimal Voting: Beyond a Utilitarian View}

\author{Ben Armstrong}
\affiliation{
  \institution{Tulane University}
  \city{New Orleans}
  \country{United States}}
\email{barmstrong@tulane.edu}

\author{Nicholas Mattei}
\affiliation{
  \institution{Tulane University}
  \city{New Orleans}
  \country{United States}}
\email{nmattei@tulane.edu}

\begin{abstract}
A common measure of the quality of a voting rule is the social welfare of the alternatives it elects. Prior work shows that, when voter utilities are drawn from a neutral distribution, expected utilitarian welfare is maximized by a positional scoring rule. We extend this approach to the power mean family of welfare functions, which includes utilitarian, Nash, and egalitarian welfare as special cases. We show that, across the power mean family, a deterministic positional scoring rule (PSR) is asymptotically optimal for expected social welfare among all social choice functions and we characterize this rule for a broad class of utility distributions. We further show that, among randomized PSRs, plurality uniquely maximizes expected social welfare for all power means. Experimentally, we find that optimal score vectors often yield substantially higher welfare than canonical score vectors. On highly structured utility distributions, optimal score vectors take on novel shapes unlike existing canonical score vectors, highlighting open questions for future analysis.
\end{abstract}

\keywords{Positional Scoring, Social Choice, Optimization}

\begin{document}




\maketitle 


\section{Introduction}
\label{sec:introduction}

Voting is used to aggregate the preferences of a group of stakeholders over a set of alternatives in order to arrive at a shared group decision \cite{brandt2016handbook,Zwicker:Voting}.
Voting rules frequently operate on a ranking over alternatives provided by each voter, yet voters often have an underlying intensity of preference or \textit{utility} associated with each alternative.
Voters, who may be part of a human society or a digital agent such as a recommender system~\cite{aird2024dynamic} or AI agent, may not even know their exact utility over alternatives; this might arise from uncertain or changing human preferences, lack of alignment between human preferences and AI agent representatives~\cite{conitzer2024social}, or even noisiness in a communication environment. As a result, ordinal rankings are relied upon to achieve good outcomes from a voting procedure. The quality of such a procedure might be measured in two ways: First, voting rules may be desirable due to the axiomatic properties which they satisfy. Second, voting rules aim to optimize some target function based on voter preferences.

Under the axiomatic perspective, much work studies voting procedures which provably satisfy some set of desirable principles for all possible preferences~\cite{Zwicker:Voting}. For instance, a system designer might wish to ensure that a single voter does not have unilateral control over election outcomes (non-dictatorship) or that an alternative who is preferred head-to-head against all other alternatives is always selected (Condorcet consistency).
The optimization view, which we focus on in this paper, sees the goal of a voting rule as finding some outcome that maximizes a target function. This may be, for example, a straightforward sum of voter utilities --- known as utilitarian welfare. A frequent focus of research is on \textit{worst-case} analysis; finding the voting rule which maximizes the worst possible outcome that a voting rule might produce \cite{procaccia2006distortion}. 

Recently, the optimization view has been shown effective on axiomatic settings using black-box techniques, i.e., machine learning, to find novel rules that rarely violate axioms, rather than provably satisfying axioms \cite{caiata2026voting,matone2024deepvoting}. These approaches show the existence of such rules but are not immediately useful in practice. An optimization approach constrained to a bounded class of rules, such as score based rules, has shown that the axiomatic approach can be cleanly combined with optimization to blend these schools of thought~\cite{berker2026designing}.

\textit{Contribution.} 
We focus on the optimization task of maximizing a broad class of function over voter utilities: the \textit{power mean}, which includes special cases such as utilitarian (utility sum), Nash (utility product), and egalitarian (min utility) welfare functions. We find that deterministic positional scoring rules (PSRs) asymptotically maximize welfare and explore how the optimal rule changes based on utility distribution and power mean. These rules are particularly desirable as an optimization target as they are well understood axiomatically~\cite{young1975social,skowron2019axiomatic}, thus blending both schools of voting. 
This work generalizes and extends prior results for specific instances of the power mean family \cite{apesteguia2011justice,boutilier2015optimal}.
Among our results, we introduce a novel description of the rule exactly optimizing Nash social welfare and strengthen previous results on egalitarian welfare. We also show that the best randomized PSR is always a $k$-approval rule and, under a minimal assumption on voter independence, is plurality. These contributions are summarized in \Cref{tab:our_contribution}. Experimental analysis supports these results and finds the optimized positional scoring vector for distributions outside of our theoretical analysis, highlighting open questions for future research and suggesting that the well-known Borda rule may prove to maximize social welfare on certain highly structured preference domains.

\section{Related Work}

Much of the past work is focused on a worst-case analysis and many papers have explored the optimization of utilitarian social welfare.  We find less work that studies other social welfare functions.

\textit{Average Case Analysis}
Several previous works have studied the rule which maximizes utilitarian social welfare and few that study other functions. \citet{weber1978reproducing} assumes voter utilities are distributed independently and uniformly on an interval and show that Borda asymptotically maximizes utilitarian welfare. \citet{bordley1983pragmatic} and \citet{merrill1984comparison} use computer simulation to expand these results to additional utility distributions and voting rules in the utilitarian setting. Borda was shown to be the optimal score vector for expected utilitarian welfare by \citet{apesteguia2011justice}, when utilities are sampled i.i.d. from a uniform distribution for all voters. Under a slightly different model, \citet{boutilier2015optimal} also show that expected utilitarian welfare is maximized by a scoring rule based on the order statistics of the underlying utility distribution and, on uniform utilities, is the Borda vector for any number of voters.  Later work shows a novel positional scoring rule, \textit{binomial voting}, approximately maximizes utilitarian welfare \cite{gonczarowski2023distortion}. ~\citet{pivato2016asymptotic} shows Borda is highly likely to elect the alternative maximizing social welfare in a given profile, under varying utility distributions. Experimental analysis also shows that on Euclidean preferences, Borda typically results in much higher utilitarian welfare than other voting rules \cite{green2016statistical}.

\textit{Worst Case Analysis}
Early work from \citet{procaccia2006distortion} introduced distortion as a measure of the worst-case quality of voting rules and shows that many deterministic PSRs, such as Borda and Veto, have unbounded distortion. Subsequent work shows that despite poor worst-case welfare, all PSRs can be learned with an efficient number of samples in polynomial time, highlighting the possibility of attempting to learn novel PSRs optimized for a metric of interest, such as expected welfare~\cite{procaccia2009learnability}.
\citet{boutilier2015optimal} show that a randomized rule based on the \textit{harmonic} score vector $(1, \nicefrac{1}{2}, \nicefrac{1}{3}, \dots)$ is near-optimal for utilitarian distortion.



\begin{table}[tbp]
\centering
\begin{tabular}{|c|c|c|}
\hline
\textbf{Welfare Function}     & \textbf{Deterministic}                            & \textbf{Randomized}                     \\ \hline
Power Mean                    & $\mathbb{E}[\psi_p(U_{(k)})]$                     & \multirow{5}{*}{$k$-Approval$^\dagger$} \\ \cline{1-2}
Egalitarian ($p \to -\infty$) & Veto$^*$             &                                         \\ \cline{1-2}
Nash($p = 0$)                 & $n \log \mathbb{E}[U_{(k)}^{1/n}]$                &                                         \\ \cline{1-2}
Utilitarian($p = 1$)          & $\mathbb{E}[U_{(k)}]$ \cite{boutilier2015optimal} &                                         \\ \cline{1-2}
Maximax ($p \to \infty$)      & Plurality$^*$        &                                         \\ \hline
\end{tabular}
\caption{Average-case welfare maximizing PSRs. Results with no citation are new. We show deterministic rules asymptotically maximize power mean welfare across all social choice functions. $*$: \citet{apesteguia2011justice} provide non-asymptotic, non-PSRs exactly maximizing welfare. $\dagger$: With neutrality on alternatives and independence among voters this is plurality.}
\label{tab:our_contribution}
\end{table}

\textit{Optimization of Non-Utilitarian SW} 
In addition to their utilitarian result, \citet{apesteguia2011justice} also show that egalitarian and maximax welfare are \textit{not} optimized by a PSR as social welfare cannot be exactly represented as a summation over per-voter terms. Use of the power mean function as a welfare function in voting was first proposed by \citet{cousins2022uncertainty} who apply a sample-complexity approach to identify possible causes of systemic bias. This is extended to the \textit{weighted} power mean function by \citet{pardeshi2024learning} who study the PAC-learnability of welfare functions based on their weights. A similar welfare function has also been used in studying fair allocations \cite{barman2020tight,barman2022universal}. The optimization of arbitrary utility functions has been discussed by \citet{armstrong2026optimizing} and applied to a measure of axiomatic satisfaction \cite{caiata2026voting}. Nash-based rules have been studied normatively for allocations \cite{airiau2023portioning} and voting \cite{aziz2019fair}. \citet{aziz2019fair} detail properties of a Nash rule which takes as input voter utilities but do not comment on ordinal rules. \citet{ebadian2024optimized} study Nash welfare and show that a rule exists which can always achieve within a $\Theta(\log m)$ approximation to the best welfare.

\section{Model}
\label{sec:model}

Following a typical model of voting \cite{Zwicker:Voting}, we have a set of alternatives $A = \{a_1, a_2, ..., a_m\}$ and a set of voters $V = \{v_1, v_2, ..., v_n\}$. We assume each voter $v_i$ draws a utility vector $(u_i(a))_{a \in A}$ from a distribution $D$ over $\mathbb{R}_{>0}^m$, giving a utility function $u_i: A \rightarrow \mathbb{R}_{>0}$ where $u_i(a)$ denotes the utility to $v_i$ if $a$ is selected as a winner. The set of utility functions indexed by voter is called a \textit{utility profile} $\mathbf{u}$.

Each utility function induces a preference order $\sigma$ over the alternatives $A$, where $\sigma(a)$ refers to the rank of $a$ in $\sigma$ and the collection of all preference orders is a preference profile $\boldsymbol{\sigma}$. For a voter $v_i$, $\sigma_i$ denotes a strict ordering over $A$ such that $\sigma_i(a) < \sigma_i(a') \iff u_i(a) > u_i(a')$. That is, $v_i$ prefers $a$ over $a'$ if, and only if, $v_i$ receives more utility from the election of $a$ than from the election of $a'$. Let $\boldsymbol{\sigma}^\mathbf{u}$ refer to the preference profile containing all orderings induced by a utility profile $\mathbf{u}$. The sets of all possible utility and preference profiles are denoted, respectively, $\mathbb{U}$ and $\mathbb{P}$. We use the term ``profile'' to refer to both types of profile as contextually appropriate.

We introduce three utility assumptions used throughout.


\begin{itemize}

    \asmitem{A1}{a1} Voters draw their utility vectors independently from the same neutral distribution $D$ on $(0, \infty)^m$. $D$ is unchanged when the alternatives are relabelled and the joint distribution of utility profiles is anonymous and neutral.
    
    \asmitem{A1$^{+}$}{a1plus} Each voter's utilities for the $m$ alternatives are i.i.d.\ draws from a distribution with CDF $F$, shared between all voters.
    
    \asmitem{A2}{a2} No voter assigns equal utility to two alternatives and each ranking $\sigma_i$ is strict.
\end{itemize}

Under \Cref{a1,a2} our model matches the utility assumptions used by \citet{boutilier2015optimal}. If we use \Cref{a1plus,a2} we instead get the model of \citet{apesteguia2011justice}.  \Cref{a1} allows utilities over alternatives to be correlated within a single voter's preference order while under \Cref{a1plus} does not. Unless stated otherwise we assume \Cref{a1,a2}.

\subsection{Positional Scoring Rules}
A \textit{social choice function} (SCF) $f: \mathbb{P} \rightarrow 2^A \setminus {\emptyset}$ maps a preference profile to a non-empty set of winning alternatives. An SCF is called \textit{resolute} if it always returns a single winning alternative \cite{Zwicker:Voting}.

We focus on a special class of SCFs known as positional scoring rules (PSR) which are parameterized by a score vector $\mathbf{s} = (s_1, \dots, s_m)$. PSRs assign a score to each alternative based on the number of times it is ranked in each position: $S_\mathbf{s}(a, \boldsymbol{\sigma})$ = $\sum_{\sigma \in \boldsymbol{\sigma}} \mathbf{s}_{\sigma(a)}$. We assume scores are always non-negative.
We consider both deterministic and randomized PSRs. In the deterministic setting winning alternatives are exactly those who receive the maximal score, $f_\mathbf{s}(\boldsymbol{\sigma}) = \{ a \in A  \mid S_\mathbf{s}(a, \boldsymbol{\sigma}) = \max_{a' \in A} S_\mathbf{s}(a', \boldsymbol{\sigma}) \}$.
In the randomized setting (denoted $f_\mathbf{s}^\text{R}$), a single winner is chosen with probability directly proportional to their score; 
$P[f_\mathbf{s}^\text{R}(\boldsymbol{\sigma}) = a] = S_\mathbf{s}(a, \boldsymbol{\sigma}) / (n \bar s)$, where $\bar s := \sum_{j=1}^m s_j$
Observe that, in the deterministic setting, scoring vectors can be scaled additively and multiplicatively without changing the outcome of the corresponding PSR, and can be scaled multiplicatively in the randomized setting. Any rule we refer to as ``unique'' is unique only up to these transformations.

We say that a PSR $f_\mathbf{s}$ is \textit{feasible} if $s_i \geq s_{i+1} \, \forall \, 1 \leq i < m$ and $s_1 > s_m$. For simplicity we typically denote feasible deterministic PSRs as $\mathbf{s} = (1, \dots, 0)$.
In the literature there are a number of classic PSRs that are considered canonical such as Plurality: $(1, 0, \dots, 0)$; Veto: $(1, 1,  \dots, 0)$; Borda: $(m-1, m-2,  \dots, 0)$; Harmonic \cite{boutilier2015optimal}: $\mathbf{s}^\text{h} = (1, \nicefrac{1}{2}, \nicefrac{1}{3}, \dots, \nicefrac{1}{m})$, and Binomial \cite{gonczarowski2023distortion}: $s^\text{bi}_j = \sum_{k=j}^m \binom{m}{k}$. All of these PSRs, visualized in \Cref{fig:canonical_vectors}, target normative ideas of a \textit{correct} score for each rank. e.g., Plurality models a winner-take-all mentality, while Borda linearly scales reward for candidates with their rank.

We consider only resolute rules here and can transform any non-resolute social choice function into a resolute SCF with a \textit{tie-breaking} method. Two tie-breaking methods are common: lexicographic, and random \cite{aziz2013ties,freeman2015general}. Lexicographic tie-breaking requires an ordering over alternatives and would violate rule neutrality so when ties occur we select a winner uniformly at random. 

\begin{figure}[tbp]
    \centering
    \includegraphics[width=0.75\linewidth]{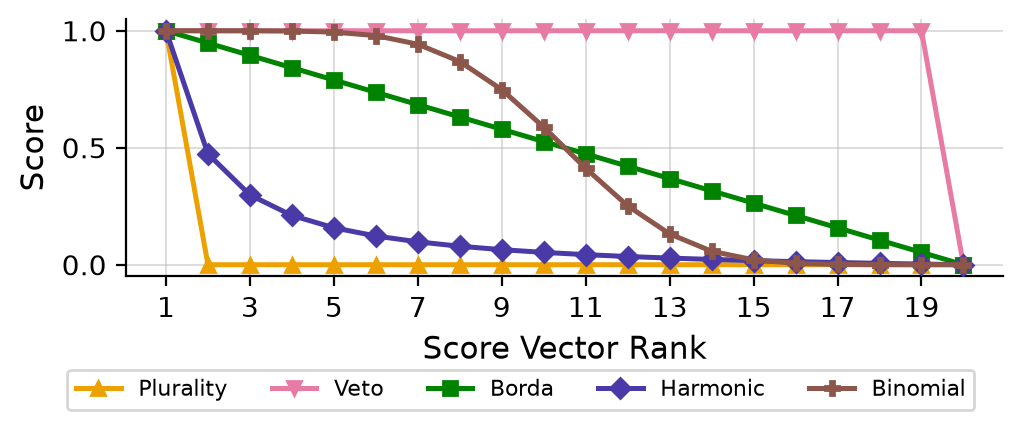}
    \caption{The canonical score vectors discussed in \Cref{sec:model}.
    }
    \label{fig:canonical_vectors}
\end{figure}

\subsection{Social Welfare}

A \textit{social welfare function} (SWF) is a function $W: \mathbb{R}^n \times A \rightarrow \mathbb{R}$ which uses the utility each voter would receive from the election of alternative $a$ to compute a single numeric value representing the quality of $a$ as a social decision. Different SWFs aim to capture different aspects of quality and may be more or less appropriate in varying settings \cite{Shoham:MultiagentSystems}.
We write $W(\{u_i(a)\ \forall\ i\ \in\ V\}, a)$ and assume that $W$ is computable in polynomial time. Due to the range of fairness concerns we focus on a general family of social welfare functions, the power mean, which generalizes existing welfare functions into a single parameterized model~\cite{pardeshi2024learning}, defined (slightly abusing notation) as,

\[
W^\text{PM}(\mathbf{u}, a, p) = 
\begin{cases*}
    \left(\frac{1}{n} \sum_{i=1}^n u_i(a)^p\right)^\frac{1}{p} & if  $p \neq 0$  \\
    (\prod_{i=1}^n u_i(a))^{\frac{1}{n}} & if  $p = 0$  \\
\end{cases*}
\]

Based on the parameter $p \in [-\infty, \infty]$, $W^\text{PM}$ can reduce to a number of special cases from the literature \cite{boutilier2015optimal,apesteguia2011justice}. We treat $p = \pm \infty$ as the $\min$ and $\max$ functions which are approached in the limit.

\begin{enumerate}
    \item ($p = 1$): $\frac{1}{n}\sum_{i \in V} u_i(a)$ measures the mean utility across all voters, i.e., \textit{utilitarian} social welfare. This can be maximized when a single voter receives a large amount of utility, or when all voters receive increased utility.
    \item ($p = 0$):  $\prod_{i \in V} u_i(a)^\frac{1}{n}$ measures the geometric mean of all voter's utility and is maximized when voter utilities are both high \textit{and} equal. Corresponds directly with \textit{Nash} welfare.
    \item ($p \rightarrow -\infty$): $\text{min}_{i \in V} (u_i(a))$ is based on only a single voter's utility, and is maximized when the voter receiving the lowest amount of utility receives as high utility as possible. This is referred to as \textit{Egalitarian} or Rawlsian social welfare.
    \item ($p \rightarrow \infty$):  $\max_{i \in V} u_i(a)$ is also based on a single voter's utility but is maximized when the voter receiving the most utility receives high utility. This function, referred to as \textit{Maximax}, largely represents \textit{unfairness} and is included primarily as a special case of the power mean function.
\end{enumerate}

\subsubsection{Social Welfare Optimization}

We now discuss our optimization framework. We focus on analyzing \textit{average-case} social welfare and study voting rules which maximize this asymptotically. We first introduce assumptions used in our theoretical results.

\textit{Assumptions.}
\label{sec:assumptions}
Let $U$ denote a voter's utility for a fixed alternative (by \Cref{a1} voter and alternative identity do not matter here). Let $U_{(1)} > U_{(2)} > \dots > U_{(m)}$ denote one voter's $m$ utilities in \emph{descending} order (strict by \Cref{a2}), so that $U_{(k)}$ is the utility of the alternative that voter ranks in position $k$. We use this non-standard notation for order statistics so that $U_{(i)}$ corresponds with the indexing of $\mathbf{s}_i$ in a score vector. For fixed $p \in \mathbb{R}$, $\psi_p: (0, \infty) \to \mathbb{R}$ is the Box-Cox transform,

\begin{equation}
\label{eq:psi}
\psi_p(t) = \begin{cases*}
    \frac{t^p - 1}{p} & if  $p \neq 0$  \\
    \log t & if  $p = 0$  \\
\end{cases*}
\end{equation}
\noindent
which \Cref{lem:hp_appendix} shows is continuous,  strictly increasing, and satisfies $\psi_p(W^\text{PM}(\mathbf{u}, a, p)) = \frac{1}{n}\sum_{i=1}^n \psi_p(u_i(a))$.
Many of our proofs rely on the assumption below which holds
for every $p$ whenever the support of $D$ lies in $[\varepsilon, M]$ for some $0 < \varepsilon < M < \infty$.
 
\begin{itemize}
    \asmitem{A3}{a3} For fixed $p$, $\mathbb{E}[\psi_p(U)^2] < \infty$ and $\mathbb{E}[U^q] < \infty$ for some $q > \max\{2, p\}$.
\end{itemize}

\begin{definition}[Expected Social Welfare]
\label{def:esw}
Fix $p \in \mathbb{R} \cup \{\pm \infty\}$ and let $\mathbf{u}$ be a random utility profile of $n$ voters, with induced preference profile $\boldsymbol{\sigma}$. The \emph{expected social welfare} of a (possibly randomized) resolute social choice function $f$ is
\[
\mathrm{ESW}_p(f) = \mathbb{E}\big[W^\text{PM}(\mathbf{u}, f(\boldsymbol{\sigma}), p)\big],
\]
where the expectation is over $\mathbf{u}$ and randomness of $f$ (tie-breaking or a randomized winner), which is independent of $\mathbf{u}$ given $\boldsymbol{\sigma}$.
\end{definition}

By \Cref{a1} every preference profile occurs with positive probability, so $f$ maximizes $\mathrm{ESW}_p$ over all social choice functions if, and only if, at every preference profile $\boldsymbol{\sigma}$ it selects an alternative in $\operatorname*{arg\,max}_{a \in A} \mathbb{E}[W^\text{PM}(\mathbf{u}, a, p) \mid \boldsymbol{\sigma}]$. This matches the average-case optimality definition of \citet{boutilier2015optimal}.

\begin{definition}[Average-Case Optimal]
\label{def:opt}
Let $\mathrm{OPT}_n = \max_g \mathrm{ESW}_p(g)$, where the maximum is over all (possibly randomized) SCFs $g$ on $n$ voters. We say that $f$ is \emph{asymptotically average-case optimal} if

\[
\lim_{n \to \infty} \frac{\mathrm{OPT}_n - \mathrm{ESW}_p(f)}{\mathrm{OPT}_n - \mathrm{ESW}_p(f_\mathbf{1})} = 0.
\]
\noindent
where $\mathbf{1} = (1, \dots, 1)$ and $f_\mathbf{1}$ corresponds to uniform random selection of a winning alternative.
That is, in the limit, the welfare gap between the optimal rule and $f$ shrinks more quickly than the gap between the optimal rule and welfare resulting from a uniform random selection. 
By \Cref{a1}, every alternative approaches the same expected welfare as $n$ grows, so the ESW of every rule converges to the same limit. The ratio magnifies the difference between rules as $n$ grows, so suboptimal rules stay positive in the limit.
\end{definition}


\section{Optimal Deterministic Rules}

This section examines deterministic PSRs -- the winning alternative(s) are exactly those which receive maximal points. We assume any ties among winners are broken uniformly at random. We first show that, for finite $p$, the voting rule that maximizes expected power mean social welfare belongs to the family of PSRs. This follows from \citet{young1975social} who identifies a basic set of axioms which characterize the family of scoring correspondences, and extends the proof of \citet{boutilier2015optimal} around the setting of utilitarian social welfare. Here, we identify the score vectors which maximize expected social welfare for \textit{any} parameterization of the power mean family. The optimal score vector is based on the order statistic of $D$.

\begin{theorem}\label{thm:optimal_is_psr}
Fix $p\in\mathbb{R}$ and assume \Cref{a3}. There is a score vector $\mathbf{s} \in \mathbb{R}^m$ such that the positional scoring rule $f_\mathbf{s}$ is asymptotically average-case optimal. Additionally, when $p = 1$, $f_\mathbf{s}$ maximizes $\mathrm{ESW}_1$ over all social choice functions for every $n$.
\end{theorem}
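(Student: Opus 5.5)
The plan is to take $s_k = \mathbb{E}[\psi_p(U_{(k)})]$, the expected Box-Cox transformed utility of the alternative ranked in position $k$. Define the \emph{transformed welfare} $T(a) := \frac{1}{n}\sum_i \psi_p(u_i(a)) = \psi_p(W^\text{PM}(\mathbf{u},a,p))$, using \Cref{lem:hp_appendix}. Under \Cref{a1}, voters are independent, and each voter's utility vector, conditioned on its own ranking $\sigma_i$, has the same distribution of order statistics by neutrality. Hence $\mathbb{E}[\psi_p(u_i(a)) \mid \boldsymbol{\sigma}] = \mathbb{E}[\psi_p(U_{(\sigma_i(a))})] = s_{\sigma_i(a)}$, and so $\mathbb{E}[T(a)\mid\boldsymbol{\sigma}] = S_\mathbf{s}(a,\boldsymbol{\sigma})/n$.

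\textbf{Case $p=1$.} Here $\psi_1(t) = t-1$ is affine and $W^\text{PM} = T+1$. Therefore $\mathbb{E}[W^\text{PM}(\mathbf{u},a,1)\mid\boldsymbol{\sigma}] = S_\mathbf{s}(a,\boldsymbol{\sigma})/n + 1$. By the pointwise characterization stated after \Cref{def:esw}, $f_\mathbf{s}$ selects an argmax at every profile, with any tie-break among maximizers. So $f_\mathbf{s}$ is exactly optimal for every $n$, recovering \citet{boutilier2015optimal}.

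\textbf{General $p$.} The difficulty is that $W^\text{PM} = \psi_p^{-1}(T)$ with $\psi_p^{-1}(y) = (1+py)^{1/p}$ nonlinear, so conditional expectations do not commute with $\psi_p^{-1}$. I would Taylor-expand around $\mu := \mathbb{E}[\psi_p(U)]$. Write $T(a) = \mu + \Delta_a$, where $\Delta_a = (\bar S_a - \mu) + \varepsilon_a$. Here $\bar S_a = S_\mathbf{s}(a,\boldsymbol{\sigma})/n$ is the conditional mean, and $\varepsilon_a$ is conditionally mean zero with conditional variance $O(1/n)$ by \Cref{a3}. Also $\bar S_a - \mu = O_P(n^{-1/2})$ by the CLT. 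Then
\[
\mathbb{E}[W^\text{PM}(a)\mid\boldsymbol{\sigma}] = \psi_p^{-1}(\mu) + (\psi_p^{-1})'(\mu)(\bar S_a-\mu) + R_a ,
\]
where the remainder $R_a$ is of order $\mathbb{E}[\Delta_a^2\mid\boldsymbol{\sigma}] = O(1/n)$. The optimal rule picks the argmax of the full conditional expectation, and $f_\mathbf{s}$ picks the argmax of the linear term. Both gaps can be compared on the scale $n^{-1/2}$.

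The denominator $\mathrm{OPT}_n - \mathrm{ESW}_p(f_\mathbf{1})$ is of order $\Theta(n^{-1/2})$. It is at least the first-order gain $(\psi_p^{-1})'(\mu)\,\mathbb{E}[\max_a \bar S_a - \mu]$, which is of order $n^{-1/2}$ because the scores are nondegenerate Gaussian in the limit (if $\mathbf{s}$ is constant, every rule ties and the claim is trivial), minus $O(1/n)$. The numerator is at most $2\,\mathbb{E}[\max_a |R_a - \bar R|]$. To show it is $o(n^{-1/2})$, I would bound the \emph{differences} $R_a - R_b$ rather than each $R_a$. The part of $R_a$ coming from the conditional variance of $\varepsilon_a$ is approximately the same constant for every $a$ up to $O_P(n^{-3/2})$ fluctuations. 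What remains is of order $O_P(1/n)$, which is $o(n^{-1/2})$. In fact, even the crude bound $R = O(1/n)$ suffices, since $1/n = o(n^{-1/2})$.

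\textbf{Main obstacle.} The hardest step is making the remainder bound rigorous in expectation rather than in probability. Since $\psi_p^{-1}$ may blow up (for example, $1+py \le 0$ when $p<0$) and utilities may be unbounded, I need uniform integrability of the remainder. I would first try to handle this with the moment condition $\mathbb{E}[U^q] < \infty$, $q > \max\{2,p\}$, from \Cref{a3}. The plan is to truncate on the event $|\Delta_a| \le \delta$, where the Taylor bound with bounded second derivative holds. On the complement, I would use Hölder together with a Chebyshev or Rosenthal bound giving probability $O(n^{-q/2})$, which is $o(n^{-1/2})$ because $q>2$.
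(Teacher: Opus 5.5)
Your argument is correct, and it reaches the theorem more directly than the paper's proof. The paper first forms the additive surrogate $G(\boldsymbol{\sigma},a)=\mathbb{E}[V_p(\mathbf{u},a)\mid\boldsymbol{\sigma}]$. It then checks Young's four axioms for the argmax correspondence of $G$ and concludes via \Cref{lem:young_appendix} that \emph{some} score vector realizes it. Only after that does it apply the linearization of \Cref{lem:expansion}, borrowing $\Delta_p(\mathbf{1})>0$ from \Cref{thm:optimal_deterministic_appendix} for the denominator.

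You skip Young entirely. Your observation that, under neutrality, a voter's ranking is independent of her sorted utilities is exactly \Cref{lemma:p0}. So the surrogate is literally $S_{\mathbf{s}^*(p)}(a,\boldsymbol{\sigma})$, and the optimal vector is named explicitly. This is shorter and constructive, and it shows the Young step is not really needed, since the paper's proof invokes \Cref{lemma:p0} and \Cref{thm:optimal_deterministic_appendix} anyway. The only thing the Young route adds is that PSR structure follows from additivity and symmetry without computing conditional laws. Even that comes free: neutrality already forces $\mathbb{E}[\psi_p(u_1(a))\mid\sigma_1=\sigma]$ to depend only on $\sigma(a)$. Your remainder control (Taylor on $|\Delta_a|\le\delta$, Chebyshev and H\"older off it) is the same as the paper's.

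Two details need tightening.

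\textbf{The tail bound.} A Rosenthal bound $\Pr(|\Delta_a|>\delta)=O(n^{-q/2})$ is not available in general. \Cref{a3} controls only the second moment of $\psi_p(U)$; for $p\le 0$ its lower tail comes from small $U$, which $\mathbb{E}[U^q]<\infty$ does not restrict. Instead, use Chebyshev to get $\Pr(B_n)=O(n^{-1})$. Then apply H\"older to $W^{\mathrm{PM}}$ itself on $B_n$, using $\mathbb{E}[W^{\mathrm{PM}}(\mathbf{u},a,p)^q]\le\mathbb{E}[W^{\mathrm{PM}}(\mathbf{u},a,q)^q]=\mathbb{E}[U^q]$. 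That inequality is monotonicity of power means in the exponent, and it is where $q>p$ is used. The result is $O(n^{-1+1/q})=o(n^{-1/2})$, precisely because $q>2$. So the remainder is $o(n^{-1/2})$ rather than $O(1/n)$, which still suffices both for the numerator and for the subtracted term in your denominator bound.

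\textbf{The denominator.} The bound $\liminf_n n^{1/2}\,\mathbb{E}[\max_a\bar S_a-\mu]>0$ needs convergence of expectations, not just convergence in distribution. Bounded second moments of $\mathbf{Z}^{(n)}$ give the required uniform integrability. You also need $\mathbb{E}[\max_a(\mathbf{Z}\mathbf{s}^*(p))_a]>0$. Finally, your constant-$\mathbf{s}$ aside is moot rather than trivial, since $s^*_1(p)>s^*_m(p)$ always.
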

 
\Cref{thm:optimal_is_psr} applies Young's characterization (\Cref{lem:young_appendix}) of scoring rules to the objective $\mathbb{E}[\sum_{i=1}^n \psi_p(u_i(a)) \mid \boldsymbol{\sigma}]$, which is additive across voters, and shows that this objective determines $\mathrm{ESW}_p$ up to an additive $o(n^{-\frac{1}{2}})$ (\Cref{lem:expansion}). All missing proofs are in \Cref{app:theorems}.

\subsection{The Shape of Optimality}

Our primary theoretical result asymptotically describes the structure of the average-case optimal scoring vector for neutral utility distributions across the family of power mean social welfare functions. Each result below states which of the assumptions from \Cref{sec:assumptions} it relies on, and the optimal score vector is expressed through the Box-Cox transform $\psi_p$ of \Cref{eq:psi}.

\begin{theorem}
\label{thm:optimal_deterministic}
Fix $p \in \mathbb{R}$, assume \Cref{a3}, and let $s_k^*(p) = \mathbb{E}[\psi_p(U_{(k)})]$ for $k = 1, \dots, m$. For every score vector $\mathbf{s} \in \mathbb{R}^m$ the limit
\[
\Delta_p(\mathbf{s}) = \lim_{n \to \infty} n^{\frac{1}{2}}\big(\mathrm{ESW}_p(f_{\mathbf{s}^*(p)}) - \mathrm{ESW}_p(f_\mathbf{s})\big)
\]
exists and is nonnegative, and $\Delta_p(\mathbf{s}) = 0$ if, and only if, $\mathbf{s} = \mathbf{s}^*(p)$. That is, $\mathbf{s}^*(p)$ is the unique asymptotically average-case optimal PSR.
\end{theorem}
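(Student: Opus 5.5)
Throughout, fix $p$ and work with the transformed per-voter quantity $\psi_p$. The plan is to reduce $\mathrm{ESW}_p$ to an additive objective via \Cref{lem:expansion} and then do an explicit $\sqrt{n}$-scale CLT computation on the score differences between alternatives.

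\textbf{Step 1: linearize the welfare.} Write $Z_a = \frac{1}{n}\sum_i \psi_p(u_i(a))$, so that $W^{\text{PM}}(\mathbf{u},a,p) = \psi_p^{-1}(Z_a)$. By the law of large numbers, $Z_a \to \mu := \mathbb{E}[\psi_p(U)]$. A first-order Taylor expansion of $\psi_p^{-1}$ around $\mu$ then gives
\[
\mathrm{ESW}_p(f) = \psi_p^{-1}(\mu) + (\psi_p^{-1})'(\mu)\,\mathbb{E}[Z_{f(\boldsymbol\sigma)} - \mu] + o(n^{-1/2}).
\]
This is essentially \Cref{lem:expansion}, and \Cref{a3} is what controls the remainder uniformly over $f$. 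Since $(\psi_p^{-1})'(\mu) > 0$, comparing rules at scale $n^{-1/2}$ reduces to comparing
\[
G_n(f) := \sqrt{n}\,\mathbb{E}[Z_{f(\boldsymbol\sigma)} - \mu].
\]

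\textbf{Step 2: condition on the profile.} Conditioning on $\boldsymbol\sigma$, we have $\mathbb{E}[Z_a \mid \boldsymbol\sigma] = \frac{1}{n}\sum_i s^*_{\sigma_i(a)}$, because by \Cref{a1} voters are independent and the utility at rank $k$ has law $U_{(k)}$ regardless of the labels. Hence
\[
G_n(f) = \mathbb{E}\Big[\tfrac{1}{\sqrt n}\,S_{\mathbf{s}^*}\big(f(\boldsymbol\sigma),\boldsymbol\sigma\big)\Big] - \sqrt{n}\,\mu .
\]
For any score vector $\mathbf{s}$, let $X^{\mathbf{s}}_a = n^{-1/2}\big(S_{\mathbf{s}}(a,\boldsymbol\sigma) - n\bar s/m\big)$. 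Each voter's rank vector is an exchangeable random permutation, so the multivariate CLT gives joint convergence of $(X^{\mathbf{s}}_a, X^{\mathbf{s}^*}_a)_{a\in A}$ to a centered Gaussian vector $(Y_a, Y^*_a)_a$. Its covariance is computable from the permutation structure and depends only on the centered vectors $\mathbf{s} - \bar s/m$ and $\mathbf{s}^* - \bar s^*/m$. The rule $f_{\mathbf{s}}$ picks $\arg\max_a X^{\mathbf{s}}_a$; ties have vanishing probability unless $\mathbf{s}$ is constant, and that case is handled separately. Using uniform integrability from the second-moment bound in \Cref{a3}, we obtain
\[
\lim_n G_n(f_{\mathbf{s}}) = \mathbb{E}\big[Y^*_{\arg\max_a Y_a}\big].
\]
Therefore $\Delta_p(\mathbf{s})$ exists and equals $(\psi_p^{-1})'(\mu)\big(\mathbb{E}[\max_a Y^*_a] - \mathbb{E}[Y^*_{\arg\max_a Y_a}]\big)$, which is clearly nonnegative.

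\textbf{Step 3: uniqueness (the main obstacle).} I must show that $\mathbb{E}[Y^*_{\arg\max Y}] = \mathbb{E}[\max Y^*]$ forces $\mathbf{s}$ to equal $\mathbf{s}^*$ up to positive affine transformation. Equality forces $\arg\max Y = \arg\max Y^*$ almost surely.

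For the covariance, write $c = \mathbf{s}-\bar s/m$ and $c^*$ analogously. One voter's contributions have covariance $\mathrm{Cov}(X_a, X^*_a) = \frac{1}{m}\langle c, c^*\rangle$ and $\mathrm{Cov}(X_a, X^*_b) = -\frac{1}{m(m-1)}\langle c, c^*\rangle$ for $a \neq b$. So the joint law is $(Y, Y^*) = (\|c\|\,G, \|c^*\|\,G')$, where $G$ and $G'$ are standardized copies of a "projection onto the sum-zero subspace" Gaussian with correlation $\rho = \langle c, c^*\rangle/(\|c\|\|c^*\|)$. If $|\rho| < 1$, I can decompose $G' = \rho G + \sqrt{1-\rho^2}\,H$ with $H$ independent of $G$. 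Then with positive probability $H$ flips the argmax, because the sum-zero Gaussian has full support on that subspace, and this gives a strict loss.

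$\rho = 1$ means $c$ is a positive multiple of $c^*$, i.e.\ $\mathbf{s}$ is a positive affine image of $\mathbf{s}^*$, which is the identity under the paper's convention. The degenerate case $c = 0$ (uniform random choice) yields $0 < \mathbb{E}[\max Y^*]$. Here I must note that $\mathbf{s}^*$ is nonconstant, since $U_{(1)} > U_{(m)}$ and $\psi_p$ is strictly increasing.

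The hardest technical point is justifying the interchange of limit and expectation in Step 2 together with the $o(n^{-1/2})$ remainder. I would first attempt this by truncation plus the $q > 2$ moment in \Cref{a3}, which gives a uniform integrability bound on $\max_a |X_a|$.
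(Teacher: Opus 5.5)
Your proposal is correct and has the same skeleton as the paper's proof. You use \Cref{lem:expansion} to linearize, the conditional identity of \Cref{lemma:p0} to turn $\mathbb{E}[Z_a \mid \boldsymbol{\sigma}]$ into the $\mathbf{s}^*(p)$-score, a CLT at scale $n^{\frac12}$, and continuous mapping plus uniform integrability. Your limit $\kappa_p\big(\mathbb{E}[\max_a Y^*_a] - \mathbb{E}[Y^*_{\arg\max Y}]\big)$ is exactly the paper's $\kappa_p\,\mathbb{E}[H_\mathbf{s}(\mathbf{Z})]$ with $Y = \mathbf{Z}\mathbf{s}$ and $Y^* = \mathbf{Z}\mathbf{s}^*(p)$.

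The real difference is the uniqueness step. The paper keeps the whole count matrix and uses Birkhoff--von Neumann to get a positive density on $L$. It then exhibits a rank-two witness $\mathbf{Z}_0$ with rows $\pm\mathbf{d}$, where $\mathbf{s}\cdot\mathbf{d} < 0 < \mathbf{s}^*(p)\cdot\mathbf{d}$, near which the two argmaxes disagree; no covariance is ever computed. You instead compute the covariance: auto-covariances $\|c\|^2\Sigma$ and $\|c^*\|^2\Sigma$, cross-covariance $\langle c,c^*\rangle\Sigma$, with $\Sigma = \frac{1}{m-1}(I - \frac1m\mathbf{1}\mathbf{1}^\top)$. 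You then decompose $G' = \rho G + \sqrt{1-\rho^2}\,H$. This checks out, and it gives something the paper does not state: $\Delta_p(\mathbf{s})/(\kappa_p\|c^*\|)$ depends on $\mathbf{s}$ only through the cosine $\rho$ between the centred vectors. Full support on the sum-zero subspace also comes directly from $\Sigma$, with no appeal to Birkhoff--von Neumann. The paper's route, in exchange, is purely qualitative: it needs only a positive density and a separating direction.

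You miss one case: $\rho = -1$, i.e.\ $\mathbf{s}$ is a negative affine image of $\mathbf{s}^*(p)$. This is allowed, since $\mathbf{s}\in\mathbb{R}^m$ is arbitrary. There $\sqrt{1-\rho^2}=0$ and your flipping argument says nothing. One line closes it: $G' = -G$, so $\arg\max Y = \arg\min Y^*$, which almost surely differs from $\arg\max Y^*$, giving a strict loss.

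The limit--expectation interchange you flag as the hardest point is easier than you suggest. The per-voter increments of $X^{\mathbf{s}^*}$ are bounded, because $\mathbf{s}^*(p)$ is a fixed vector (finite by \Cref{a3}). Hence $\mathbb{E}[\max_a |X^{\mathbf{s}^*}_a|^2] \le \|c^*\|^2$ for every $n$, which gives uniform integrability directly. The $q>2$ moment in \Cref{a3} is needed only for the remainder inside \Cref{lem:expansion}, which you already cite.
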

 
Together, \Cref{thm:optimal_is_psr,thm:optimal_deterministic} show that $\mathrm{ESW}_p$ is asymptotically maximized by a positional scoring rule, and that this rule is $\mathbf{s}^*(p)$.

\begin{remark}[Asymptotic Results and Limit on p]
\label{rem:exact_vs_asymptotic}
Prior work finds an average case optimal score vector when $p=1$ for \textit{any} number of voters by splitting expected welfare into separate terms for each voter~\cite{boutilier2015optimal}. Here, the power $\frac1p$ in $W^\text{PM}$ prevents such a decomposition except when $p=1$ (Utilitarian; already known) and $p=0$ (Nash; after taking logarithms which gives an exact score vector dependent on $n$). We rely on the fact that, as $n$ grows, each voter's marginal contribution to $W^\text{PM}$ shrinks and $W^\text{PM}$ becomes near linear in per-voter terms $\psi_p(u_i(a))$ with negligible, bounded error (\Cref{lem:expansion}). As $\min$ and $\max$ functions are determined by a single element they admit no decomposition into per-voter terms, so ~\Cref{thm:optimal_is_psr,thm:optimal_deterministic} do not extend to $p=\pm\infty$. We cover these cases in \Cref{thm:egal_sw_direct,thm:maximax_sw_direct}.
\end{remark}

\subsection{Optimizing for Traditional SW Functions}
\label{sec:subsec_trad_funcs}

The result of \Cref{thm:optimal_deterministic} is quite abstract. We now discuss the implications of the theorem for each of the named social welfare functions discussed earlier. The first result, for utilitarian social welfare, is exactly that found in prior work from \citet{boutilier2015optimal}. In each case we continue to assume \Cref{a3} holds.

\begin{cor}[Utilitarian; $p=1$]
\label{cor:util}
For every $n$, the positional scoring rule with vector $\mathbf{s}^*(1) = (\mathbb{E}[U_{(1)}], \mathbb{E}[U_{(2)}], \dots, \mathbb{E}[U_{(m)}])$ maximizes $\mathrm{ESW}_1$ over all social choice functions.
\end{cor}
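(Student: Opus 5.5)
The plan is to work directly with the conditional-expectation characterization of optimality given after \Cref{def:esw}. Since every preference profile occurs with positive probability under \Cref{a1}, a (possibly randomized) SCF maximizes $\mathrm{ESW}_1$ if and only if, at every profile $\boldsymbol{\sigma}$, it selects only alternatives in $\operatorname*{arg\,max}_{a} \mathbb{E}[W^\text{PM}(\mathbf{u}, a, 1) \mid \boldsymbol{\sigma}]$. So it suffices to show that, for every $\boldsymbol{\sigma}$, this argmax coincides with the set of maximizers of $S_{\mathbf{s}^*(1)}(\cdot, \boldsymbol{\sigma})$. Uniform tie-breaking within that set then still selects an optimal alternative with probability one.

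First, for $p=1$ we have $W^\text{PM}(\mathbf{u}, a, 1) = \frac1n \sum_i u_i(a)$. Linearity of conditional expectation gives
\[
\mathbb{E}[W^\text{PM}(\mathbf{u},a,1)\mid \boldsymbol{\sigma}] = \tfrac1n \textstyle\sum_{i=1}^n \mathbb{E}[u_i(a) \mid \boldsymbol{\sigma}].
\]
Next, by independence across voters in \Cref{a1}, the vector $u_i$ is independent of $(\sigma_j)_{j\neq i}$. Hence $\mathbb{E}[u_i(a)\mid\boldsymbol{\sigma}] = \mathbb{E}[u_i(a)\mid \sigma_i]$.

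The key step is to show $\mathbb{E}[u_i(a)\mid \sigma_i] = \mathbb{E}[U_{(\sigma_i(a))}]$, that is, that it depends only on the rank of $a$. I will use the neutrality of $D$. For any permutation $\pi$ of $A$, relabelling maps the event $\{\sigma_i = \sigma\}$ to $\{\sigma_i = \sigma\circ\pi^{-1}\}$ while preserving the law of $u_i$. Therefore the conditional law of the sorted vector $(U_{(1)},\dots,U_{(m)})$ given $\sigma_i=\sigma$ is the same for all $\sigma$, and each ranking has probability $1/m!$. The conditional law therefore equals the unconditional law of the order statistics. On $\{\sigma_i=\sigma\}$ we have $u_i(a) = U_{(\sigma(a))}$ by \Cref{a2}, which gives the claim. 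Integrability of $U_{(k)}$ follows from \Cref{a3}: with $q>2\ge 1$, $\mathbb{E}[U]<\infty$ and $U_{(k)}\le \sum_a u_i(a)$.

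Combining these steps, $\mathbb{E}[W^\text{PM}(\mathbf{u},a,1)\mid\boldsymbol{\sigma}] = \frac1n \sum_i s^*_{\sigma_i(a)}(1) = \frac1n S_{\mathbf{s}^*(1)}(a,\boldsymbol{\sigma})$. This is exactly the PSR score scaled by $1/n$, so the two argmax sets coincide for every $n$. The result also follows from \Cref{thm:optimal_is_psr} for $p=1$ together with the identification of $\mathbf{s}^*(1)$ with $\psi_1(t)=t-1$, since the additive shift is irrelevant, but the direct argument is self-contained. The only delicate point is the neutrality step: it needs the symmetry to be applied to the joint law of ranking and sorted utilities, not just the marginals. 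I expect that to be the main thing to write carefully.
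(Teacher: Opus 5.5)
Your proposal is correct and follows essentially the paper's route. Your neutrality argument is the paper's \Cref{lemma:p0} at $p=1$: conditional on $\boldsymbol{\sigma}$, $u_i(a)$ is distributed as $U_{(\sigma_i(a))}$, so the conditional objective equals the $\mathbf{s}^*(1)$-score. Your use of the linearity of $W^\text{PM}$ at $p=1$ is the paper's observation that $\psi_1^{-1}$ is affine, so the remainder in \Cref{lem:expansion} vanishes and optimality holds exactly for every $n$.
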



\begin{cor}[Nash; $p=0$]
\label{cor:nash}
The positional scoring rule with score vector $\mathbf{s}^*(0) = (\mathbb{E}[\log U_{(1)}], \mathbb{E}[\log U_{(2)}], \dots, \mathbb{E}[\log U_{(m)}])$ is asymptotically average-case optimal for $\mathrm{ESW}_0$ and is the only positional scoring rule with this property.
\end{cor}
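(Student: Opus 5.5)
This follows by specializing \Cref{thm:optimal_deterministic} to $p=0$. The plan has three short steps.

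First, I check that the Box-Cox transform at $p=0$ is exactly the logarithm. By \Cref{eq:psi}, $\psi_0(t)=\log t$, so the vector $s_k^*(0)=\mathbb{E}[\psi_0(U_{(k)})]=\mathbb{E}[\log U_{(k)}]$ coincides with the vector stated in the corollary. \Cref{a3} at $p=0$ requires $\mathbb{E}[(\log U)^2]<\infty$ and $\mathbb{E}[U^q]<\infty$ for some $q>2$. This is assumed throughout \Cref{sec:subsec_trad_funcs}, so every $\mathbb{E}[\log U_{(k)}]$ is finite. Indeed, $|\log U_{(k)}|\le\sum_{a}|\log u(a)|$, and each term is integrable by Cauchy--Schwarz. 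Hence $\mathbf{s}^*(0)$ is a well-defined vector in $\mathbb{R}^m$.

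Second, I derive optimality. \Cref{thm:optimal_deterministic} with $p=0$ gives $\Delta_0(\mathbf{s}^*(0))=0$. By \Cref{thm:optimal_is_psr}, some PSR $f_{\mathbf{t}}$ is asymptotically average-case optimal. I then write
\[
\mathrm{OPT}_n-\mathrm{ESW}_0(f_{\mathbf{s}^*(0)})=\big(\mathrm{OPT}_n-\mathrm{ESW}_0(f_{\mathbf{t}})\big)+\big(\mathrm{ESW}_0(f_{\mathbf{t}})-\mathrm{ESW}_0(f_{\mathbf{s}^*(0)})\big).
\]
The second bracket is at most $o(n^{-1/2})$, because $\Delta_0(\mathbf{t})\ge 0$. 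For the first bracket, relative to the denominator $\mathrm{OPT}_n-\mathrm{ESW}_0(f_\mathbf{1})$, I use that $\mathrm{ESW}_0(f_\mathbf{1})$ is the baseline the optimal rule beats. Then I need a lower bound of order $n^{-1/2}$ on the denominator. I would take it from the same analysis behind \Cref{thm:optimal_deterministic}, namely $\Delta_0(\mathbf{1})>0$ since $\mathbf{1}\neq\mathbf{s}^*(0)$ after normalization; $\mathbf{s}^*(0)$ is non-constant because $U_{(1)}>U_{(m)}$ almost surely. Dividing, the ratio in \Cref{def:opt} tends to $0$, so $f_{\mathbf{s}^*(0)}$ is asymptotically average-case optimal.

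Third, I prove uniqueness, up to the affine rescalings under which the paper already identifies score vectors. Suppose $f_\mathbf{s}$ is asymptotically optimal and $\mathbf{s}\neq\mathbf{s}^*(0)$. By \Cref{thm:optimal_deterministic}, $\Delta_0(\mathbf{s})>0$, so $\mathrm{ESW}_0(f_{\mathbf{s}^*(0)})-\mathrm{ESW}_0(f_\mathbf{s})\ge c\,n^{-1/2}$ eventually, for some $c>0$. The denominator is $O(n^{-1/2})$. This follows because, by \Cref{lem:expansion}, $\mathrm{OPT}_n-\mathrm{ESW}_0(f_{\mathbf{s}^*(0)})=o(n^{-1/2})$, and $\Delta_0(\mathbf{1})$ is finite. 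So the ratio stays bounded away from $0$, contradicting asymptotic optimality.

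The main obstacle is the bookkeeping between the $n^{1/2}$ scale of \Cref{thm:optimal_deterministic} and the ratio in \Cref{def:opt}. In particular, I need that $\mathrm{OPT}_n-\mathrm{ESW}_0(f_{\mathbf{s}^*(0)})=o(n^{-1/2})$, which comes from \Cref{lem:expansion} together with \Cref{thm:optimal_is_psr}. I also need that $\mathrm{OPT}_n-\mathrm{ESW}_0(f_\mathbf{1})$ is exactly of order $n^{-1/2}$; I would get this from $\Delta_0(\mathbf{1})\in(0,\infty)$. Once these are in hand, the corollary is a direct specialization.
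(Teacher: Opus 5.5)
Your proposal is correct and follows the paper's route: the corollary is just \Cref{thm:optimal_deterministic} at $p=0$ with $\psi_0=\log$. That theorem (step~5 of its appendix proof) already converts $\Delta_0$ into the ratio in \Cref{def:opt}, so your Steps 2--3 repeat bookkeeping the paper has already done.

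One small correction: the bound $\mathrm{OPT}_n-\mathrm{ESW}_0(f_{\mathbf{s}^*(0)})=o(n^{-1/2})$ does not follow from \Cref{lem:expansion} alone. It needs \Cref{lemma:p0} as well, which shows that $f_{\mathbf{s}^*(0)}$ maximizes $\mathbb{E}[V_0(\mathbf{u},f(\boldsymbol{\sigma}))]$ over all rules. Once you use that bound directly, the detour through \Cref{thm:optimal_is_psr} in Step 2 is also unnecessary.
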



Both \Cref{cor:util,cor:nash} follow from a direct application of \Cref{thm:optimal_deterministic}. We can also write a version of the Nash-optimal rule which is exactly optimal for a fixed, finite number of voters.

\begin{cor}[Exact Nash; $p=0$]
\label{cor:nash_exact}
For each $n$, the positional scoring rule with score vector
\[
s^{(n)}_k = n \log \mathbb{E}\big[U_{(k)}^{1/n}\big], \qquad k = 1, \dots, m,
\]
maximizes $\mathrm{ESW}_0$ over all social choice functions on $n$ voters.
Additionally, as $n \to \infty$,  $s^{(n)}_k$ decreases to $s^*_k(0) = \mathbb{E}[\log U_{(k)}]$.
\end{cor}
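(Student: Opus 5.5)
The plan is to exploit the exact multiplicative structure of Nash welfare. Since $W^\text{PM}(\mathbf{u},a,0)=\prod_{i}u_i(a)^{1/n}$, the conditional expectation given the preference profile factorizes across voters. Under \Cref{a1}, voters draw their utility vectors independently, and $\sigma_i$ is a function of $\mathbf{u}_i$ alone. Hence, conditional on $\boldsymbol{\sigma}$, the vectors $\mathbf{u}_1,\dots,\mathbf{u}_n$ remain independent, and $\mathbf{u}_i$ has the law of a single voter's utilities conditioned on its own ranking $\sigma_i$. Therefore
\[
\mathbb{E}\big[W^\text{PM}(\mathbf{u},a,0)\mid\boldsymbol{\sigma}\big]=\prod_{i=1}^n \mathbb{E}\big[u_i(a)^{1/n}\mid \sigma_i\big].
\]

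Next I would use neutrality. Under \Cref{a1}, $D$ is invariant under relabelling, so the conditional law of $u_i(a)$ given $\sigma_i$ depends only on the rank $k=\sigma_i(a)$. It equals the law of $U_{(k)}$ conditioned on the ranking, and by exchangeability this coincides with the unconditional law of $U_{(k)}$: every ranking is equally likely, and the sorted vector has the same law on each ranking event. So the $i$-th factor is $\mathbb{E}[U_{(\sigma_i(a))}^{1/n}]$. This expectation is finite, since $0<U_{(k)}^{1/n}\le 1+U_{(k)}$ and \Cref{a3} gives finite moments. Taking logarithms, which are monotone, the profile-wise objective becomes
\[
n\log \mathbb{E}\big[W^\text{PM}(\mathbf{u},a,0)\mid\boldsymbol{\sigma}\big]=\sum_{i=1}^n n\log\mathbb{E}\big[U_{(\sigma_i(a))}^{1/n}\big]=S_{\mathbf{s}^{(n)}}(a,\boldsymbol{\sigma}).
\]
By the characterization following \Cref{def:esw}, an SCF maximizes $\mathrm{ESW}_0$ if and only if it selects a conditional argmax at every profile. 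The PSR $f_{\mathbf{s}^{(n)}}$, with uniform tie-breaking among score maximizers, does exactly this, which proves the first claim.

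For the limit, write $g(t)=\mathbb{E}[U_{(k)}^{t}]$ for $t\in(0,1]$ and $s^{(n)}_k=\log g(1/n)/(1/n)$. Then $h(t)=\log g(t)$ is convex by H\"older (log-convexity of moment functions), and $h(0)=0$. Hence $h(t)/t$ is the slope of a chord from the origin, which is nondecreasing in $t$. As $n$ grows, $t=1/n$ decreases, so $s^{(n)}_k$ is nonincreasing in $n$.

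The identification of the limit is the main technical point. We need $\lim_{t\downarrow0}h(t)/t=h'(0^+)=\mathbb{E}[\log U_{(k)}]$. I would use $(x^t-1)/t\downarrow\log x$ as $t\downarrow0$, which follows from the convexity of $t\mapsto x^t$. Monotone convergence applies because the $t=1$ term $U_{(k)}-1$ is integrable, and the limit $\mathbb{E}[\log U_{(k)}]$ is finite by \Cref{a3}, since $\psi_0=\log$ is square integrable for $U$ and the order statistics are dominated by sums over the $m$ coordinates. This yields $(g(t)-1)/t\to\mathbb{E}[\log U_{(k)}]$, so $g(t)\to1$. Then $\log g(t)/t=\frac{\log g(t)}{g(t)-1}\cdot\frac{g(t)-1}{t}\to\mathbb{E}[\log U_{(k)}]$. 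Combined with monotonicity, this gives the decrease to $s^*_k(0)$, consistent with \Cref{cor:nash}.
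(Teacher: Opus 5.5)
Your proof of the optimality claim is the paper's argument. Conditional on $\boldsymbol{\sigma}$, the per-voter factors $u_i(a)^{1/n}$ are independent with $u_i(a)$ distributed as $U_{(\sigma_i(a))}$ (the content of \Cref{lemma:p0}), so the expectation of the product factorizes. Taking logarithms then turns $n\log\mathbb{E}[W^{\text{PM}}(\mathbf{u},a,0)\mid\boldsymbol{\sigma}]$ into $S_{\mathbf{s}^{(n)}}(a,\boldsymbol{\sigma})$.

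You go beyond the paper on the second claim. The paper asserts that $s^{(n)}_k$ decreases to $\mathbb{E}[\log U_{(k)}]$ but gives no argument for it. Your route fills that gap correctly:
\begin{itemize}
\item H\"older gives log-convexity of $t\mapsto\mathbb{E}[U_{(k)}^t]$ on $[0,1]$, so $h(t)/t$ is a chord slope from $h(0)=0$ and is monotone in $t$.
\item Monotone (or dominated) convergence applies to $(x^t-1)/t\downarrow\log x$, using the integrable bounds $\log x\le (x^t-1)/t\le x-1$ and finiteness of $\mathbb{E}[\log U_{(k)}]$ from \Cref{a3}.
\end{itemize}

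Two small points. First, the factorization $\frac{\log g(t)}{g(t)-1}\cdot\frac{g(t)-1}{t}$ is undefined whenever $g(t)=1$. You can avoid this by sandwiching with $\frac{y-1}{y}\le\log y\le y-1$, which gives $\frac{g(t)-1}{t\,g(t)}\le\frac{\log g(t)}{t}\le\frac{g(t)-1}{t}$; both bounds tend to $\mathbb{E}[\log U_{(k)}]$ since $g(t)\to1$. Second, ``decreases'' can only mean nonincreasing in general. Strict decrease requires strict H\"older, that is, $U_{(k)}$ non-degenerate, which \Cref{a1} and \Cref{a2} do not guarantee.
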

 
\begin{proof}
Nash welfare is a product of one component per voter. Conditional on $\boldsymbol{\sigma}$, these components are independent and $u_i(a)$ is distributed as $U_{(\sigma_i(a))}$ (\Cref{lemma:p0}), so the expectation of the product is the product of expectations. Taking logs,

\[
\log \mathbb{E}\big[W^\text{PM}(\mathbf{u}, a, 0) \mid \boldsymbol{\sigma}\big]
= \sum_{i=1}^n \log \mathbb{E}\big[U_{(\sigma_i(a))}^{1/n}\big]
= \frac{1}{n} S_{\mathbf{s}^{(n)}}(a, \boldsymbol{\sigma}).
\]
\noindent
Thus the alternative with the highest expected Nash welfare is exactly the alternative with the highest score under $\mathbf{s}^{(n)}$, and $f_{\mathbf{s}^{(n)}}$ is optimal at every profile.
\end{proof} 

Like utilitarian welfare, Nash welfare also admits an optimal score vector for any $n$. Unlike the utilitarian vector the Nash-optimal vector depends on $n$.
The results for the case of $p \to \pm \infty$ are more involved.
For \Cref{cor:maximax,cor:egal} we require \Cref{a1plus,a2}; that is, each voter samples utilities i.i.d. from $D$ with a continuous CDF $F$.

\begin{cor}[Maximax; $p \to +\infty$]
\label{cor:maximax}
As $p \to +\infty$, the score vector $\mathbf{s}^*(p)$ of the unique asymptotically average-case optimal positional scoring rule, normalized so that $s^*_1(p) = 1$ and $s^*_m(p) = 0$, converges to the plurality vector $\mathbf{s}^\text{p} = (1, 0, \dots, 0)$.
\end{cor}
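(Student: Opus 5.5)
The approach is to compute the normalized vector explicitly and read off the limit. Because deterministic PSRs are invariant under positive affine maps, the additive constant $-1/p$ and the factor $1/p$ in $\psi_p$ can be discarded. The normalized vector is then
\[
\tilde s_k(p) = \frac{\mathbb{E}[U_{(k)}^p] - \mathbb{E}[U_{(m)}^p]}{\mathbb{E}[U_{(1)}^p] - \mathbb{E}[U_{(m)}^p]}, \qquad k=1,\dots,m,
\]
which automatically gives $\tilde s_1 = 1$ and $\tilde s_m = 0$. It therefore suffices to show that $\tilde s_k(p)\to 0$ for each $2\le k\le m-1$, which holds if
\[
\frac{\mathbb{E}[U_{(k)}^p]}{\mathbb{E}[U_{(1)}^p]}\to 0 \quad\text{as } p\to\infty \qquad (k\ge 2).
\]
Given this, both the numerator and the $\mathbb{E}[U_{(m)}^p]$ term in the denominator become negligible relative to $\mathbb{E}[U_{(1)}^p]$.

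To prove the ratio limit, use \Cref{a1plus,a2}: each voter's $m$ utilities are i.i.d.\ with continuous CDF $F$. Write $U_{(1)} = \max_j X_j$ and $U_{(k)} \le U_{(2)}$ for $k\ge 2$, so it is enough to handle $k=2$. Let $b = \operatorname{ess\,sup} U\in(0,\infty]$. The plan is to fix $\delta\in(0,1)$ and split on the event $\{U_{(2)} \le \delta U_{(1)}\}$. On that event, $U_{(2)}^p \le \delta^p U_{(1)}^p$, which contributes at most $\delta^p\,\mathbb{E}[U_{(1)}^p]$. On the complement, $U_{(2)}^p \le U_{(1)}^p$.

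It remains to show the complement carries a vanishing share of $\mathbb{E}[U_{(1)}^p]$. Equivalently, under the tilted measure with density proportional to $U_{(1)}^p$, the probability that the runner-up is within a factor $\delta$ of the maximum must tend to $0$. I would argue through the joint density of the top two order statistics, $m(m-1)f(x)f(y)F(y)^{m-2}$ for $y<x$. Conditional on $U_{(1)}=x$, the second value is distributed like the maximum of $m-1$ i.i.d.\ draws truncated at $x$. As $p\to\infty$ the tilt pushes $x$ toward $b$.

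This step, controlling the tilted mass of $\{U_{(2)}>\delta U_{(1)}\}$, is the main obstacle. For a distribution with an atomless top, the conditional probability that the runner-up lies in $(\delta x, x)$ tends to the probability that the maximum of $m-1$ draws lies in $(\delta b, b)$, which need not be small for a fixed $\delta$. The fix is a comparison of the $p$-th moments of the maximum and runner-up, e.g.\ via $\mathbb{E}[U_{(1)}^p]/\mathbb{E}[U_{(2)}^p]$ growing like the ratio of tail integrals $\int x^p\,d(F^m)$ versus $\int x^p\,d(\text{CDF of }U_{(2)})$. Near $b$, the CDF of $U_{(2)}$ behaves like $\binom{m}{2}(1-F)^2$ in the upper tail, compared with $m(1-F)$ for the maximum. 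Since the tilt concentrates on the upper tail where $1-F\to 0$, the ratio of masses is of order $1-F(x_p)\to 0$, where $x_p$ is the effective tilted location.

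If $b<\infty$, I would make this rigorous with a Laplace-type argument. For any $\eta>0$, restrict to $x\ge b-\eta$, bound $(1-F)^2 \le (1-F(b-\eta))(1-F)$, and show the region $x<b-\eta$ is exponentially negligible in $p$. If $b=\infty$, the same tail-ratio argument applies with $x\to\infty$, using \Cref{a3}'s moment bounds only to ensure finiteness. Finally, letting $\eta\to 0$ gives the ratio limit, and hence convergence to the plurality vector.
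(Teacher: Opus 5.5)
Your proof is correct, and it uses the same mechanism as the paper. After the identical normalization and reduction to $\mathbb{E}[U_{(k)}^p]/\mathbb{E}[U_{(1)}^p]\to 0$, the $x^p$ tilt concentrates mass near $b=\sup\operatorname{supp}(F)$, where $1-F$ is small, and the lower order statistics carry an extra factor of $1-F$ relative to the maximum. The execution is genuinely different, though. The paper uses the exact order-statistic densities and factors out the single-draw tilt $dQ_p\propto x^p\,dF$. This gives the identity $r_k(p)=\binom{m-1}{k-1}\mathbb{E}_{Q_p}[F^{m-k}(1-F)^{k-1}]/\mathbb{E}_{Q_p}[F^{m-1}]$, which handles every $k$ and both bounded and unbounded support at once; the concentration of $Q_p$ at the top is asserted rather than proved. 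You instead reduce to $k=2$ by monotonicity and work only with inequalities. With $\mathbb{E}[X^p]=\int_0^\infty p x^{p-1}\Pr(X>x)\,dx$, the union bound $\Pr(U_{(2)}>x)\le\binom{m}{2}(1-F(x))^2$ and $\Pr(U_{(1)}>x)\ge 1-F(x)$ give, for any cutoff $c<c'<b$,
\[
\frac{\mathbb{E}[U_{(2)}^p]}{\mathbb{E}[U_{(1)}^p]}\le \binom{m}{2}\bigl(1-F(c)\bigr)+\frac{c^p}{(c')^p\,\Pr(U>c')},
\]
so letting $p\to\infty$ and then $c\uparrow b$ finishes. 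This route avoids density formulas and makes the concentration step explicit. It also shows that your $b<\infty$ versus $b=\infty$ case split is really just a choice of cutoff. In a write-up, state the layer-cake identity explicitly, since your middle paragraph (the ``effective tilted location $x_p$'') is only heuristic. Finally, the $\delta$-split you abandoned is not actually stuck. For $b<\infty$ it yields $\limsup_{p\to\infty} r_2(p)\le 1-F(\delta b)^{m-1}$ for each fixed $\delta<1$, and sending $\delta\uparrow 1$ afterwards gives $0$ by continuity of $F$. For $b=\infty$ a fixed $\delta$ already suffices.
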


\begin{proof}
Since $s_k^*(p)=(\mathbb{E}[U_{(k)}^p]-1)/p$, the normalized scores are $(r_k(p) - r_m(p))/(1 - r_m(p))$ where
$r_k(p):=\mathbb{E}[U_{(k)}^p]/\mathbb{E}[U_{(1)}^p]$, so it suffices to show $r_k(p)\to 0$ for $k>1$.
Using the standard order-statistic density of $U_{(k)}$ under our descending convention,

\begin{align*}
  \mathbb{E}[U_{(k)}^p] &= \frac{m!}{(k-1)!\,(m-k)!} \int_0^\infty x^p\, F(x)^{m-k}(1-F(x))^{k-1}\, dF(x), \\
  \mathbb{E}[U_{(1)}^p] &= m\int_0^\infty x^p\, F(x)^{m-1}\, dF(x).
\end{align*}

Both integrals share the factor $x^p\, dF(x)$.
Define the probability measure $dQ_p(x) = x^p\, dF(x)\,/\,\mathbb{E}[U^p]$,
which is well-defined since $\mathbb{E}[U^p]<\infty$ and sample $X \sim Q_p$.
Factoring $\mathbb{E}[U^p]$ from both integrals and cancelling, the ratio becomes
\[
  r_k(p)
  \;=\;
  \tbinom{m-1}{k-1}
  \frac{
    \mathbb{E}_{Q_p}\!\left[F(X)^{m-k}(1-F(X))^{k-1}\right]
  }{
    \mathbb{E}_{Q_p}\!\left[F(X)^{m-1}\right]
  }.
\]
As $p\to+\infty$, $Q_p$ concentrates near $\sup\operatorname{supp}(F)$, the largest possible utility, where $F(x)\to 1$,
so $(1-F(x))^{k-1}\to 0$ while the denominator $\to 1$.
So $r_k(p)\to 0$ for $k>1$, giving $\mathbf{s}^\text{p}$ in the limit.
\end{proof}

\begin{cor}[Egalitarian; $p\to-\infty$]
\label{cor:egal}
As $p \to -\infty$, the score vector $\mathbf{s}^*(p)$ of the unique asymptotically average-case optimal positional scoring rule, normalized so that $s^*_1(p) = 1$ and $s^*_m(p) = 0$, converges to the veto vector $\mathbf{s}^\text{v} = (1, \dots, 1, 0)$.
\end{cor}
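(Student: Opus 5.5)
Mirroring the proof of \Cref{cor:maximax}, with the roles of the top and bottom of the support exchanged. For $p<0$ we have $s^*_k(p) = (\mathbb{E}[U_{(k)}^p]-1)/p$. Since $p<0$, the map $t\mapsto t^p$ is decreasing, so $\mathbb{E}[U_{(m)}^p]$ is the largest of the $m$ moments and $\mathbb{E}[U_{(1)}^p]$ the smallest. Dividing by $p<0$ reverses this order, so $s^*_1(p)$ is the largest score and $s^*_m(p)$ the smallest.

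Set $r_k(p) := \mathbb{E}[U_{(k)}^p]/\mathbb{E}[U_{(m)}^p]$. A short computation gives the normalized scores
\[
\frac{s^*_k(p)-s^*_m(p)}{s^*_1(p)-s^*_m(p)} = \frac{1-r_k(p)}{1-r_1(p)}.
\]
The target is the veto vector $(1,\dots,1,0)$. So it suffices to show $r_k(p)\to 0$ for every $k<m$. This yields the limit $1$ for $k<m$, while the entry for $k=m$ is identically $0$.

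To prove $r_k(p)\to0$, use the descending order-statistic densities:
\[
\mathbb{E}[U_{(k)}^p] = \tfrac{m!}{(k-1)!(m-k)!}\int x^p F(x)^{m-k}(1-F(x))^{k-1}\,dF(x),
\qquad
\mathbb{E}[U_{(m)}^p] = m\int x^p (1-F(x))^{m-1}\,dF(x).
\]
Define the tilted probability measure $dQ_p = x^p\,dF/\mathbb{E}[U^p]$, which is well defined under \Cref{a3} (or bounded support). Then
\[
r_k(p) = \tbinom{m-1}{k-1}\,\frac{\mathbb{E}_{Q_p}[F(X)^{m-k}(1-F(X))^{k-1}]}{\mathbb{E}_{Q_p}[(1-F(X))^{m-1}]}.
\]
As $p\to-\infty$, the weight $x^p$ blows up fastest near the bottom of the support, $\underline u=\inf\operatorname{supp}(F)$. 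So $Q_p$ concentrates near $\underline u$, where $F\to0$. The numerator then carries the factor $F^{m-k}$ with $m-k\ge1$ and tends to $0$, while the denominator tends to $1$.

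The main obstacle is making the concentration rigorous. The claim is that for every $\delta>0$, $Q_p(\{x: F(x)>\delta\})\to0$. Choose $x_\delta$ with $F(x_\delta)=\delta/2$ (possible by continuity of $F$) and compare the $Q_p$-mass beyond the $\delta$-quantile $x'_\delta$ to the mass below $x_\delta$. The ratio is at most $(x'_\delta/x_\delta)^{p}\cdot\frac{1}{\delta/2}$. Since $x'_\delta>x_\delta$ and $p\to-\infty$, this tends to $0$.

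One subtlety is the case $x_\delta=x'_\delta$, which happens when $F$ is flat there. To handle it, work instead with quantile levels $\delta/2<\delta$ chosen so that the corresponding utilities are strictly ordered. Continuity of $F$ together with \Cref{a2} lets us pick such levels.

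Given the concentration, bound the numerator integrand by $\delta^{m-k}$ on $\{F\le\delta\}$ and by $1$ elsewhere. Bound the denominator below by $(1-\delta)^{m-1}Q_p(F\le\delta)$. Letting $p\to-\infty$ and then $\delta\to0$ gives $r_k(p)\to0$, completing the argument.
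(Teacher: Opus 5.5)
Your proposal is correct and follows the paper's route: the paper proves this corollary by mirroring the maximax argument, with $r_k(p)=\mathbb{E}[U_{(k)}^p]/\mathbb{E}[U_{(m)}^p]$ and the tilted measure $Q_p$ concentrating at the bottom of the support, and your quantile bound $(x'_\delta/x_\delta)^p/(\delta/2)$ makes rigorous the concentration step that the paper only asserts. Two small remarks: the ``flat $F$'' subtlety is vacuous, since $F(x_\delta)=\delta/2<\delta=F(x'_\delta)$ already forces $x_\delta<x'_\delta$; and $Q_p$ needs finite negative moments (for instance support bounded away from $0$, as under \Cref{a3}), so bounded support alone is not enough.
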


 The proof is similar to the proof of \Cref{cor:maximax}.
\Cref{cor:maximax,cor:egal} describe the optimal score vector $\mathbf{s}^*(p)$ as $p \to \pm\infty$, they are not identical to expected maximax or egalitarian welfare as our results take $n \to \infty$ for a fixed and finite $p$. The corollaries also require \Cref{a3} which mildly restricts $D$. For example, \Cref{cor:egal} fails on the uniform distribution on $(0,1)$ but holds whenever the support of $D$ lies in $[\varepsilon, M]$ for $0 < \varepsilon < M < \infty$. \Cref{app:theorems} shows that veto and plurality are asymptotically optimal for exact egalitarian/maximax cases. These results complement \citet{apesteguia2011justice} who describe optimal non-scoring rules for egalitarian and maximax settings.

\begin{figure*}[tbp]
    \centering
    \includegraphics[width=.9\linewidth]{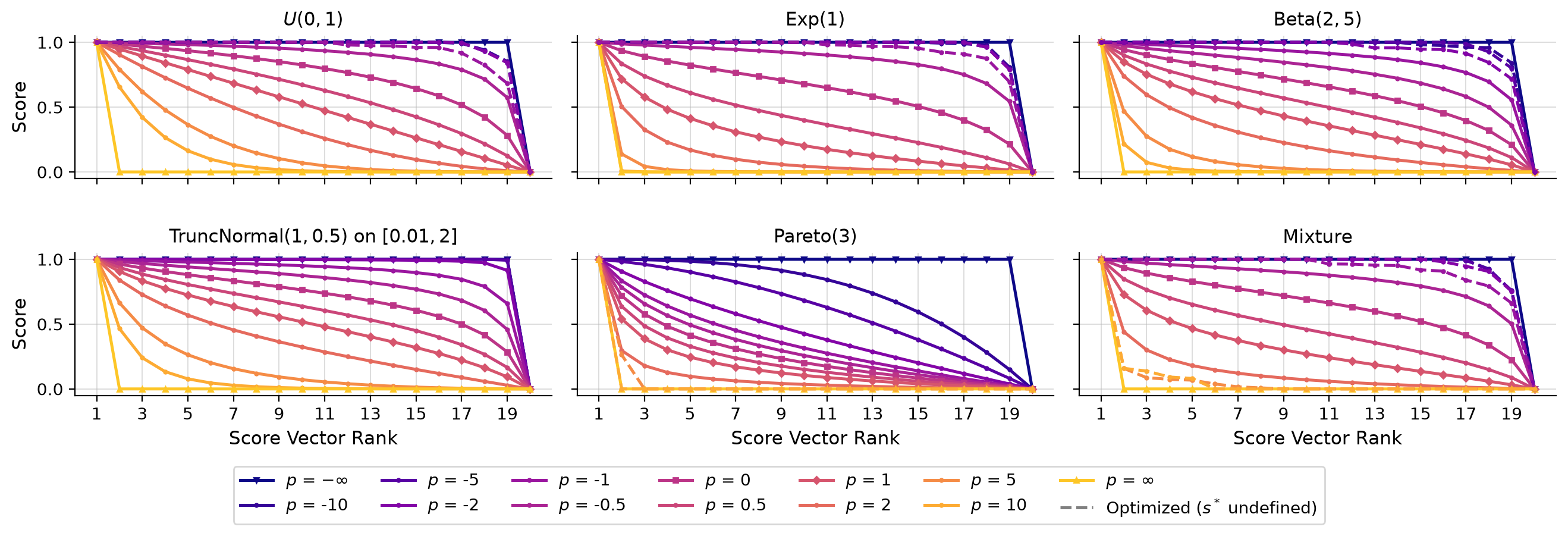}
    \caption{Shape of $\mathbf{s}^*(p)$ across varying values of $p$. The optimal score vector holds a relatively constant shape across a range of utility distributions. Dashed lines indicate experimentally optimal results where $s^*(p)$ is not well-defined on the distribution.
    }
    \label{fig:optimal_vector_shape}
\end{figure*}

\begin{figure}[tbp]
    \centering
    \includegraphics[width=\linewidth]{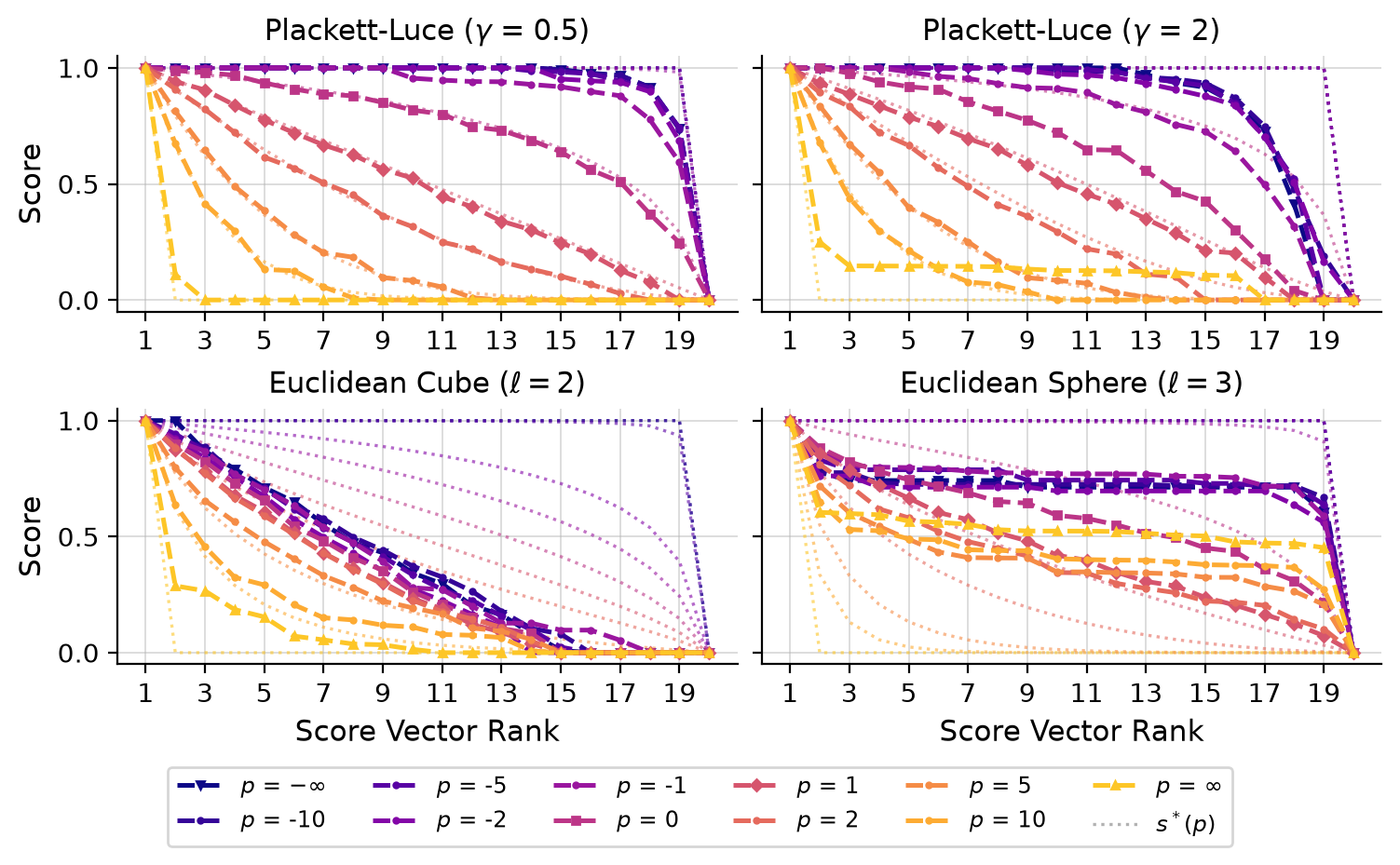}
    \caption{
    Score vectors optimized for non-i.i.d. utility sampling. The optimal vector on Plackett-Luce utilities begins to diverge from $s^*(p)$ as $\gamma$ increases while on Euclidean utilities it significantly diverges from $\mathbf{s}^*(p)$ in most cases.
    }
    \label{fig:optimal_nonneutral}
\end{figure}

\section{Optimizing Randomized Rules}

We now provide a characterization of the optimal randomized PSR. Under mild assumptions on utility and anonymity, $k$-approval maximizes ESW among randomized PSRs. Assuming voter independence and neutrality, the randomized plurality rule maximizes ESW. 

While \Cref{thm:optimal_deterministic} shows that deterministic PSRs maximize expected welfare among all social choice functions, randomized PSRs are nonetheless of interest as they have desirable normative properties. For instance, randomized PSRs are strategy-proof while deterministic PSRs are not~\cite{barbera1979majority,gibbard1977manipulation}.
We first introduce notation to aid our theorems.
Let $\mathbf{s}^{(k)} = (1, \dots, 1, 0, \dots, 0)$ denote the $k$-approval vector, with $k$ ones followed by $m - k$ zeros, so that $\mathbf{s}^{(1)}$ is plurality.
For an arbitrary voter $i$ and position $j$, let $b_{ij}$ be the alternative that voter $i$ ranks in position $j$. Fix $p \in [-\infty, \infty]$ and define
\[
v_j := \mathbb{E}\big[W^\text{PM}(\mathbf{u}, b_{ij}, p)\big],
\qquad
\bar v_k := \frac{1}{k} \sum_{j=1}^{k} v_j ,
\]
\noindent
the expected welfare of a voter's $j$-th choice, and the average expected welfare over a voter's top $k$ choices. 
This gives an alternative equation for expected social welfare with anonymous voters:
\[
\mathrm{ESW}_p(f^\mathrm{R}_\mathbf{s}) = 
\sum_{j=1}^m \frac{s_j}{\bar s} \cdot \frac{1}{n} \sum_{i=1}^n \mathbb{E}\big[W^\text{PM}(\mathbf{u}, b_{ij}, p)\big] = 
\frac{\sum_{j=1}^m s_j v_j}{\sum_{j=1}^m s_j}
\]

 
\begin{theorem}
\label{thm:optimal_randomized}
Fix $p \in [-\infty, \infty]$. Assume utilities are anonymous, rankings are strict, and $\mathbb{E}[u_i(a)] < \infty$ for all $i \in V$ and $a \in A$.
\begin{enumerate}
    \item[(a)] \emph{(No weight on suboptimal positions.)} Let $v^* = \max_j v_j$. For every $\mathbf{s}$ with $\sum_j s_j > 0$, $\mathrm{ESW}_p(f^\mathrm{R}_\mathbf{s}) \le v^*$, with equality if, and only if, $s_j = 0$ for every $j$ with $v_j < v^*$.
    \item[(b)] \emph{(k-approval maximizes ESW.)} Let $\bar v^* = \max_k \bar v_k$ and $K^* = \{k : \bar v_k = \bar v^*\}$. For every feasible score vector $\mathbf{s}$, $\mathrm{ESW}_p(f^\mathrm{R}_\mathbf{s}) \le \bar v^*$. Randomized $k$-approval attains this if, and only if, $k \in K^*$.
\end{enumerate}

\end{theorem}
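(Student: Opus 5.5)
The plan rests on the identity displayed just before the theorem: under anonymity, $\mathrm{ESW}_p(f^\mathrm{R}_\mathbf{s}) = \sum_j s_j v_j / \sum_j s_j$. So the expected welfare of a randomized PSR is a convex combination of the numbers $v_1,\dots,v_m$, with weights $w_j = s_j/\bar s$. I will first justify this identity. Conditional on $\boldsymbol{\sigma}$, the winner is $a$ with probability $S_\mathbf{s}(a,\boldsymbol{\sigma})/(n\bar s)$, and this randomness is independent of $\mathbf{u}$ given $\boldsymbol{\sigma}$. Rewriting $S_\mathbf{s}(a,\boldsymbol{\sigma}) = \sum_i \sum_j s_j \mathbf{1}[b_{ij}=a]$ gives
\[
\mathrm{ESW}_p = \frac{1}{n\bar s}\sum_i \sum_j s_j\, \mathbb{E}[W^\text{PM}(\mathbf{u}, b_{ij}, p)].
\]
By anonymity, $\mathbb{E}[W^\text{PM}(\mathbf{u}, b_{ij}, p)]$ does not depend on $i$, so this collapses to the displayed identity. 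Finiteness of each $v_j$ is needed for the sums to be well defined. It follows from $\mathbb{E}[u_i(a)]<\infty$: for $p\le 1$, the power mean is at most the arithmetic mean, hence at most $\sum_i u_i(a)$. For $p>1$, including $p=\infty$, the power mean is at most the maximum, which is again at most $\sum_i u_i(a)$. The lower bound causes no trouble because utilities are positive.

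For part (a), the argument is that a convex combination cannot exceed its largest entry: $\sum_j w_j v_j \le v^*$. For the equality case, write $v^* - \sum_j w_j v_j = \sum_j w_j (v^* - v_j)$. This is a sum of nonnegative terms, so it vanishes exactly when $w_j = 0$, that is $s_j=0$, at every $j$ with $v_j<v^*$. This uses that scores are nonnegative and that $\bar s>0$.

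For part (b), feasibility imposes $s_1\ge \dots\ge s_m\ge 0$ with $s_1>s_m$. I will decompose $\mathbf{s}$ into approval vectors by Abel summation: $\mathbf{s} = \sum_{k=1}^m \lambda_k \mathbf{s}^{(k)}$ with $\lambda_k = s_k - s_{k+1}\ge 0$ for $k<m$ and $\lambda_m = s_m$. Then
\[
\sum_j s_j v_j = \sum_k \lambda_k k \bar v_k \quad\text{and}\quad \sum_j s_j = \sum_k \lambda_k k.
\]
So $\mathrm{ESW}_p(f^\mathrm{R}_\mathbf{s})$ is a convex combination of the $\bar v_k$ with weights $\mu_k = \lambda_k k/\sum_\ell \lambda_\ell \ell$, and is therefore at most $\bar v^*$. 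Randomized $k$-approval corresponds to $\mu = e_k$, so it gives exactly $\bar v_k$, which attains the bound iff $k\in K^*$.

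One subtlety concerns $k=m$. Here $\mathbf{s}^{(m)} = \mathbf{1}$, which is not feasible, and $\bar v_m$ is just the uniform-random welfare. Even so, the bound over feasible $\mathbf{s}$ still holds with $\bar v^*$ taken over all $k$, and the ``if and only if'' is read over the approval rules $k=1,\dots,m$ (or over $k<m$ with a matching convention). I expect this bookkeeping to be the main obstacle, rather than the convexity arguments, which are straightforward. As a by-product, the proof of (b) also characterizes all optimal feasible vectors: they are exactly those with $\lambda_k=0$ for every $k\notin K^*$.
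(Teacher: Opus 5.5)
Your proposal is correct and follows the paper's proof essentially step for step. For (a), you write $v^* - \mathrm{ESW}_p$ as the nonnegative combination $\sum_j w_j (v^*-v_j)$; for (b), you use the drop decomposition $\mathbf{s}=\sum_k d_k\,\mathbf{s}^{(k)}$ to turn $\mathrm{ESW}_p$ into a convex combination of the $\bar v_k$. Your justification of the ESW identity, your finiteness check for the $v_j$, and your remark that $\mathbf{s}^{(m)}=\mathbf{1}$ is not feasible are details the paper leaves implicit, and they do not change the argument.
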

 
\begin{proof}
 
\textbf{(a)} We have $\mathrm{ESW}_p(f^\mathrm{R}_\mathbf{s}) = \sum_j w_j v_j$ with weights $w_j = s_j / \bar s$, which are nonnegative and sum to $1$. Therefore
\[
v^* - \mathrm{ESW}_p(f^\mathrm{R}_\mathbf{s}) = \sum_{j=1}^m w_j \,(v^* - v_j).
\]
Each product $w_j (v^* - v_j)$ is nonnegative, so the sum is zero, i.e., $\mathrm{ESW}_p(f^\mathrm{R}_\mathbf{s}) = v^*$, if and only if every product is zero; that is, if and only if $w_j = 0$ whenever $v_j < v^*$.
 
\textbf{(b)} Let $\mathbf{s}$ be a feasible score vector, and write $d_k$ for the drop in score after position $k$: $d_k = s_k - s_{k+1}$ for $k < m$, and $d_m = s_m$. The drops from position $j$ onwards add up to the score at position $j$, so
\begin{equation}
\label{eq:drops}
s_j = \sum_{k = j}^m d_k
\qquad\text{and}\qquad
\mathbf{s} = \sum_{k=1}^m d_k\, \mathbf{s}^{(k)} .
\end{equation}
With $m = 3$, $(1, \tfrac12, 0)$ has $d = (\tfrac12, \tfrac12, 0)$ and is $\tfrac12 (1,0,0) + \tfrac12 (1,1,0)$.
 
We reformulate ESW in terms of drops. Substituting \eqref{eq:drops} and swapping the order of summation (both run over all pairs $j \le k$),
\[
\sum_{j=1}^m s_j v_j = \sum_{j=1}^m \sum_{k=j}^m d_k v_j = \sum_{k=1}^m d_k \sum_{j=1}^k v_j = \sum_{k=1}^m d_k\, k\, \bar v_k ,
\]
and, in the same way, $\sum_j s_j = \sum_k d_k\, k$. Hence
\[
\mathrm{ESW}_p(f^\mathrm{R}_\mathbf{s}) = \sum_{k=1}^m \omega_k\, \bar v_k ,
\qquad
\omega_k := \frac{d_k\, k}{\sum_{k'=1}^m d_{k'}\, k'} .
\]
The weights $\omega_k$ are nonnegative and sum to $1$ (the denominator is $\sum_j s_j > 0$). In the example above, $\omega = (\tfrac13, \tfrac23, 0)$, so the welfare of $(1, \tfrac12, 0)$ is $\tfrac13 \bar v_1 + \tfrac23 \bar v_2$.
Now (a) applies with $\bar v_k$ in place of $v_j$:
\[
\bar v^* - \mathrm{ESW}_p(f^\mathrm{R}_\mathbf{s}) = \sum_{k=1}^m \omega_k\, (\bar v^* - \bar v_k) \geq 0,
\]
with equality if, and only if, $\omega_k = 0$ (equivalently $d_k = 0$) for every $k \notin K^*$. The $k$-approval vector $\mathbf{s}^{(k)}$ has one drop, $d_k = 1$, so $\omega_k = 1$. Its welfare is $\bar v_k$, which equals $\bar v^*$ if, and only if, $k \in K^*$.
\end{proof}

When voters are anonymous, \Cref{thm:optimal_randomized} shows the randomized PSR which maximizes ESW is $k$-approval. Our assumptions allow distributions where voter utilities are not independent. Intuitively, $k > 1$ may result in optimal welfare when alternatives ranked lower by a voter are better for a group than their favourite. We explore this in \Cref{sec:experiments}. When voter utilities are independent and neutral, there is never an incentive to use $k > 1$ for randomized PSRs.

 
\begin{cor}
\label{cor:plurality_randomized}
Fix $p \in [-\infty, \infty]$. Assume \Cref{a1}, \Cref{a2}, $\mathbb{E}[U] < \infty$. For all $n$, the plurality vector uniquely maximizes $\mathrm{ESW}_p(f^\mathrm{R}_\mathbf{s})$ over all score vectors.
i.e., the optimal randomized PSR is random dictatorship.
\end{cor}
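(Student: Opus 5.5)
The plan is to reduce the corollary to part (a) of \Cref{thm:optimal_randomized}. It suffices to show that $v_1 > v_j$ for every $j > 1$. Then $v^* = v_1$ is attained only at position $1$, and part (a) says $\mathrm{ESW}_p(f^\mathrm{R}_\mathbf{s}) \le v_1$ with equality if, and only if, $s_j = 0$ for all $j>1$. Such an $\mathbf{s}$ is a positive multiple of the plurality vector $(1,0,\dots,0)$, which gives both optimality and uniqueness up to scaling. The case $n=1$ is immediate, since then $W^\text{PM}(\mathbf{u},a,p)=u_1(a)$ and $v_j=\mathbb{E}[U_{(j)}]$, so assume $n \ge 2$.

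The first step is to rewrite $v_j$ as an expectation of a fixed function of an order statistic. Fix voter $1$ and let $b = b_{1j}$. By \Cref{a1}, voters $2,\dots,n$ draw their utility vectors independently of voter $1$. By neutrality of $D$, the coordinate $u_i(b)$ for $i \ge 2$ has the common marginal law of $U$, whichever alternative $b$ turns out to be. Moreover, $(u_i(b))_{i\ge 2}$ is i.i.d.\ and independent of $u_1$, since the choice of $b$ depends only on $u_1$.

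Hence
\[
v_j = \mathbb{E}\big[g(U_{(j)})\big], \qquad g(x) := \mathbb{E}\big[W^\text{PM}((x, U^{(2)}, \dots, U^{(n)}), p)\big],
\]
where $U^{(2)},\dots,U^{(n)}$ are i.i.d.\ copies of $U$, independent of the voter-$1$ order statistics. Integrability follows from $0 < W^\text{PM} \le \max_i u_i \le \sum_i u_i$ and $\mathbb{E}[U]<\infty$, so all the $v_j$ are finite and the differences below make sense.

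The second step is monotonicity. The power mean is strictly increasing in each argument for finite $p$, so $g$ is strictly increasing. Since $U_{(1)} > U_{(j)}$ almost surely by \Cref{a2}, we get $g(U_{(1)}) - g(U_{(j)}) > 0$ almost surely, and therefore $v_1 > v_j$.

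The main obstacle is the endpoint cases $p=\pm\infty$, where $W^\text{PM}$ is only weakly monotone in $x$. Write $M$ for $\max_{i\ge 2} U^{(i)}$ or $\min_{i \ge 2} U^{(i)}$ respectively. The increments are
\[
v_1 - v_j = \mathbb{E}\big[\max(U_{(1)},M) - \max(U_{(j)},M)\big] \quad\text{or}\quad \mathbb{E}\big[\min(U_{(1)},M) - \min(U_{(j)},M)\big].
\]
Both integrands are nonnegative, so it suffices to show each is strictly positive on an event of positive probability.

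\begin{itemize}
\item For $p=+\infty$ the integrand is positive on the event $\{M < U_{(1)}\}$. This event has positive probability because $M$ is a maximum of $n-1$ fresh draws of $U$, independent of voter $1$'s top utility, which is a maximum of $m \ge 2$ draws. I would verify this via exchangeability of the draws under \Cref{a1}.
\item For $p=-\infty$ the integrand is positive on the event $\{M > U_{(j)}\}$. This has positive probability by the symmetric argument, comparing a minimum of $n-1$ fresh draws against an order statistic at or below the median of voter $1$'s utilities.
\end{itemize}

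With strict inequality established for every $p\in[-\infty,\infty]$, applying part (a) of \Cref{thm:optimal_randomized} finishes the proof.
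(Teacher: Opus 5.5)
Your proposal is correct and follows essentially the same route as the paper. The paper couples the other voters' utilities for $b_{1j}$, which are independent of voter $1$ and identically distributed across $j$ by neutrality. It then uses strict monotonicity of $W^{\text{PM}}$ for finite $p$, and weak monotonicity plus the positive-probability events $\{M < U_{(1)}\}$ and $\{M > U_{(j)}\}$ for $p=\pm\infty$, before concluding via part (a) of \Cref{thm:optimal_randomized}. Your version is slightly more explicit, with the function $g$, the integrability check and the $n=1$ case.

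One small tightening: those two events have positive probability because $U_{(1)} > U_{(m)} \ge \operatorname{ess\,inf} U$ and $U_{(j)} < U_{(1)} \le \operatorname{ess\,sup} U$ almost surely, combined with independence of $M$ from voter $1$. Exchangeability or a median comparison is not what drives it, and for small $j$ the statistic $U_{(j)}$ need not lie below the median.
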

 
\begin{proof}
Under these assumptions \Cref{thm:optimal_randomized} applies. Consider $p \in \mathbb{R}$ first. We show $v_1 > v_j$ for every $j > 1$. Voter $1$ gives $b_{1j}$ utility $U_{(j)}$, the $j$-th largest of their $m$ utilities, while the other voters' utilities for $b_{1j}$ are independent of voter $1$ and, by neutrality, have the same distribution for every $j$. Using the same draws for both, $U_{(1)} > U_{(j)}$ by \Cref{a2}, and $W^\text{PM}$ strictly increases when one voter's utility increases, so the welfare of $b_{11}$ is always higher than $b_{1j}$.
Taking expectations gives $v_1 > v_j$.
When $p = \pm\infty$: $\min$ and $\max$ never decrease when one voter's utility increases, and with positive probability voter~$1$'s utility for $b_{1j}$ is the minimum ($p = -\infty$), or $b_{11}$ is the maximum ($p = +\infty$), again $v_1 > v_j$.
 Thus $v^* = v_1$ is attained only at $j = 1$, by \Cref{thm:optimal_randomized} the optimal score vectors are exactly those with $s_j = 0$ for all $j > 1$, i.e., positive multiples of $\mathbf{s}^\text{p}$.
\end{proof}

\section{Experiments}
\label{sec:experiments}

Our theoretical results describe the welfare-maximizing score vector across a wide range of settings but provide little intuition around what these optimal score vectors \textit{actually are} and how much benefit there is to using a score vector optimized for a particular setting over one of the existing canonical vectors shown in~\Cref{fig:canonical_vectors}. While applicable to many settings, our results are also limited by our assumptions around voter preferences which we can relax using experimental approaches. Here we explore our optimization setting experimentally.
All experiments are run with $n=100$ and $m=20$. We include additional discussion and results in \Cref{app:experiments}.

\subsection{Optimal Vectors Across $p$}

We prove in \Cref{sec:subsec_trad_funcs} the asymptotically average-case optimal score vector across the power mean family is $s_k^*(p) = \mathbb{E}[\psi_p(U_{(k)})]$. Here we visualize $\mathbf{s}^*$ across several utility distributions. In each, utilities are sampled i.i.d. for all voters. The mixture distribution selects another distribution uniformly at random for each voter.

\setlength{\columnsep}{-18pt}
\begin{multicols}{2}
\begin{itemize}[leftmargin=1em]
    \item $Uniform(0, 1)$
    \item Exponential with $\lambda = 1$
    \item $Beta(2, 5)$
    \item Trunc. Normal(1, 0.5) on $(0.01, 2)$
    \item Pareto with $\alpha=3$ on $[1, \infty)$
    \item Uniform mixture distribution
\end{itemize}
\end{multicols}

We show the results in \Cref{fig:optimal_vector_shape}. On values of $p$ where $\mathbf{s}^*$ is not defined we report the experimentally optimized score vector using the procedure described in \Cref{sec:relax_ass} with a dashed line. Hard-coded vectors for $p = \pm \infty$ are included for illustrative purposes.
The optimal vector takes a similar shape across distributions, suggesting a similar vector may perform well across settings.
For low $p$ score values remain large until late positions in the vector when it drops rapidly while the opposite holds true for higher values of $p$ ($> 1$).

\begin{figure*}[tbp]
    \centering
    \includegraphics[width=.9\linewidth]{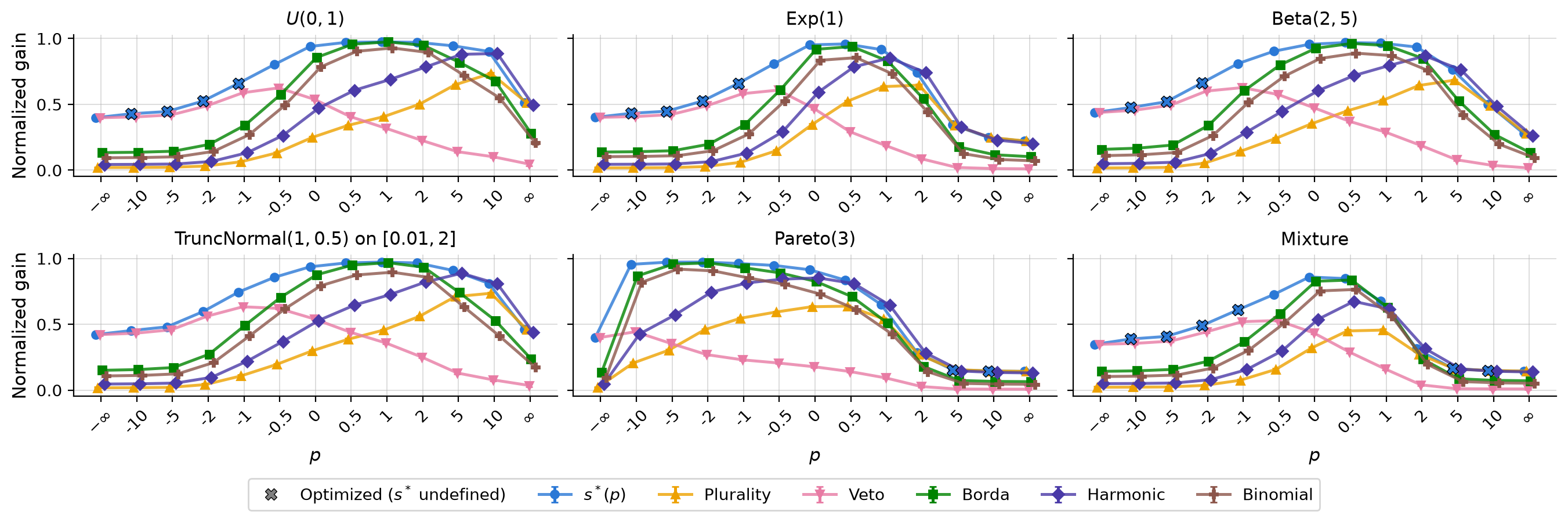}
    \caption{
    Normalized gain of each positional scoring rule on a scale between the welfare of a random winner and a perfect oracle. 
    Each canonical rule is significantly worse than the theoretically optimal $\mathbf{s}^*$ for at least some value of $p$. When $\mathbf{s}^*$ is marked with $\boldsymbol{\times}$ it is not well defined and we report welfare from an experimentally optimized score vector.
    }
    \label{fig:welfare_loss}
\end{figure*}

\begin{figure}[tbp]
    \centering
    \includegraphics[width=\linewidth]{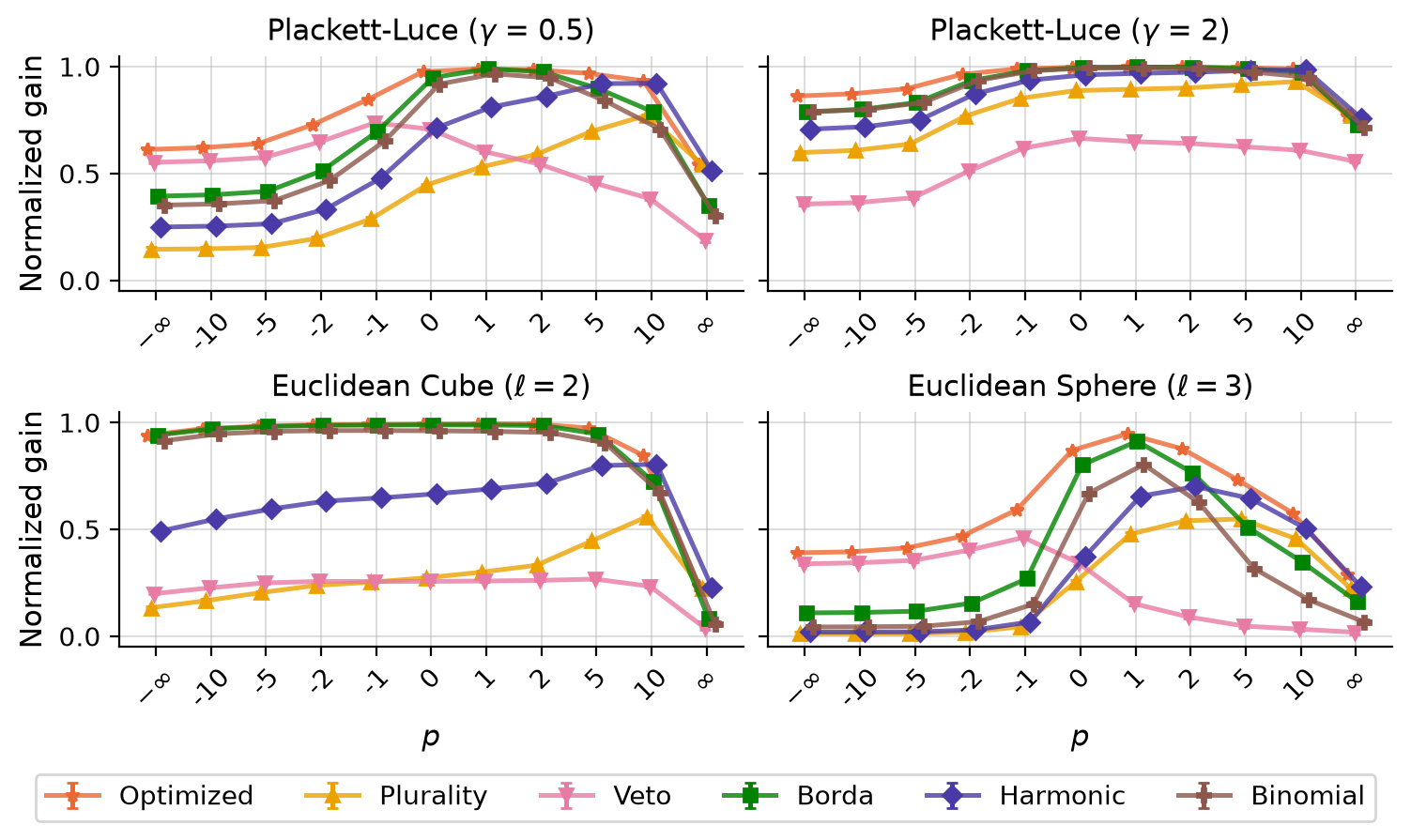}
    \caption{
    Normalized gain of each positional scoring rule on non-i.i.d. utility distributions. $\mathbf{s}^*$ is not known to be optimal on these distributions and is included among a comparison with the experimentally optimized vector.
    }
    \label{fig:welfare_loss_nonneutral}
\end{figure}

\subsection{Relaxing Utility Assumptions}
\label{sec:relax_ass}

Experimental analysis allows us to drop our prior assumptions. Here we study the optimal vector for utility distributions which are not neutral (Plackett-Luce) and where voters are not independent (Euclidean).
We use experimental optimization methods to find highly optimized score vectors and explore how they change while $p$ changes. We compare this with $\mathbf{s}^*$ to identify whether any relationship exists which could be used in future theoretical results.

In this experiment we sample utilities from two families of distributions:
\textbf{Plackett-Luce} utilities~\cite{luce2005individual,plackett1975analysis} with $\gamma \in \{0.5, 2\}$ with weight vector $\mathbf{w}$ is $w_j = \exp{\gamma (\frac12 - \frac{j-1}{m-1})}$ for $j \in 1 \dots m$. When $\gamma = 0$ this reduces to the uniform distribution. We also use \textbf{Euclidean} preferences where voters and alternatives are placed uniformly at random in an $\ell$-dimensional space and a voter at position $v_i$ receives utility $u_i = 1 - \frac{\Vert v_i - a_j \Vert_2}{d}$ for an alternative at position $a_j$ on a space with diameter $d$.
We do not have an exact method for computing $\mathbf{s}^*$ for these distributions. Instead we generate utilities for $200,000$ voters, with alternatives randomly regenerated for each voter under Euclidean preferences. We report $U_{(k)}$ as the mean of the $k$-th largest element across each list and use it to calculate $\mathbf{s}^*$.
 
We consider three basic methods for optimizing scoring rules: 

\begin{itemize}
    \item Simulated annealing, as discussed by \citet{armstrong2026optimizing}.
    \item Covariance Matrix Adaptation Evolution Strategy 
    \cite{hansen2001completely}
    \item Random search over vectors.
\end{itemize}

Methods are restricted to feasible rules and run for $10,000$ steps. The first two are run once starting from each canonical vector, a random vector, and $\mathbf{s}^*$. We record the score vector achieving highest welfare on a training set of $20,000$ profiles and report the welfare of that vector on a test set of $20,000$ profiles from the same distribution.
\Cref{app:experiments} shows that this optimization process closely matches theoretical results when utilities are neutral and independent.

In \Cref{fig:optimal_nonneutral} we see that the experimentally optimal score vector (dashed lines) can significantly diverge from $\mathbf{s}^*(p)$ (dotted lines). Plackett-Luce models have a similar shape, which begins to diverge as they become less neutral. Euclidean distributions present what appear to be truly novel dynamics. Here differences from $\mathbf{s}^*(p)$ are much larger than can be explained by simulation noise. We explore additional parameterizations of these families in \Cref{app:experiments}.

\begin{figure*}[tbp]
    \centering
    \includegraphics[width=.9\linewidth]{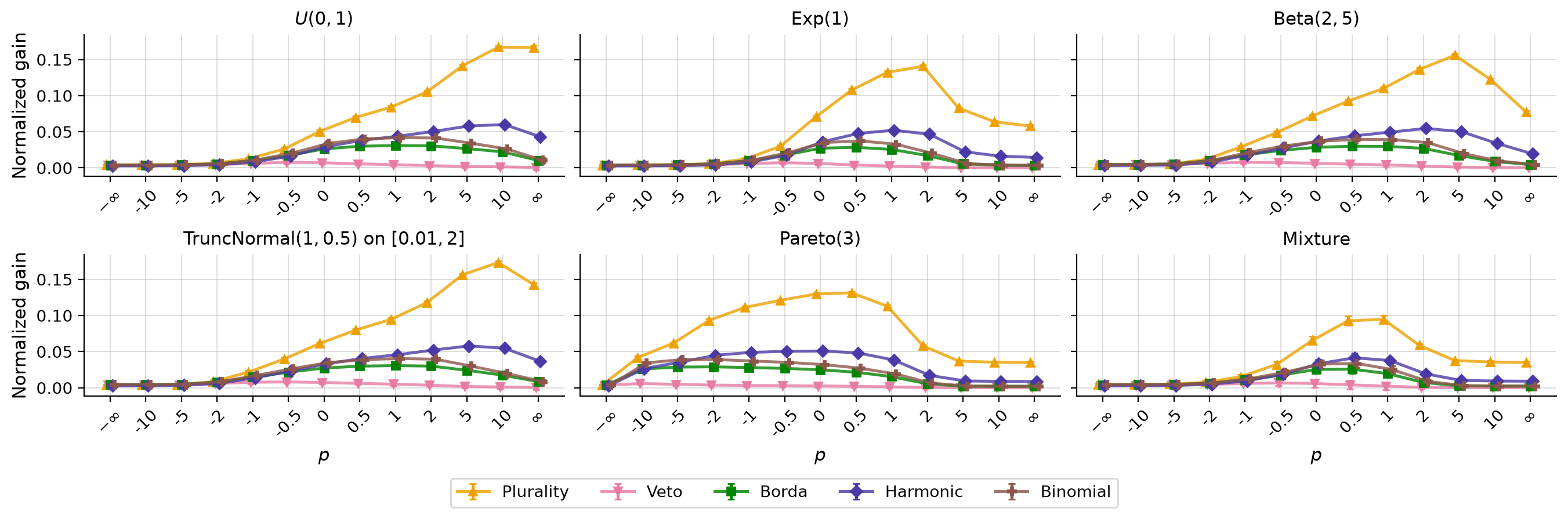}
    \caption{
    Normalized gain of randomized positional scoring rules on a scale between the welfare of a random winner and a perfect oracle. Plurality is optimal here and it significantly outperforms other rules on most values of $p$ across distributions.
    }
    \label{fig:welfare_loss_randomized_neutral}
\end{figure*}

\begin{figure}[tbp]
    \centering
    \includegraphics[width=\linewidth]{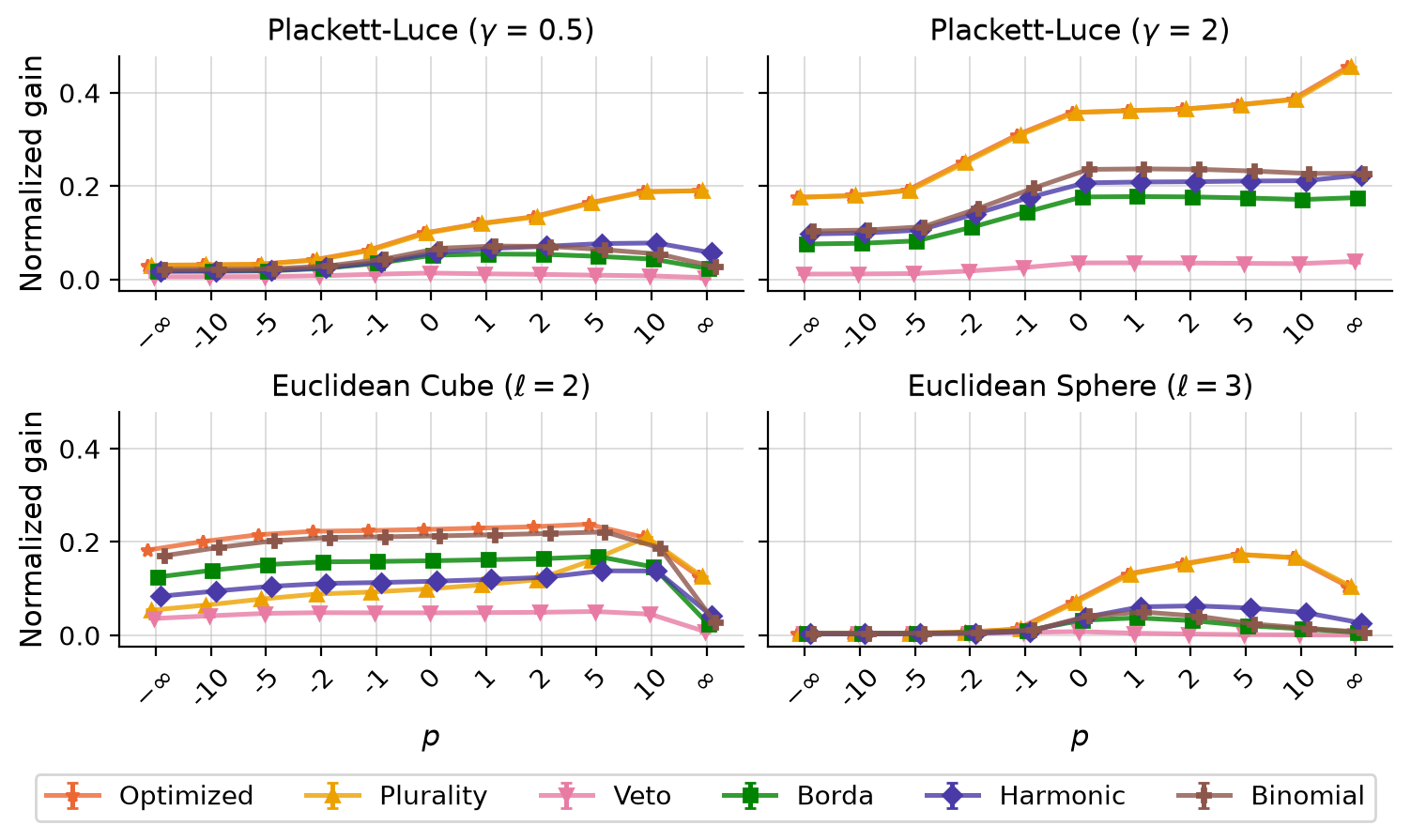}
    \caption{
    Normalized gain of each randomized positional scoring rule on non-i.i.d. utility distributions.
    \Cref{thm:optimal_randomized} shows that a $k$-approval vector optimizes welfare. In most cases $k=1$ (plurality) is best but our optimization finds $k>1$ can give higher welfare on the Euclidean Cube distribution.
    }
    \label{fig:welfare_loss_nonneutral_randomized}
\end{figure}

\subsection{How Important is Optimality?}

Thus far, we have illustrated the dynamics of the welfare-maximizing score vector across a range of power mean functions.
What we have not yet explored is the amount of welfare which is gained or lost by use of one score vector over another. If the difference in social welfare between any scoring vectors is insignificant, the optimal vector is quite unimportant. We show in this section that the optimal vector is typically significantly better than others.

Using the settings from the previous two experiments we place the welfare given by each score vector on a scale between a randomly selected winner and an oracle $f^*$ which selects the welfare-maximizing alternative on every profile. We call this measure normalized gain, with $G(\mathbf{s}) = \frac{ESW(f_\mathbf{s}) - ESW(f_\mathbf{1})}{ESW(f^*) - ESW(f_\mathbf{1})}$. A value of zero indicates that $\mathbf{s}$ provides no benefit over a random selection while a value of $1$ corresponds to electing the welfare-maximizing alternative on every profile. $G(\mathbf{s})$ links naturally with our definition of average-case optimality (\Cref{def:opt}) and has been used in prior social choice experiments by \citet{weber1978comparison} and \citet{merrill1984comparison}.

We see in \Cref{fig:welfare_loss} (i.i.d.) and \Cref{fig:welfare_loss_nonneutral} (non-i.i.d.) a large gap between canonical score vectors and the optimal score vector in most settings. Error bars show two standard errors of each rule's mean welfare over $20,000$ test profiles. No single canonical rule performs near optimal in all settings. One special case of interest is the Borda vector on the Euclidean Cube distribution: Here Borda performs extremely similarly to the optimized vector in all case, suggesting that Borda ---or a Borda-like vector--- may be exactly optimal in this setting. This result mirrors the experimental findings of \citet{green2016statistical} in exploring the utilitarian social welfare of Borda in a single-dimensional setting; \Cref{app:experiments} shows that this holds across higher dimensions of the Cube distribution.

\subsection{Randomized Rules}

The previous experiments study optimal deterministic rules. We have shown that, among randomized PSRs, a $k$-approval rule maximizes ESW across the power mean family. Thus, there is less room to explore how the optimal rules changes across the power mean. However, the question of whether the best randomized PSR gives significantly more welfare than other canonical score vectors is yet unexplored.
In fact, the optimal score vector does consistently provide a significant improvement to welfare over other score vectors on at least some values of $p$. On neutral distributions with independent voters, \Cref{fig:welfare_loss_randomized_neutral} shows that plurality is often much better in terms of welfare than other rules. When utility assumptions are relaxed, \Cref{fig:welfare_loss_nonneutral_randomized} shows a similar outcome. Plurality, or another $k$-approval vector is usually more performant than other score vectors on some cases. In Euclidean preferences where each voter's most favoured alternative may not be the winner which maximizes social welfare we typically see that the best rule involves $k > 1$. Randomized rules notably provide much lower normalized gain than deterministic rules across all distributions.

\section{Discussion}

In this paper we have analyzed from the perspective of ordinal voting a highly general form of social welfare function: the power mean. We show that, with an asymptotic number of voters, the social choice function which maximizes expected social welfare is always a deterministic PSR and we characterize the gap between optimal social welfare provided by an oracle and the welfare of the best ordinal social choice function. The optimal score vector is a direct transform of the order statistics of voter utility distributions. We present a novel result describing exact and asymptotic scoring functions for Nash social welfare and strengthen results showing that egalitarian and maximax welfare are optimized by veto and plurality.
We also show that among randomized PSRs, a $k$-approval rule will always maximize expected social welfare. When voter utilities are independent and neutral the plurality rule is optimal.

Experimentally, we find that optimal scoring vectors for deterministic rules often take similar shapes across utility distributions. This shape holds even on some distributions (Plackett-Luce) which are not covered by our assumptions on utility. On utility distributions in which voters and alternatives are highly structured (Euclidean preferences) we find that novel shapes of score vector, not described by our theoretical findings, maximize social welfare.

Our results highlight some novel directions worth attention:

\textit{Extension to Additional Distributions.}
Our assumptions on voter utility cover a wide range of distributions but many results rely on independence between voters, an assumption quite misaligned with reality. Meanwhile, our experimental results show that novel score vectors emerge when voter utilities are not independent. The well-defined domain of Euclidean preferences may be amenable to theoretical analysis on the question of which social choice function maximizes welfare and our experimental answers to this question suggest interesting and novel outcomes.

\textit{Bounded Randomness.}
Our work shows that deterministic score vectors outperform randomized ones in both a theoretical and empirical sense. However, randomized rules provide satisfying axiomatic properties around strategy-proofness and fairness. Developing and analyzing a novel blend of deterministic and randomized approaches -- e.g., a temperature which controls the ``level'' of randomness --  may be able to provide a high level of expected or worst-case welfare while also maintaining axiomatic properties not found in deterministic scoring rules.


\clearpage
\bibliographystyle{ACM-Reference-Format} 
\bibliography{refs}

\clearpage

\appendix
\crefalias{section}{appendix}

\section{Missing Proofs of Theoretical Claims}
\label{app:theorems}

This appendix contains all proofs which are missing from the main body. 
The properties of the Box-Cox utility transform are shown in \Cref{app:box_cox}. Young's lemma and our use of the lemma to demonstrate that a positional scoring rule maximizes expected social welfare appear in \Cref{app:subsec_optimal_is_psr}. We prove the shape of the optimal scoring rule in \Cref{app:subsec_optimal_rule_shape}.

\paragraph{Asymptotic Optimality.}
Each asymptotic optimality result in this appendix uses \Cref{def:opt} in the same way. Fix $p \in \mathbb{R} \cup \{\pm\infty\}$, a reference rule $f^\circ$ and a rate $c_n \to \infty$ with $c_n\big(\mathrm{OPT}_n - \mathrm{ESW}_p(f^\circ)\big) \to 0$. For any rule $f$, let
\[
\Delta(f) = \lim_{n \to \infty} c_n\big(\mathrm{ESW}_p(f^\circ) - \mathrm{ESW}_p(f)\big)
\]
whenever this limit exists. Then $c_n\big(\mathrm{OPT}_n - \mathrm{ESW}_p(f)\big) \to \Delta(f)$ as well. So if $\Delta(f_\mathbf{1}) > 0$, multiplying the numerator and denominator in \Cref{def:opt} by $c_n$ gives
\[
\lim_{n \to \infty} \frac{\mathrm{OPT}_n - \mathrm{ESW}_p(f)}{\mathrm{OPT}_n - \mathrm{ESW}_p(f_\mathbf{1})} = \frac{\Delta(f)}{\Delta(f_\mathbf{1})},
\]
and $f$ is asymptotically average-case optimal if, and only if, $\Delta(f) = 0$. For finite $p$ we use $c_n = n^{\frac{1}{2}}$ and $f^\circ = f_{\mathbf{s}^*(p)}$ (\Cref{thm:optimal_deterministic_appendix}). For $p = -\infty$ and $p = +\infty$ we use $c_n = n^{\frac{3}{2}}$, with $f^\circ$ being veto and plurality respectively (\Cref{thm:egal_sw_direct,thm:maximax_sw_direct}).

\subsection{Utility Transformation}
\label{app:box_cox}

The first lemma below provides a simple utility transformation used throughout the rest of the proofs. While we use this function, other functions with identical properties could also provide similar results.

\begin{lemma}
\label{lem:hp_appendix}
Fix $p\in\mathbb{R}$ and let $\psi_p:(0,\infty)\to\mathbb{R}$ be the Box-Cox transform of \Cref{eq:psi}.

Then
\begin{enumerate}
\item[(a)] $\psi_p$ is strictly increasing on $(0,\infty)$; and

\item[(b)] for every utility profile $\mathbf u$ and every $a\in A$,

\[
\psi_p\big(W^{\mathrm{PM}}(\mathbf u,a,p)\big)\;=\;\frac{1}{n}\sum_{i=1}^{n}\psi_p\big(u_i(a)\big).
\]

\end{enumerate}

Consequently, for all $a,b\in A$,

\begin{multline*}
W^{\mathrm{PM}}(\mathbf u,a,p)\ \ge\ W^{\mathrm{PM}}(\mathbf u,b,p)
\\ \Longleftrightarrow\quad
\sum_{i=1}^{n}\psi_p\big(u_i(a)\big)\ \ge\ \sum_{i=1}^{n}\psi_p\big(u_i(b)\big).
\end{multline*}
\end{lemma}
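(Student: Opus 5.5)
The plan is to verify both parts directly from the definition in \Cref{eq:psi}, treating $p \neq 0$ and $p = 0$ separately, and then read off the equivalence from (a) and (b).

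For (a), when $p = 0$, $\psi_0 = \log$ is strictly increasing on $(0,\infty)$. When $p \neq 0$, I will differentiate and obtain $\psi_p'(t) = t^{p-1} > 0$ for all $t > 0$, so $\psi_p$ is strictly increasing. The same formula shows continuity. It is also worth recording that $\psi_p(t) \to \log t$ as $p \to 0$, which is what justifies calling the family the Box-Cox transform, although the lemma does not need this.

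For (b), in the case $p \neq 0$, write $M = W^{\mathrm{PM}}(\mathbf u,a,p) = \big(\frac1n\sum_i u_i(a)^p\big)^{1/p}$. This is well defined and positive because every $u_i(a) > 0$, so the inner average is a positive real. Then $M^p = \frac1n\sum_i u_i(a)^p$, and therefore
\[
\psi_p(M) = \frac{M^p - 1}{p} = \frac1n\sum_i \frac{u_i(a)^p - 1}{p} = \frac1n\sum_i \psi_p(u_i(a)).
\]
The identity rests on two facts: the constant $1$ is preserved under averaging, and division by $p$ is linear. Sign issues when $p < 0$ do not arise, because only the identity $(x^{1/p})^p = x$ for $x > 0$ is used. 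In the case $p = 0$, $\log\big(\prod_i u_i(a)\big)^{1/n} = \frac1n\sum_i \log u_i(a)$, which is exactly the claimed identity.

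For the final equivalence, a strictly increasing function preserves and reflects the order $\ge$. Applying (a) gives $W^{\mathrm{PM}}(\mathbf u,a,p) \ge W^{\mathrm{PM}}(\mathbf u,b,p)$ if and only if $\psi_p(W^{\mathrm{PM}}(\mathbf u,a,p)) \ge \psi_p(W^{\mathrm{PM}}(\mathbf u,b,p))$. By (b), the latter is equivalent to $\frac1n\sum_i\psi_p(u_i(a)) \ge \frac1n\sum_i\psi_p(u_i(b))$, and multiplying both sides by $n > 0$ gives the stated equivalence.

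No step is a real obstacle. The only point that needs care is the $p < 0$ case, where one must confirm that the power mean is a well-defined positive number so that $\psi_p$ can be applied to it. This is guaranteed by the strict positivity of utilities assumed in the model.
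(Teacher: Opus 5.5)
Your proposal is correct and follows the paper's proof essentially step for step. Part (a) uses $\psi_p'(t)=t^{p-1}>0$ together with monotonicity of $\log$. Part (b) is the direct computation for $p\neq 0$ and $p=0$, relying on positivity of utilities. The equivalence then follows by applying the strictly increasing $\psi_p$, rewriting with (b), and multiplying by $n$.
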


\begin{proof}
(a) For $p\neq 0$, $\psi_p$ is differentiable on $(0,\infty)$ with $\psi_p'(t)=t^{p-1}>0$. For $p=0$, $\log$ is
strictly increasing.

(b) All utilities are positive, so $W^{\mathrm{PM}}(\mathbf u,a,p)>0$ and $\psi_p$ may be applied to it. For $p\neq 0$,

\begin{multline*}
\psi_p\big(W^{\mathrm{PM}}(\mathbf u,a,p)\big) =\frac{1}{p}\Big(\Big(\big(\tfrac{1}{n}\textstyle\sum_i u_i(a)^{p}\big)^{1/p}\Big)^{p} - 1\Big)
\\ =\frac{1}{n}\sum_i \frac{u_i(a)^{p} - 1}{p}
=\frac{1}{n}\sum_i \psi_p\big(u_i(a)\big).
\end{multline*}

For $p=0$, $\psi_0\big((\prod_i u_i(a))^{1/n}\big)=\frac{1}{n}\log\prod_i u_i(a)=\frac{1}{n}\sum_i \psi_0\big(u_i(a)\big)$.

The final equivalence follows by applying the strictly increasing function $\psi_p$ to both sides of the welfare comparison, rewriting each side using (b), and multiplying both sides by $n$.
\end{proof}

\subsection{The Optimal Rule is a PSR}
\label{app:subsec_optimal_is_psr}

\citet{young1975social} provided an axiomatic characterization of scoring rules which we use to prove that the rule maximizing expected social welfare is a scoring rule.

A \emph{social choice correspondence} maps each preference profile, over any number of voters, to a nonempty set of alternatives. It is a \emph{scoring correspondence} if there is a score vector $\mathbf{s}$, the same for every number of voters, such that it always returns exactly the alternatives with maximal score $S_\mathbf{s}(a, \boldsymbol{\sigma})$.

\begin{lemma}[Young]
\label{lem:young_appendix}
For preference profiles $\boldsymbol{\sigma}_1$ and $\boldsymbol{\sigma}_2$ write $\boldsymbol{\sigma}_1 + \boldsymbol{\sigma}_2$ for the profile containing the voters of both, $t\boldsymbol{\sigma}_1$ for the profile containing $t \in \mathbb{N}$ copies of each voter of $\boldsymbol{\sigma}_1$, and, for a permutation $\pi$ of $A$, $\pi\boldsymbol{\sigma}_1$ for the profile obtained by replacing every alternative $b$ by $\pi(b)$ in every ranking. Let $g$ be a social choice correspondence defined on profiles of every size. Call $g$

\begin{itemize}
\item \emph{anonymous} if $g(\boldsymbol{\sigma}_1)$ is unchanged when the voters of $\boldsymbol{\sigma}_1$ are reordered;

\item \emph{neutral} if $g(\pi\boldsymbol{\sigma}_1)=\pi\big(g(\boldsymbol{\sigma}_1)\big)$ for every permutation $\pi$ of $A$ and every $\boldsymbol{\sigma}_1$;

\item \emph{consistent} if $g(\boldsymbol{\sigma}_1+\boldsymbol{\sigma}_2)=g(\boldsymbol{\sigma}_1)\cap g(\boldsymbol{\sigma}_2)$ whenever $g(\boldsymbol{\sigma}_1)\cap g(\boldsymbol{\sigma}_2)\neq\emptyset$;

\item \emph{continuous} if $g(\boldsymbol{\sigma}_1)=\{a\}$ implies that for every $\boldsymbol{\sigma}_2$ there is $T\in\mathbb{N}$ with $g(\boldsymbol{\sigma}_2+t\boldsymbol{\sigma}_1)=\{a\}$ for all $t\ge T$.
\end{itemize} 

Then $g$ is a scoring correspondence if and only if it is anonymous, neutral, consistent, and continuous.
\end{lemma}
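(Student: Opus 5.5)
This is the classical characterization of \citet{young1975social}. The paper cites it rather than reproving it, so the plan below follows Young's original convex-geometric argument. I would split it into the easy direction (scoring correspondences satisfy the four axioms) and the substantive direction (the four axioms force a score vector).

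\emph{Necessity.} Let $g$ return the maximizers of $S_\mathbf{s}(\cdot,\boldsymbol{\sigma})$ for a fixed $\mathbf{s}$.
\begin{itemize}
\item Anonymity: $S_\mathbf{s}(a,\boldsymbol{\sigma})$ is a sum over voters, so reordering voters changes nothing.
\item Neutrality: $S_\mathbf{s}(\pi(a),\pi\boldsymbol{\sigma}) = S_\mathbf{s}(a,\boldsymbol{\sigma})$, since a relabelled alternative keeps its positions.
\item Consistency: $S_\mathbf{s}(a,\boldsymbol{\sigma}_1+\boldsymbol{\sigma}_2) = S_\mathbf{s}(a,\boldsymbol{\sigma}_1)+S_\mathbf{s}(a,\boldsymbol{\sigma}_2)$. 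If some $a$ maximizes both summands, the maximizers of the sum are exactly the common maximizers.
\item Continuity: if $a$ is the unique maximizer at $\boldsymbol{\sigma}_1$, its margin $\delta>0$ over every other alternative grows like $t\delta$ in $\boldsymbol{\sigma}_2+t\boldsymbol{\sigma}_1$. This eventually dominates the bounded differences contributed by $\boldsymbol{\sigma}_2$.
\end{itemize}

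\emph{Sufficiency, step 1: pass to a cone.} By anonymity, $g$ depends only on the count vector $x\in\mathbb{N}^{m!}$ giving how many voters report each ranking. Consistency applied to $\boldsymbol{\sigma}+\boldsymbol{\sigma}$ gives $g(t x)=g(x)$, so $g$ extends unambiguously to nonnegative rational vectors $x\in\mathbb{Q}_{\ge 0}^{m!}\setminus\{0\}$. For each $a\in A$ define
\[
R_a=\{x : a\in g(x)\}.
\]
Consistency shows $R_a$ is closed under addition and positive rational scaling, so it is a rational convex cone, and $\bigcup_a R_a$ covers everything. Consistency also gives the following: if $x\in R_a$ and $a\notin g(y)$ for some $y\in R_a$, then $x+y$ lies outside the region where $a$ is selected together with the alternatives of $g(y)$. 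Hence the sets $\{x: g(x)=\{a\}\}$ are pairwise disjoint convex cones whose union is dense. Continuity then says that these ``strict'' cones are relatively open in the sense that $x+ty$ stays in them for large $t$.

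\emph{Step 2: separate and identify positions.} For $a\neq b$, separate the strict cones of $a$ and $b$ by a hyperplane through the origin. This gives a linear functional $\ell_{ab}$ on $\mathbb{Q}^{m!}$ with $\ell_{ab}\ge 0$ on $R_a$ and $\ell_{ab}\le 0$ on $R_b$. I then need three facts.
\begin{enumerate}
\item The functionals can be chosen as differences $\ell_{ab}=L_a-L_b$ of a single family $(L_a)_{a\in A}$, with $g(x)=\arg\max_a L_a(x)$. I would show this by intersecting the half-spaces and using that the cones tile the space, so every boundary between two regions is a shared face.
\item By neutrality, $L_{\pi(a)}(\pi x)=L_a(x)$ can be imposed after averaging over the symmetric group.
\item $L_a(e_\sigma)$ depends only on $\sigma(a)$. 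Any permutation of $A$ fixing the position of $a$ gives an automorphism, and equality of the $L$-values on the unit profiles $e_\sigma$ then follows from the tie structure forced by neutrality on symmetric profiles.
\end{enumerate}
Setting $s_j=L_a(e_\sigma)$ whenever $\sigma(a)=j$ makes $L_a(x)=S_\mathbf{s}(a,x)$. The degenerate case in which $g$ always returns all of $A$ corresponds to the constant vector $\mathbf{s}$.

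\emph{Main obstacle.} The delicate part is step 2. Pairwise separation gives functionals defined only up to positive scaling, and making them \emph{simultaneously} consistent as differences of one family $(L_a)$ requires care. So does working over $\mathbb{Q}$ rather than $\mathbb{R}$, and confirming that the boundary sets where ties occur are exactly the hyperplanes, which is where continuity is essential. I would first try to handle this by showing that the closures of the strict cones form a complete polyhedral fan, all of whose walls are hyperplanes. Then I would use neutrality to force the wall between $a$ and $b$ to be the positional difference $S_\mathbf{s}(a,\cdot)-S_\mathbf{s}(b,\cdot)$, with one common $\mathbf{s}$ for all pairs. If this becomes unwieldy, I would fall back on following Young's original proof directly.
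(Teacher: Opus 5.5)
The paper does not prove this lemma; it imports it from \citet{young1975social}, so citing it matches the paper, and your necessity direction is fine. Your sufficiency sketch, however, has a genuine gap at Step~2, item~1. The reason you give for it (the cones tile the space, so every boundary is a shared face) cannot work, because anonymity, consistency and continuity alone do not yield a family $(L_a)_{a\in A}$ with $g=\operatorname{arg\,max}_a L_a$. For a counterexample, take a non-regular complete rational fan in $\mathbb{R}^3$ whose $m$ maximal cones $C_a$ are indexed by the alternatives, and a rational linear surjection $\Lambda:\mathbb{Q}^{m!}\to\mathbb{Q}^3$. Set $g(x)=\{a : \Lambda x\in C_a\}$ on count vectors. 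The fan's face-to-face property gives consistency: if $\Lambda x,\Lambda y\in C_a$ and $\Lambda(x+y)\in C_a\cap C_c$, which is a face of $C_a$, then $\Lambda x,\Lambda y\in C_c$. Openness of the cone interiors gives continuity, and $g$ is anonymous. Yet if $g=\operatorname{arg\,max}_a L_a$ held, the $L_a$ would factor through $\Lambda$ up to a common term, and the fan would be the normal fan of a polytope, i.e.\ regular. So neutrality must be used to build the potentials themselves, not merely to symmetrize them afterwards as in your item~2.

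Your ``main obstacle'' paragraph does bring neutrality in at the walls, but neutrality alone does not make the $a$--$b$ wall equal to $S_\mathbf{s}(a,\cdot)-S_\mathbf{s}(b,\cdot)$. Write $P_a=\{x: g(x)=\{a\}\}$ for the strict cones. Once $P_a$ and $P_b$ share a wall, its functional is unique up to scale, invariant under permutations fixing $a$ and $b$, and antisymmetric under $\tau_{ab}$. This only gives $\ell_{ab}(x)=\sum_\sigma\phi(\sigma(a),\sigma(b))\,x_\sigma$ with $\phi(i,j)=-\phi(j,i)$. That is a $\binom{m}{2}$-dimensional family containing the majority margin $\phi(i,j)=\operatorname{sign}(j-i)$, and the positional differences $\phi(i,j)=s_i-s_j$ form only an $(m-1)$-dimensional part of it. What is still missing is a genuinely three-alternative argument giving a normalization with $\ell_{ab}+\ell_{bc}+\ell_{ca}=0$, together with a proof that these walls determine $g$ everywhere. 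That is the substantive part of Young's proof, and your sketch does not supply it.

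Some smaller points also need repair:
\begin{itemize}
\item The sentence about $y\in R_a$ with $a\notin g(y)$ is self-contradictory.
\item Density of the strict cones needs a singleton outcome to exist unless $g\equiv A$. This follows because if $a\in g(x)$, $b\notin g(x)$ and $|g(x)|\geq 2$, then $g(x+\tau_{ab}x)=g(x)\setminus\{a\}$. After that, $a\in g(x)$ and $g(z_a)=\{a\}$ give $g(x+\varepsilon z_a)=\{a\}$.
\item Continuity moves $x$ to $x+y/t$ with $y\ge 0$ only, not to $x+ty$. To get genuine openness, first use neutrality to get $g(u)=A$ for the profile $u$ containing every ranking once. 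Then extend $g$ to all of $\mathbb{Q}^{m!}$ by $g(x)=g(x+ku)$; this is Young's first step.
\end{itemize}
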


\begin{theorem}\label{thm:optimal_is_psr_appendix}
Fix $p\in\mathbb{R}$ and suppose \Cref{a3} holds. There is a score vector $\mathbf{s} \in \mathbb{R}^m$ such that the positional scoring rule $f_\mathbf{s}$ is asymptotically average-case optimal (\Cref{def:opt}). When $p = 1$, $f_\mathbf{s}$ moreover maximizes $\mathrm{ESW}_1$ over all social choice functions for every $n$.
\end{theorem}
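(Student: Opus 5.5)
The plan is to work with the surrogate objective $\Phi_n(a,\boldsymbol{\sigma}) := \mathbb{E}\big[\sum_{i=1}^n \psi_p(u_i(a)) \mid \boldsymbol{\sigma}\big]$ instead of $\mathbb{E}[W^\text{PM}(\mathbf{u},a,p)\mid\boldsymbol{\sigma}]$. First I show that the correspondence $g(\boldsymbol{\sigma}) := \operatorname*{arg\,max}_a \Phi_n(a,\boldsymbol{\sigma})$ is a scoring correspondence. By \Cref{a1}, voters are independent, so conditioning on $\boldsymbol{\sigma}$ only conditions each voter's utilities on that voter's own ranking $\sigma_i$. By neutrality of $D$, $\mathbb{E}[\psi_p(u_i(a))\mid \sigma_i]$ depends only on the position $\sigma_i(a)$. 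It therefore equals $\mathbb{E}[\psi_p(U_{(\sigma_i(a))})] =: s_{\sigma_i(a)}$, which is finite by \Cref{a3}. Hence $\Phi_n(a,\boldsymbol{\sigma}) = S_\mathbf{s}(a,\boldsymbol{\sigma})$ exactly, with $\mathbf{s}$ independent of $n$.

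This gives the scoring form directly. One could equally verify Young's four axioms (\Cref{lem:young_appendix}): anonymity and neutrality follow from \Cref{a1}, and consistency and continuity follow from additivity of $\Phi$ over disjoint voter sets. Either way, $g = f_\mathbf{s}$ up to tie-breaking. I would present the Young route to match the paper, noting that additivity is the key point. For $p=1$, $\psi_1(t)=t-1$, so $\mathbb{E}[W^\text{PM}(\mathbf{u},a,1)\mid\boldsymbol{\sigma}] = \tfrac1n \Phi_n(a,\boldsymbol{\sigma}) + 1$ exactly. The argmax of conditional welfare is then $f_\mathbf{s}$ at every profile, and since every profile has positive probability, $f_\mathbf{s}$ is optimal for every $n$. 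This settles the second claim.

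For general $p$, I relate $W^\text{PM}$ to $\Phi$ through the expansion $W^\text{PM} = \psi_p^{-1}(\bar\psi)$ with $\bar\psi = \tfrac1n\sum_i \psi_p(u_i(a))$. Let $\mu = \mathbb{E}[\psi_p(U)]$ and set $\bar\psi = \mu + \delta$, where $\delta$ is an average of independent terms of order $n^{-1/2}$. Taylor expanding gives $W^\text{PM} = \psi_p^{-1}(\mu) + (\psi_p^{-1})'(\mu)\,\delta + R$, with $R = O(\delta^2)$ on a high-probability event. The linear term is a fixed positive multiple of $\tfrac1n\Phi_n$ after conditioning. The quadratic term has conditional expectation that differs across alternatives only by $O(n^{-1})$: the conditional variance of $\bar\psi$ is $\tfrac1{n^2}\sum_i \mathrm{Var}(\psi_p(U_{(\sigma_i(a))}))$, whose differences across alternatives are $O(n^{-1})$, and the squared conditional mean differences are of the same order. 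Controlling the remainder off the high-probability event, in expectation, is where \Cref{a3} enters: the second moment of $\psi_p(U)$ and the $q$-th moment of $U$ with $q>\max\{2,p\}$ let Hölder plus Chebyshev/Rosenthal bound the tail contribution by $o(n^{-1/2})$. This is \Cref{lem:expansion}, and I expect it to be the main obstacle, especially for negative $p$, where $\psi_p^{-1}$ blows up near the left end of its range.

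Given the expansion, for every profile the conditional welfare equals $c + \kappa\,\tfrac1n S_\mathbf{s}(a,\boldsymbol{\sigma}) + E_n(a,\boldsymbol{\sigma})$, where $c$ and $\kappa > 0$ do not depend on $a$ and $\mathbb{E}\max_a |E_n| = o(n^{-1/2})$. Choosing the $S_\mathbf{s}$-maximizer therefore loses at most $2\,\mathbb{E}\max_a|E_n| = o(n^{-1/2})$ relative to $\mathrm{OPT}_n$. For the denominator in \Cref{def:opt}, I need $n^{1/2}(\mathrm{OPT}_n - \mathrm{ESW}_p(f_\mathbf{1}))$ to stay bounded away from zero. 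The linear term provides this: $\tfrac1n(\max_a S_\mathbf{s} - \text{avg}_a S_\mathbf{s})$ is of order $n^{-1/2}$ by the CLT, and its expectation, times $n^{1/2}$, converges to a positive constant whenever $\mathbf{s}$ is nonconstant. Nonconstancy holds because $\psi_p$ is strictly increasing (\Cref{lem:hp_appendix}) and the order statistics are strictly ordered by \Cref{a2}. Dividing the two gaps gives a ratio of order $o(n^{-1/2})/\Theta(n^{-1/2}) \to 0$, which proves asymptotic optimality of $f_\mathbf{s}$.
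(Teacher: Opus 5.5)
Your proposal is correct and follows essentially the paper's proof. Your $\Phi_n$ is the paper's $G$, and you identify it with $S_{\mathbf{s}^*(p)}$ directly. That step is the paper's \Cref{lemma:p0}; the paper's proof of this theorem instead gets existence of $\mathbf{s}$ from Young's characterization (\Cref{lem:young_appendix}) and invokes \Cref{lemma:p0} only for $\mathbf{s}^*(p)$. Your Taylor linearization with $o(n^{-1/2})$ error is \Cref{lem:expansion}, your CLT lower bound on the denominator is what the paper obtains from $\Delta_p(\mathbf{1})>0$, and your $p=1$ case is exact because $\psi_1^{-1}$ is affine, as in the paper.

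One small slip in your aside: for $p<0$ the range of $\psi_p$ is $(-\infty,-1/p)$, and $\psi_p^{-1}$ blows up at the \emph{right} end, not the left. This is why the paper controls the remainder on the bad event by bounding $W^\text{PM}$ itself through $\mathbb{E}[U^q]$ and H\"older, rather than through $\psi_p^{-1}$.
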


The proof applies \Cref{lem:young_appendix} to the objective $\mathbb{E}[\sum_{i=1}^n \psi_p(u_i(a)) \mid \boldsymbol{\sigma}]$, which is additive across voters. The following lemma shows that this objective determines $\mathrm{ESW}_p$ up to an additive term of $o(n^{-\frac{1}{2}})$.

\begin{lemma}
\label{lem:expansion}
Fix $p \in \mathbb{R}$ and suppose \Cref{a3} holds. Let
\[
V_p(\mathbf{u}, a) = \sum_{i=1}^n \psi_p(u_i(a)) = n\,\psi_p\big(W^{\mathrm{PM}}(\mathbf u,a,p)\big),
\]
where the second expression follows from \Cref{lem:hp_appendix}(b), and let $\mu_\psi = \mathbb{E}[\psi_p(U)]$, $C_p = \psi_p^{-1}(\mu_\psi)$ and $\kappa_p = (\psi_p^{-1})'(\mu_\psi) = C_p^{1-p} > 0$. There exist $\varepsilon_n \geq 0$ with $n^{\frac{1}{2}}\varepsilon_n \to 0$ such that for every $n$ and every (possibly randomized) social choice function $f$,

\begin{multline*}
\Big| \mathrm{ESW}_p(f) - C_p 
- \kappa_p \Big(\frac{1}{n}\mathbb{E}[V_p(\mathbf{u}, f(\boldsymbol{\sigma}))] - \mu_\psi\Big) \Big| \leq \varepsilon_n.
\end{multline*}
\end{lemma}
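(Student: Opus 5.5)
The plan is to linearize $\psi_p^{-1}$ around $\mu_\psi$ and bound the remainder uniformly over all rules by summing over alternatives. Write $X_a = \frac{1}{n}V_p(\mathbf u,a)$, so that $W^{\mathrm{PM}}(\mathbf u,a,p)=\psi_p^{-1}(X_a)$ by \Cref{lem:hp_appendix}(b). Define the remainder $R(x)=\psi_p^{-1}(x)-C_p-\kappa_p(x-\mu_\psi)$. Then for any (possibly randomized) $f$,
\[
\mathrm{ESW}_p(f)-C_p-\kappa_p\Big(\tfrac1n\mathbb{E}[V_p(\mathbf u,f(\boldsymbol\sigma))]-\mu_\psi\Big)=\mathbb{E}\big[R(X_{f(\boldsymbol\sigma)})\big].
\]
So it suffices to take $\varepsilon_n=\sup_f|\mathbb{E}[R(X_{f(\boldsymbol\sigma)})]|$ and show $n^{1/2}\varepsilon_n\to0$.

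\textbf{Uniformity over $f$.} Whatever $f$ does, and including its internal randomness, $|R(X_{f(\boldsymbol\sigma)})|\le\sum_{a\in A}|R(X_a)|$ pointwise. Hence $\varepsilon_n\le\sum_a\mathbb{E}|R(X_a)| = m\,\mathbb{E}|R(X_{a_1})|$, where the last equality uses neutrality (\Cref{a1}). This removes all dependence on $f$. For the fixed alternative $a_1$, $X_{a_1}$ is an average of $n$ i.i.d.\ copies of $\psi_p(U)$, by voter independence in \Cref{a1}. By \Cref{a3} these have finite variance $\sigma^2$, so $\mathbb{E}(X_{a_1}-\mu_\psi)^2=\sigma^2/n$. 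It remains to show $\mathbb{E}|R(X_{a_1})|=o(n^{-1/2})$, splitting on the event $E=\{|X_{a_1}-\mu_\psi|\le\delta\}$ for a small fixed $\delta>0$.

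\textbf{On $E$.} Since $\psi_p^{-1}$ is $C^2$ near $\mu_\psi$ (explicitly $(1+px)^{1/p}$ or $e^x$), Taylor's theorem gives $|R(x)|\le K(x-\mu_\psi)^2$ for $|x-\mu_\psi|\le\delta$. Therefore $\mathbb{E}[|R|\mathbf 1_E]\le K\sigma^2/n=O(1/n)$.

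\textbf{On $E^c$ (the main obstacle).} This is the hard step because $\psi_p^{-1}$ may blow up. I bound $|R(x)|\le \psi_p^{-1}(x)+C_p+\kappa_p|x-\mu_\psi|$ and treat the three terms separately.
\begin{itemize}
\item The constant term contributes $C_p\Pr(E^c)\le C_p\sigma^2/(n\delta^2)$ by Chebyshev.
\item The linear term is handled by Cauchy--Schwarz: $\mathbb{E}[|X-\mu_\psi|\mathbf 1_{E^c}]\le(\sigma/\sqrt n)\Pr(E^c)^{1/2}=O(1/n)$.
\item The welfare term $\psi_p^{-1}(X_{a_1})=W^{\mathrm{PM}}(\mathbf u,a_1,p)$ is handled by monotonicity of power means in the exponent. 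For $q>\max\{2,p\}$ from \Cref{a3}, $W^{\mathrm{PM}}(\cdot,p)\le W^{\mathrm{PM}}(\cdot,q)$. Jensen then gives $\mathbb{E}[W^{\mathrm{PM}}(\cdot,q)^q]\le\mathbb{E}[\frac1n\sum_iu_i(a_1)^q]=\mathbb{E}[U^q]<\infty$, uniformly in $n$. Hölder yields $\mathbb{E}[W\mathbf 1_{E^c}]\le\mathbb{E}[U^q]^{1/q}\Pr(E^c)^{1-1/q}=O(n^{-(1-1/q)})$. Because $q>2$, the exponent $1-1/q>1/2$, so this is $o(n^{-1/2})$.
\end{itemize}

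Combining the two events gives $\varepsilon_n\le m\big(O(n^{-1})+O(n^{-(1-1/q)})\big)$, and hence $n^{1/2}\varepsilon_n\to0$. The only delicate point is the tail bound on $E^c$; the moment condition $q>\max\{2,p\}$ in \Cref{a3} is exactly what makes the Hölder exponent beat $n^{-1/2}$. Along the way I would also verify $\kappa_p=C_p^{1-p}$, which follows from $(\psi_p^{-1})'(x)=1/\psi_p'(\psi_p^{-1}(x))$ and $\psi_p'(t)=t^{p-1}$.
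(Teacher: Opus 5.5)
Your proposal is correct and follows the paper's proof essentially step for step. Both arguments use a second-order Taylor bound for $\psi_p^{-1}$ on the event $|X_a-\mu_\psi|\le\delta$, Chebyshev and Cauchy--Schwarz for the constant and linear terms off that event, power-mean monotonicity plus H\"older with $q>2$ for the welfare term, and the pointwise bound $|R(X_{f(\boldsymbol{\sigma})})|\le\sum_{a\in A}|R(X_a)|$ to make $\varepsilon_n$ independent of $f$. One cosmetic point: the step you attribute to Jensen is actually an identity, since $W^{\mathrm{PM}}(\mathbf u,a,q)^q=\frac1n\sum_i u_i(a)^q$ by definition.
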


\begin{proof}
For each $a \in A$ let $\bar Y_n(a) = \frac{1}{n} V_p(\mathbf{u}, a)$, so that $W^\text{PM}(\mathbf{u}, a, p) = \psi_p^{-1}(\bar Y_n(a))$ by \Cref{lem:hp_appendix}(b). By \Cref{a1}, $\bar Y_n(a)$ is an average of $n$ independent copies of $\psi_p(U)$. Hence $\mathbb{E}[\bar Y_n(a)] = \mu_\psi$ and $\mathbb{E}[(\bar Y_n(a) - \mu_\psi)^2] = \mathrm{Var}(\psi_p(U))/n = O(n^{-1})$, where the variance is finite by \Cref{a3}.

\begin{enumerate}
    \item \textbf{Linearization.} $\psi_p^{-1}$ is twice continuously differentiable on the range of $\psi_p$, an open interval containing $\mu_\psi$. Fix $\delta > 0$ such that $[\mu_\psi - \delta, \mu_\psi + \delta]$ lies in this range. As $C_p = \psi_p^{-1}(\mu_\psi)$ and $\kappa_p = (\psi_p^{-1})'(\mu_\psi)$, Taylor's theorem gives, for every $a \in A$,
    \[
    W^\text{PM}(\mathbf{u}, a, p) = C_p + \kappa_p (\bar Y_n(a) - \mu_\psi) + \rho_p(\bar Y_n(a)),
    \]
    where $|\rho_p(y)| \leq c\,(y - \mu_\psi)^2$ whenever $|y - \mu_\psi| \leq \delta$, for a constant $c$ that does not depend on $n$. The identity $\kappa_p = C_p^{1-p}$ follows from $\psi_p'(t) = t^{p-1}$.

    \item \textbf{The remainder is negligible.} We show $\mathbb{E}[|\rho_p(\bar Y_n(a))|] = o(n^{-\frac{1}{2}})$. Let $B_n = \{|\bar Y_n(a) - \mu_\psi| > \delta\}$, so that $\Pr(B_n) = O(n^{-1})$ by Chebyshev's inequality. Outside $B_n$, the remainder contributes at most $c\,\mathbb{E}[(\bar Y_n(a) - \mu_\psi)^2] = O(n^{-1})$. On $B_n$, as $W^\text{PM} > 0$ and $C_p > 0$,
    \[
    |\rho_p(\bar Y_n(a))| \leq W^\text{PM}(\mathbf{u}, a, p) + C_p + \kappa_p |\bar Y_n(a) - \mu_\psi|.
    \]
    By Cauchy-Schwarz the last two terms contribute $O(n^{-1})$. For the first term, let $q$ be as in \Cref{a3}. Power means are nondecreasing in their parameter and $q > p$, so $\mathbb{E}[W^\text{PM}(\mathbf{u}, a, p)^q] \leq \mathbb{E}[W^\text{PM}(\mathbf{u}, a, q)^q] = \mathbb{E}[U^q] < \infty$. Hölder's inequality then gives
    \[
    \mathbb{E}[W^\text{PM}(\mathbf{u}, a, p)\, \mathbf{1}_{B_n}] \leq \mathbb{E}[U^q]^{\frac{1}{q}} \Pr(B_n)^{1 - \frac{1}{q}} = O(n^{\frac{1}{q} - 1}),
    \]
    which is $o(n^{-\frac{1}{2}})$ since $q > 2$.

    \item \textbf{Any social choice function.} Let $\omega = f(\boldsymbol{\sigma})$. Applying step 1 at $a = \omega$ and taking expectations, with $\mathbb{E}[\bar Y_n(\omega)] = \frac{1}{n}\mathbb{E}[V_p(\mathbf{u}, \omega)]$,
    \[
    \mathrm{ESW}_p(f) - C_p - \kappa_p\Big(\frac{1}{n}\mathbb{E}[V_p(\mathbf{u}, \omega)] - \mu_\psi\Big) = \mathbb{E}[\rho_p(\bar Y_n(\omega))].
    \]
    As $|\rho_p(\bar Y_n(\omega))| \leq \sum_{a \in A} |\rho_p(\bar Y_n(a))|$ and the $\bar Y_n(a)$ are identically distributed (\Cref{a1}), the claim holds with $\varepsilon_n = m\, \mathbb{E}[|\rho_p(\bar Y_n(a))|]$. This does not depend on $f$ and satisfies $n^{\frac{1}{2}}\varepsilon_n \to 0$ by step 2.
\end{enumerate}
\end{proof}

Now we are ready to prove \Cref{thm:optimal_is_psr}.

\begin{proof}[Proof of \Cref{thm:optimal_is_psr_appendix}]
For every preference profile $\boldsymbol{\sigma}$, of any size $n$, and every $a\in A$ let
\[
G(\boldsymbol{\sigma},a)\;:=\;\mathbb{E}\big[V_p(\mathbf u, a)\ \big|\ \boldsymbol{\sigma}\big]
\;=\;\sum_{i=1}^{n}\mathbb{E}\big[\psi_p\big(u_i(a)\big)\ \big|\ \boldsymbol{\sigma}\big],
\]
which is finite by \Cref{a3}. Let $f^{*}(\boldsymbol{\sigma})=\operatorname*{arg\,max}_{a\in A}G(\boldsymbol{\sigma},a)$ be the social choice correspondence returning \emph{all} maximizers. Following \citet{boutilier2015optimal}, we use \Cref{lem:young_appendix} to show that $f^*$ is a scoring correspondence. We then use \Cref{lem:expansion} to show that the corresponding positional scoring rule is asymptotically average-case optimal.

\noindent\textbf{$G$ is a sum of per-voter terms.}
Voters are independent and identically distributed (\Cref{a1}), so for voter $i$ conditioning on $\boldsymbol{\sigma}$ is the same as conditioning on $\sigma_i$, and
\begin{equation}
\label{eq:decomp}
G(\boldsymbol{\sigma},a)=\sum_{i=1}^n h_a(\sigma_i), \qquad h_a(\sigma) := \mathbb{E}\big[\psi_p(u_1(a))\ \big|\ \sigma_1=\sigma\big],
\end{equation}
where $h_a$ depends neither on $i$ nor on $n$. Therefore
\begin{equation}\label{eq:additive}
G(\boldsymbol{\sigma}_1+\boldsymbol{\sigma}_2,a)=G(\boldsymbol{\sigma}_1,a)+G(\boldsymbol{\sigma}_2,a) \qquad \forall a \in A.
\end{equation}

\noindent\textbf{$f^*$ is a scoring correspondence.} We check the conditions of \Cref{lem:young_appendix}.

\emph{Anonymous.} By \eqref{eq:decomp}, $G(\boldsymbol{\sigma}, a)$ does not depend on the order of the voters.

\emph{Neutral.} By \Cref{a1}, relabelling the alternatives by $\pi$ does not change the distribution of utilities, so $G(\pi\boldsymbol{\sigma}, \pi(a)) = G(\boldsymbol{\sigma}, a)$ and $f^*(\pi\boldsymbol{\sigma}) = \pi(f^*(\boldsymbol{\sigma}))$.

\emph{Consistent.} Suppose $f^{*}(\boldsymbol{\sigma}_1)\cap f^{*}(\boldsymbol{\sigma}_2)\neq\emptyset$. By \eqref{eq:additive}, $G(\boldsymbol{\sigma}_1+\boldsymbol{\sigma}_2,\cdot)$ is the sum of $G(\boldsymbol{\sigma}_1,\cdot)$ and $G(\boldsymbol{\sigma}_2,\cdot)$, which have a common maximizer. So $b$ maximizes the sum if, and only if, it maximizes both terms, i.e., $f^{*}(\boldsymbol{\sigma}_1+\boldsymbol{\sigma}_2)=f^{*}(\boldsymbol{\sigma}_1)\cap f^{*}(\boldsymbol{\sigma}_2)$.

\emph{Continuous.} Suppose $f^{*}(\boldsymbol{\sigma}_1)=\{a\}$, so that $\delta:=\min_{b\neq a}\big(G(\boldsymbol{\sigma}_1,a)-G(\boldsymbol{\sigma}_1,b)\big)>0$. By \eqref{eq:additive}, for every profile $\boldsymbol{\sigma}_2$ and every $b\neq a$,
\[
G(\boldsymbol{\sigma}_2+t\boldsymbol{\sigma}_1,a)-G(\boldsymbol{\sigma}_2+t\boldsymbol{\sigma}_1,b) \ \ge\ G(\boldsymbol{\sigma}_2,a)-G(\boldsymbol{\sigma}_2,b)+t\delta,
\]
which is positive for all sufficiently large $t$. So $f^{*}(\boldsymbol{\sigma}_2+t\boldsymbol{\sigma}_1)=\{a\}$ for all sufficiently large $t$.

By \Cref{lem:young_appendix} there is a score vector $\mathbf{s}$, the same for every $n$, with $f^*(\boldsymbol{\sigma}) = \operatorname*{arg\,max}_{a \in A} S_\mathbf{s}(a, \boldsymbol{\sigma})$. Breaking ties uniformly at random, $f_\mathbf{s}$ selects uniformly from $f^*(\boldsymbol{\sigma})$.

\noindent\textbf{$f_\mathbf{s}$ is asymptotically average-case optimal.}
Let $g$ be any (possibly randomized) social choice function on $n$ voters. Conditioning on $\boldsymbol{\sigma}$ and the randomness of $g$,
\[
\mathbb{E}[V_p(\mathbf u, g(\boldsymbol{\sigma}))] = \mathbb{E}\big[G(\boldsymbol{\sigma}, g(\boldsymbol{\sigma}))\big] \leq \mathbb{E}\big[\max_{a \in A} G(\boldsymbol{\sigma}, a)\big] = \mathbb{E}[V_p(\mathbf u, f_\mathbf{s}(\boldsymbol{\sigma}))].
\]
Applying \Cref{lem:expansion} to $g$ and to $f_\mathbf{s}$ gives $\mathrm{ESW}_p(g) - \mathrm{ESW}_p(f_\mathbf{s}) \leq 2\varepsilon_n$. Taking $g$ to be optimal, $0 \leq \mathrm{OPT}_n - \mathrm{ESW}_p(f_\mathbf{s}) \leq 2\varepsilon_n = o(n^{-\frac{1}{2}})$. The same argument, using \Cref{lemma:p0} in place of the display above, gives $\mathrm{OPT}_n - \mathrm{ESW}_p(f_{\mathbf{s}^*(p)}) = o(n^{-\frac{1}{2}})$.

We now apply the argument at the start of \Cref{app:theorems} with $c_n = n^{\frac{1}{2}}$ and $f^\circ = f_{\mathbf{s}^*(p)}$. As $f_\mathbf{s}$ and $f_{\mathbf{s}^*(p)}$ are both within $o(n^{-\frac{1}{2}})$ of $\mathrm{OPT}_n$, $\Delta(f_\mathbf{s}) = 0$. By \Cref{thm:optimal_deterministic_appendix}, $\Delta(f_\mathbf{1}) = \Delta_p(\mathbf{1}) > 0$, since $\mathbf{1}$ is not a positive affine transformation of $\mathbf{s}^*(p)$. Hence $f_\mathbf{s}$ is asymptotically average-case optimal.

When $p = 1$, $\psi_1^{-1}(y) = 1 + y$ is affine, so the remainder $\rho_1$ in \Cref{lem:expansion} vanishes, $\varepsilon_n = 0$, and $f_\mathbf{s}$ maximizes $\mathrm{ESW}_1$ exactly for every $n$. In this case $G(\boldsymbol{\sigma}, a) = n\,\mathbb{E}[W^\text{PM}(\mathbf u, a, 1) \mid \boldsymbol{\sigma}] - n$, so $f^*$ is exactly the average-case optimal correspondence of \citet{boutilier2015optimal}.
\end{proof}

\subsection{What is the Optimal Rule?}
\label{app:subsec_optimal_rule_shape}

\begin{theorem}
\label{thm:optimal_deterministic_appendix}
Fix $p \in \mathbb{R}$, assume \Cref{a3} holds, and let $s_k^*(p) = \mathbb{E}[\psi_p(U_{(k)})]$ for $k = 1, \dots, m$. For every score vector $\mathbf{s} \in \mathbb{R}^m$ the limit
\[
\Delta_p(\mathbf{s}) = \lim_{n \to \infty} n^{\frac{1}{2}}\big(\mathrm{ESW}_p(f_{\mathbf{s}^*(p)}) - \mathrm{ESW}_p(f_\mathbf{s})\big)
\]
exists and is nonnegative, and $\Delta_p(\mathbf{s}) = 0$ if, and only if, $\mathbf{s} = \mathbf{s}^*(p)$. That is, up to positive affine transformation, $\mathbf{s}^*(p)$ is the unique asymptotically average-case optimal positional scoring rule.
\end{theorem}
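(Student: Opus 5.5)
Using \Cref{lem:expansion}, I will reduce the $n^{1/2}$-scaled welfare gap to a gap in the linear objective. For any score vector $\mathbf{s}$,
\[
n^{\frac12}\big(\mathrm{ESW}_p(f_{\mathbf{s}^*(p)}) - \mathrm{ESW}_p(f_\mathbf{s})\big) = \kappa_p\, n^{-\frac12}\Big(\mathbb{E}[V_p(\mathbf u, f_{\mathbf{s}^*(p)}(\boldsymbol\sigma))] - \mathbb{E}[V_p(\mathbf u, f_\mathbf{s}(\boldsymbol\sigma))]\Big) + o(1).
\]
It therefore suffices to show that $n^{-1/2}\,\mathbb{E}[V_p(\mathbf u, f_\mathbf{s}(\boldsymbol\sigma))]$, after subtracting the common $n\mu_\psi$ drift, converges, and to identify its limit.

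The first step conditions on $\boldsymbol\sigma$. By \Cref{lemma:p0} (voter $i$'s utility for $a$, given $\sigma_i$, is distributed as $U_{(\sigma_i(a))}$), the conditional objective is exactly the $\mathbf{s}^*(p)$-score:
\[
G(\boldsymbol\sigma,a) = \sum_i s^*_{\sigma_i(a)}(p) = S_{\mathbf{s}^*(p)}(a,\boldsymbol\sigma).
\]
Hence $\mathbb{E}[V_p(\mathbf u, f_\mathbf{s}(\boldsymbol\sigma))] = \mathbb{E}[S_{\mathbf{s}^*}(f_\mathbf{s}(\boldsymbol\sigma),\boldsymbol\sigma)]$. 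This already gives $f_{\mathbf{s}^*}$ pointwise maximal, so the limit is nonnegative once it exists.

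The second step is a CLT. Let $X_i\in\mathbb{R}^{m}\times\mathbb{R}^m$ be the vector $(s^*_{\sigma_i(a)})_a$ together with $(s_{\sigma_i(a)})_a$. These are i.i.d. with bounded support, since $m!$ rankings are equally likely by neutrality, so $n^{-1/2}\sum_i (X_i - \mathbb{E}X_i)\Rightarrow Z$, a centred Gaussian vector $Z = (Z^*, Z^s)$. By neutrality, $\mathbb{E}X_i$ is constant across alternatives, so the winner of $f_\mathbf{s}$ is $\arg\max_a Z^s_a$ in the limit. The quantity $n^{-1/2}(S_{\mathbf{s}^*}(f_\mathbf{s},\boldsymbol\sigma) - n\bar s^*)$ converges in distribution to $Z^*_{\arg\max_a Z^s_a}$. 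To upgrade this to convergence of expectations I will use uniform integrability, which holds because $|Z^*_{\cdot}|\le \max_a |\cdot|$ has bounded second moments. Ties contribute negligibly, since $Z^s$ has a nondegenerate continuous law whenever $\mathbf{s}$ is nonconstant; the constant case is $f_\mathbf 1$, whose limit is $\mathbb{E}[Z^*_a]=0$. This yields
\[
\Delta_p(\mathbf{s}) = \kappa_p\big(\mathbb{E}[\max_a Z^*_a] - \mathbb{E}[Z^*_{\arg\max_a Z^s_a}]\big) \ge 0.
\]

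The third step, which I expect to be the main obstacle, is uniqueness: $\Delta_p(\mathbf{s})=0$ forces $\mathbf{s}$ to be a positive affine image of $\mathbf{s}^*$. Equality requires $\arg\max Z^s = \arg\max Z^*$ almost surely. The covariance has a permutation-symmetric structure: the per-voter covariance of $(s^*_{\sigma(a)}, s_{\sigma(b)})$ depends only on whether $a=b$. Write the centred scores as linear maps of the common Gaussian driven by the rank-indicator vectors. The residual of $Z^s$ after projecting onto $Z^*$ (alternative-wise, with the symmetric structure) is then independent of $Z^*$ and nonzero unless the centred $\mathbf{s}$ is proportional to the centred $\mathbf{s}^*$. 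A nonzero independent Gaussian perturbation swaps the argmax with positive probability, on the event that the top two coordinates of $Z^*$ are close. This gives strict inequality. The sign of the proportionality must be positive, otherwise the argmax picks the minimizer. I will first try this decomposition via the regression coefficient $\beta = \mathrm{Cov}(s^*_{\sigma(a)}, s_{\sigma(a)})/\mathrm{Var}(s^*_{\sigma(a)})$, checking that the cross-alternative covariances also match under the same $\beta$ because the rows sum to constants. Finally, $s^*_k(p)$ is strictly decreasing in $k$, so $\mathbf{s}^*$ is nonconstant and $\Delta_p(\mathbf 1)>0$.
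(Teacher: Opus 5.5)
Your proposal is correct. Its first two steps match the paper: the reduction through \Cref{lem:expansion} and \Cref{lemma:p0}, then the CLT with a uniform-integrability upgrade. The only difference there is that you work with the $2m$-dimensional image $(\mathbf{Z}\mathbf{s}^*(p), \mathbf{Z}\mathbf{s})$ of the paper's count matrix rather than with $\mathbf{Z}^{(n)}$ itself. One small correction: $Z^s$ lives on the hyperplane $\sum_a z_a = 0$, so its law on $\mathbb{R}^m$ is degenerate. What you actually need for ties to have probability zero is $\mathrm{Var}(Z^s_a - Z^s_b) > 0$, and your own covariance computation gives this.

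The uniqueness step takes a genuinely different route. The paper chooses a direction $\mathbf{d}$ with $\sum_k d_k = 0$ and $\mathbf{s}\cdot\mathbf{d} < 0 < \mathbf{r}\cdot\mathbf{d}$. It builds an explicit $\mathbf{Z}_0 \in L$ at which the two rules elect different winners. It then concludes from continuity of $H_\mathbf{s}$ and the fact that $\mathbf{Z}$ has positive density on all of $L$, via Birkhoff--von Neumann.

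You instead compute the joint law. Write $c, c^*$ for the centred vectors and $M = \frac{1}{m-1}(I - \frac{1}{m}\mathbf{1}\mathbf{1}^\top)$. Then $\mathrm{Cov}(Z^*) = \|c^*\|^2 M$, $\mathrm{Cov}(Z^s) = \|c\|^2 M$ and $\mathrm{Cov}(Z^*, Z^s) = \langle c^*, c\rangle M$. So your residual $R = Z^s - \beta Z^*$ is independent of $Z^*$, with covariance $\big(\|c\|^2 - \langle c^*, c\rangle^2/\|c^*\|^2\big) M$. This has full support on the sum-zero hyperplane exactly when $c$ is not a multiple of $c^*$. Full support is the property your ``swap the argmax'' step really uses: it lets $R_b - R_{a'}$ exceed any bound for all $a' \neq b$ while $Z^*$ stays in a fixed bounded set. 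It works for every sign of $\beta$, not just on the event that the top two coordinates of $Z^*$ are close.

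Your route avoids the Birkhoff--von Neumann support argument and also yields more. The pair $(Z^*, Z^s)$ has the law of $(\|c^*\|\,Y, \|c\|(\cos\theta\, Y + \sin\theta\, Y'))$ with $Y, Y'$ i.i.d.\ $N(0, M)$. Hence, for non-constant $\mathbf{s}$, $\Delta_p(\mathbf{s})$ depends on $\mathbf{s}$ only through the angle $\theta$ between the centred $\mathbf{s}$ and the centred $\mathbf{s}^*(p)$. The paper's argument is shorter and coordinate-free. Both work for any non-constant reference vector, so either could be reused for the veto and plurality theorems.

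Finally, for the ``that is'' clause, state explicitly the following. Pointwise optimality of $\mathbf{s}^*(p)$ over all social choice functions, combined with \Cref{lem:expansion}, gives $\mathrm{OPT}_n - \mathrm{ESW}_p(f_{\mathbf{s}^*(p)}) = o(n^{-1/2})$. So the ratio in \Cref{def:opt} tends to $\Delta_p(\mathbf{s})/\Delta_p(\mathbf{1})$.
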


The proof uses the following lemma, which holds exactly for every $n$. For $p = 1$ it is the result of \citet{boutilier2015optimal}, which shows that the Borda rule is optimal when utilities are uniform.

\begin{lemma}
\label{lemma:p0}
Fix $p \in \mathbb{R}$, assume \Cref{a3} holds, and let $V_p$ be as in \Cref{lem:expansion} and $\mathbf{s}^*(p)$ as in \Cref{thm:optimal_deterministic}. For every $n$, every preference profile $\boldsymbol{\sigma}$ and every $a \in A$,
\[
\mathbb{E}[V_p(\mathbf{u}, a) \mid \boldsymbol{\sigma}] = S_{\mathbf{s}^*(p)}(a, \boldsymbol{\sigma}).
\]
Consequently, for every (possibly randomized) social choice function $f$,
\begin{multline*}
\mathbb{E}[V_p(\mathbf{u}, f(\boldsymbol{\sigma}))] = \mathbb{E}[S_{\mathbf{s}^*(p)}(f(\boldsymbol{\sigma}), \boldsymbol{\sigma})]
\leq \mathbb{E}[V_p(\mathbf{u}, f_{\mathbf{s}^*(p)}(\boldsymbol{\sigma}))].
\end{multline*}
\end{lemma}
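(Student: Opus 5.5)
The plan is to expand $V_p$ by linearity of conditional expectation and reduce everything to a single voter. Write
\[
\mathbb{E}[V_p(\mathbf{u}, a) \mid \boldsymbol{\sigma}] = \sum_{i=1}^n \mathbb{E}[\psi_p(u_i(a)) \mid \boldsymbol{\sigma}].
\]
Each term is finite by \Cref{a3}, since $\mathbb{E}[\psi_p(U)^2]<\infty$ gives integrability. By \Cref{a1} voters draw utility vectors independently, so $u_i$ is independent of $(\sigma_j)_{j\neq i}$. Hence conditioning on $\boldsymbol{\sigma}$ is the same as conditioning on $\sigma_i$, and each summand equals $\mathbb{E}[\psi_p(u_i(a)) \mid \sigma_i]$.

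The key step, which I expect to be the main obstacle, is to show that
\[
\mathbb{E}[\psi_p(u_i(a)) \mid \sigma_i = \sigma] = \mathbb{E}[\psi_p(U_{(\sigma(a))})]
\]
for every strict ranking $\sigma$. In words: conditional on a voter's ranking, the utility of the alternative ranked $k$-th is distributed as the unconditional $k$-th largest utility. Neutrality is what makes this work.
\begin{enumerate}
\item Let $\pi$ be any permutation of $A$. By \Cref{a1} the vector $(u_i(\pi(b)))_b$ has the same law as $(u_i(b))_b$.
\item Consequently the pair (ranking, sorted utility vector) has the same joint law after relabelling. The sorted vector $(U_{(1)},\dots,U_{(m)})$ is invariant under relabelling, while the ranking is permuted by $\pi$.
\item So the conditional law of the sorted vector given $\sigma_i=\sigma$ is the same for every $\sigma$. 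Each $\sigma$ has probability $1/m!>0$, so these conditionals are well defined.
\item Averaging over $\sigma$ recovers the unconditional law, so each conditional law equals the law of $(U_{(1)},\dots,U_{(m)})$.
\item Given $\sigma_i=\sigma$, we have $u_i(a)=U_{(\sigma(a))}$ exactly, by the definition of the descending order statistics and \Cref{a2}.
\end{enumerate}
Applying $\psi_p$ and taking expectations gives the claim. This argument also justifies the parenthetical use of the same fact in \Cref{cor:nash_exact}.

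Summing over voters then gives
\[
\sum_i s^*_{\sigma_i(a)}(p) = S_{\mathbf{s}^*(p)}(a,\boldsymbol{\sigma}).
\]

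For the consequence, take any $f$ whose randomness is independent of $\mathbf{u}$ given $\boldsymbol{\sigma}$. Condition on $\boldsymbol{\sigma}$ and on that randomness; the winner $\omega=f(\boldsymbol{\sigma})$ is then fixed. The tower property and the first part give
\[
\mathbb{E}[V_p(\mathbf u,\omega)]=\mathbb{E}[S_{\mathbf{s}^*(p)}(\omega,\boldsymbol{\sigma})].
\]
Pointwise in $\boldsymbol{\sigma}$, this is at most $\max_a S_{\mathbf{s}^*(p)}(a,\boldsymbol{\sigma})$. That maximum is exactly the score of every alternative that $f_{\mathbf{s}^*(p)}$ can select under uniform tie-breaking. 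Applying the same identity to $f_{\mathbf{s}^*(p)}$ yields the stated inequality.
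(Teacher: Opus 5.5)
Your proposal is correct and follows essentially the same route as the paper. Neutrality makes voter $i$'s ranking $\sigma_i$ independent of their sorted utilities: you phrase this as the conditional law of the sorted vector being the same for every ranking, the paper as the ranking being uniform given the sorted vector. Voter independence then reduces conditioning on $\boldsymbol{\sigma}$ to conditioning on $\sigma_i$, and the consequence follows by conditioning on $\boldsymbol{\sigma}$ and the rule's randomness and bounding by $\max_a S_{\mathbf{s}^*(p)}(a,\boldsymbol{\sigma})$. Your choice to condition only on the positive-probability events $\{\sigma_i=\sigma\}$ is, if anything, slightly cleaner, since it avoids conditioning on the continuous sorted vector.
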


\begin{proof}
Fix voter $i$. By \Cref{a1}, relabelling the alternatives by any permutation $\pi$ does not change the distribution of voter $i$'s utility vector. Relabelling leaves the sorted utilities $(U_{(1)}, \dots, U_{(m)})$ unchanged and relabels the ranking $\sigma_i$, which is strict by \Cref{a2}. Hence, given the sorted utilities, every ranking is equally likely, so $\sigma_i$ is uniform and independent of $(U_{(1)}, \dots, U_{(m)})$. As $u_i(a) = U_{(\sigma_i(a))}$, conditioning on $\sigma_i$ only reveals which order statistic $u_i(a)$ is: if $\sigma_i(a) = k$ then $u_i(a)$ is distributed as $U_{(k)}$. Voters are independent (\Cref{a1}), so this also holds conditional on $\boldsymbol{\sigma}$, and the utilities of different voters remain independent given $\boldsymbol{\sigma}$. Therefore
\[
\mathbb{E}[V_p(\mathbf{u}, a) \mid \boldsymbol{\sigma}]
= \sum_{i=1}^n \mathbb{E}\big[\psi_p(U_{(\sigma_i(a))})\big]
= \sum_{i=1}^n s^*_{\sigma_i(a)}(p)
= S_{\mathbf{s}^*(p)}(a, \boldsymbol{\sigma}),
\]
where each term is finite by \Cref{a3}.

For the second claim, $f(\boldsymbol{\sigma})$ depends only on $\boldsymbol{\sigma}$ and on randomness independent of $\mathbf{u}$ given $\boldsymbol{\sigma}$. Applying the identity at $a = f(\boldsymbol{\sigma})$ and averaging gives $\mathbb{E}[V_p(\mathbf{u}, f(\boldsymbol{\sigma}))] = \mathbb{E}[S_{\mathbf{s}^*(p)}(f(\boldsymbol{\sigma}), \boldsymbol{\sigma})]$. This is at most $\mathbb{E}[\max_{a \in A} S_{\mathbf{s}^*(p)}(a, \boldsymbol{\sigma})]$, which $f_{\mathbf{s}^*(p)}$ attains.
\end{proof}

\begin{proof}[Proof of \Cref{thm:optimal_deterministic_appendix}]
For a preference profile $\boldsymbol{\sigma}$ of $n$ voters, let $\mathbf{Z}^{(n)}$ be the $m \times m$ matrix with entries
\[
\mathbf{Z}^{(n)}_{a,k} = n^{-\frac{1}{2}}\Big(\big|\{i : \sigma_i(a) = k\}\big| - \frac{n}{m}\Big),
\]
the centered and scaled number of voters ranking $a$ in position $k$. Every row and column of $\mathbf{Z}^{(n)}$ sums to $0$, so $\mathbf{Z}^{(n)}$ lies in the subspace $L$ of such matrices. For every score vector $\mathbf{s}$,
\begin{equation}
\label{eq:score_counts}
S_\mathbf{s}(a, \boldsymbol{\sigma}) = \frac{n}{m}\sum_{k} s_k + n^{\frac{1}{2}}\,(\mathbf{Z}^{(n)}\mathbf{s})_a,
\end{equation}
so the winners of $f_\mathbf{s}$ are $A_\mathbf{s}(\mathbf{Z}^{(n)})$, where $A_\mathbf{s}(\mathbf{Z}) = \operatorname*{arg\,max}_{a \in A}\, (\mathbf{Z}\mathbf{s})_a$.

Fix a non-constant \emph{reference vector} $\mathbf{r} \in \mathbb{R}^m$ and, for $\mathbf{Z} \in L$, let
\[
H_\mathbf{s}(\mathbf{Z}) = \max_{a \in A}\, (\mathbf{Z}\mathbf{r})_a - \frac{1}{|A_\mathbf{s}(\mathbf{Z})|}\sum_{a \in A_\mathbf{s}(\mathbf{Z})} (\mathbf{Z}\mathbf{r})_a \;\geq\; 0.
\]
By \eqref{eq:score_counts}, averaging over uniformly random tie-breaking,
\begin{equation}
\label{eq:score_loss}
\max_{a \in A} S_\mathbf{r}(a, \boldsymbol{\sigma}) - \mathbb{E}\big[S_\mathbf{r}(f_\mathbf{s}(\boldsymbol{\sigma}), \boldsymbol{\sigma}) \mid \boldsymbol{\sigma}\big] = n^{\frac{1}{2}}\, H_\mathbf{s}(\mathbf{Z}^{(n)}).
\end{equation}
So $H_\mathbf{s}$ measures how far the winner of $f_\mathbf{s}$ falls short of the highest $\mathbf{r}$-score, at scale $n^{\frac{1}{2}}$. Steps 1--3 hold for every non-constant $\mathbf{r}$. Step 4 takes $\mathbf{r} = \mathbf{s}^*(p)$.

\begin{enumerate}
    \item \textbf{Limit of the count matrix.} As shown in the proof of \Cref{lemma:p0}, each ranking $\sigma_i$ is uniform over all $m!$ rankings, and rankings are independent across voters (\Cref{a1}). So $\mathbf{Z}^{(n)}$ is a scaled sum of $n$ i.i.d.\ uniformly random permutation matrices, each centered by subtracting $\frac{1}{m}\mathbf{1}\mathbf{1}^\top$. By the multivariate Central Limit Theorem, $\mathbf{Z}^{(n)}$ converges in distribution to a centered Gaussian $\mathbf{Z}$ on $L$. Permutation matrices affinely span all matrices with row and column sums equal to $1$ (Birkhoff--von Neumann), so $\mathbf{Z}$ has a positive density on $L$.

    \item \textbf{Convergence of expectations.} If $\mathbf{s}$ is constant, then $A_\mathbf{s}(\mathbf{Z}) = A$ and $H_\mathbf{s}$ is continuous. Otherwise $H_\mathbf{s}$ is continuous wherever $A_\mathbf{s}(\mathbf{Z})$ is a single alternative. The remaining $\mathbf{Z}$, with $(\mathbf{Z}\mathbf{s})_a = (\mathbf{Z}\mathbf{s})_b$ for some $a \neq b$, form finitely many proper subspaces of $L$ and have probability $0$ under $\mathbf{Z}$. These subspaces are proper because, taking $s_k \neq s_l$, the matrix with row $a$ equal to $\mathbf{e}_k - \mathbf{e}_l$, row $b$ equal to $\mathbf{e}_l - \mathbf{e}_k$ and all other rows $0$ lies in $L$ and separates $a$ from $b$. By the continuous mapping theorem, $H_\mathbf{s}(\mathbf{Z}^{(n)}) \to H_\mathbf{s}(\mathbf{Z})$ in distribution. Moreover $H_\mathbf{s}(\mathbf{Z}^{(n)}) \leq 2 \max_{k}|r_k| \sum_{a,k}|\mathbf{Z}^{(n)}_{a,k}|$, whose second moment is bounded in $n$. So $(H_\mathbf{s}(\mathbf{Z}^{(n)}))_n$ is uniformly integrable and $\mathbb{E}[H_\mathbf{s}(\mathbf{Z}^{(n)})] \to \mathbb{E}[H_\mathbf{s}(\mathbf{Z})]$.

    \item \textbf{When is the limit zero?} We show $\mathbb{E}[H_\mathbf{s}(\mathbf{Z})] \geq 0$, with equality if, and only if, $\mathbf{s} = \alpha\,\mathbf{r} + \beta\mathbf{1}$ for some $\alpha > 0$ and $\beta \in \mathbb{R}$. If $\mathbf{s}$ is of this form, then $A_\mathbf{s} = A_\mathbf{r}$, so $H_\mathbf{s} = 0$. Otherwise we show $H_\mathbf{s} > 0$ with positive probability.
    \begin{itemize}
        \item If $\mathbf{s}$ is constant, then $A_\mathbf{s}(\mathbf{Z}) = A$. As the columns of $\mathbf{Z}$ sum to $0$, $\sum_{a}(\mathbf{Z}\mathbf{r})_a = 0$, so $H_\mathbf{s}(\mathbf{Z}) = \max_a (\mathbf{Z}\mathbf{r})_a$. This is positive unless $\mathbf{Z}\mathbf{r} = \mathbf{0}$, an event of probability $0$.
        \item Otherwise, the projections of $\mathbf{s}$ and $\mathbf{r}$ onto $\{\mathbf{d} : \sum_k d_k = 0\}$ are nonzero and not positive multiples of one another. So there is a $\mathbf{d}$ with $\sum_k d_k = 0$ and $\mathbf{s} \cdot \mathbf{d} < 0 < \mathbf{r} \cdot \mathbf{d}$. Fix $a \neq b$ and let $\mathbf{Z}_0 \in L$ have row $a$ equal to $\mathbf{d}$, row $b$ equal to $-\mathbf{d}$ and all other rows $0$. Then $A_\mathbf{r}(\mathbf{Z}_0) = \{a\}$ and $A_\mathbf{s}(\mathbf{Z}_0) = \{b\}$, so $H_\mathbf{s}(\mathbf{Z}_0) = 2\, \mathbf{r} \cdot \mathbf{d} > 0$. By continuity, $H_\mathbf{s} > 0$ on a neighbourhood of $\mathbf{Z}_0$ in $L$, which has positive probability under $\mathbf{Z}$.
    \end{itemize}

    \item \textbf{The limit $\Delta_p(\mathbf{s})$.} Take $\mathbf{r} = \mathbf{s}^*(p)$, which is not constant as $\psi_p$ is strictly increasing and $U_{(1)} > U_{(m)}$ by \Cref{a2}. By \Cref{lemma:p0} and \eqref{eq:score_loss},
    \[
    \mathbb{E}[V_p(\mathbf{u}, f_{\mathbf{s}^*(p)}(\boldsymbol{\sigma}))] - \mathbb{E}[V_p(\mathbf{u}, f_\mathbf{s}(\boldsymbol{\sigma}))] = n^{\frac{1}{2}}\,\mathbb{E}[H_\mathbf{s}(\mathbf{Z}^{(n)})].
    \]
    By \Cref{lem:expansion}, the difference in $\mathrm{ESW}_p$ is $\frac{\kappa_p}{n}$ times this difference, up to an error of $2\varepsilon_n = o(n^{-\frac{1}{2}})$. So by step 2,
    \[
    \Delta_p(\mathbf{s}) = \lim_{n\to\infty}\kappa_p\, \mathbb{E}[H_\mathbf{s}(\mathbf{Z}^{(n)})] = \kappa_p\, \mathbb{E}[H_\mathbf{s}(\mathbf{Z})].
    \]
    As $\kappa_p > 0$, step 3 shows this is nonnegative, with equality if, and only if, $\mathbf{s}$ is a positive affine transformation of $\mathbf{s}^*(p)$.

    \item \textbf{The ratio in \Cref{def:opt}.} For every social choice function $g$, \Cref{lemma:p0} gives $\mathbb{E}[V_p(\mathbf{u}, g(\boldsymbol{\sigma}))] \leq \mathbb{E}[V_p(\mathbf{u}, f_{\mathbf{s}^*(p)}(\boldsymbol{\sigma}))]$. By \Cref{lem:expansion} this gives $\mathrm{ESW}_p(g) - \mathrm{ESW}_p(f_{\mathbf{s}^*(p)}) \leq 2\varepsilon_n$, so $\mathrm{OPT}_n - \mathrm{ESW}_p(f_{\mathbf{s}^*(p)}) = o(n^{-\frac{1}{2}})$. We apply the argument at the start of \Cref{app:theorems} with $c_n = n^{\frac{1}{2}}$ and $f^\circ = f_{\mathbf{s}^*(p)}$. This gives $\Delta(f_\mathbf{s}) = \Delta_p(\mathbf{s})$, and $\Delta(f_\mathbf{1}) = \Delta_p(\mathbf{1}) > 0$ by step 4, as $\mathbf{1}$ is constant. So the ratio converges to $\Delta_p(\mathbf{s}) / \Delta_p(\mathbf{1})$, which is $0$ if, and only if, $\mathbf{s}$ is a positive affine transformation of $\mathbf{s}^*(p)$.
\end{enumerate}
\end{proof}

\clearpage


\subsection{Exact Egalitarian Social Welfare}
\label{app:egal_exact}

Throughout this section and the next we make a new assumption, slightly stronger than \Cref{a1}, which aligns our model with \citet{apesteguia2011justice}.


So each utility is an independent draw with CDF $F$, which is continuous by \Cref{a2}. Let $\ell = \inf \operatorname{supp}(F)$ denote the lowest possible utility. Write $\mathrm{ESW}_{-\infty}(f) = \mathbb{E}[\min_{i} u_i(f(\boldsymbol{\sigma}))]$ for the expected egalitarian social welfare of a social choice function $f$, and take $\mathrm{OPT}_n$ in \Cref{def:opt} with respect to $\mathrm{ESW}_{-\infty}$.

We also require that $F$ has a continuous and positive density near its lowest value. This holds, for example, for the uniform distribution on any interval and for the exponential distribution.

\begin{itemize}
    \asmitem{A4}{a4} For some $\delta > 0$, $F$ is continuously differentiable on $[\ell, \ell + \delta]$, with $\lambda := F'(\ell) > 0$.
\end{itemize}

\begin{theorem}
\label{thm:egal_sw_direct}
Assume \Cref{a1plus}, \Cref{a4}, and $\mathbb{E}[U] < \infty$, and denote the veto vector as $\mathbf{s}^\text{v} = (1, \dots, 1, 0)$. For every score vector $\mathbf{s} \in \mathbb{R}^m$ the limit
\[
\Delta_{-\infty}(\mathbf{s}) = \lim_{n \to \infty} n^{\frac32}\big(\mathrm{ESW}_{-\infty}(f_{\mathbf{s}^\text{v}}) - \mathrm{ESW}_{-\infty}(f_\mathbf{s})\big)
\]
exists and is nonnegative, and $\Delta_{-\infty}(\mathbf{s}) = 0$ if, and only if, $\mathbf{s} = \mathbf{s}^\text{v}$. That is, veto is the unique asymptotically average-case optimal positional scoring rule for egalitarian social welfare.
\end{theorem}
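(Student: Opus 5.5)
The plan is to reduce the egalitarian case to the same Gaussian-limit machinery used in the proof of \Cref{thm:optimal_deterministic_appendix}, with reference vector $\mathbf{r} = \mathbf{s}^\text{v}$ and scaling $n^{\frac32}$. The reduction rests on a second-order expansion of the conditional expected minimum. Under \Cref{a1plus} the argument in the proof of \Cref{lemma:p0} applies. Given $\boldsymbol{\sigma}$, the utilities $u_i(a)$ are independent, and $u_i(a)$ is distributed as $U_{(\sigma_i(a))}$. Hence
\[
\mathbb{E}\big[\min_i u_i(a) \mid \boldsymbol{\sigma}\big] = \ell + \int_0^\infty \prod_{k=1}^m \Pr\big(U_{(k)} > \ell + x\big)^{N_{a,k}}\,dx ,
\]
where $N_{a,k} = |\{i : \sigma_i(a) = k\}|$.

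By \Cref{a4}, $\Pr(U_{(m)} \le \ell + x) = m\lambda x + O(x^2)$, while for $k<m$ we have $\Pr(U_{(k)} \le \ell + x) = O(x^{m-k+1})$. So near $\ell$ only last-place rankings matter to first order. On the relevant scale $x \asymp 1/n$ the integrand is $\exp(-m\lambda N_{a,m}x)\,(1 + O(n x^2))$. The first step is therefore to establish
\[
\mathbb{E}\big[\min_i u_i(a) \mid \boldsymbol{\sigma}\big] = \ell + \frac{1}{m\lambda N_{a,m}} + \frac{c_n}{n^2} + R_n(a,\boldsymbol{\sigma}),
\]
where $c_n$ does not depend on $a$ and $R_n$ is $o(n^{-3/2})$ in expectation. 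The $O(x^2)$ terms coming from $U_{(m)}$ and from $U_{(m-1)}$ contribute $O(n^{-2})$. Their dependence on $a$ enters only through fluctuations of the counts, which are $O(n^{-5/2})$. The tail $x \ge \delta$ is handled separately: it is exponentially small in $N_{a,m}$, and it is integrable because $\mathbb{E}[U]<\infty$.

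Next, I would write $N_{a,m} = \frac{n}{m} + n^{\frac12}\mathbf{Z}^{(n)}_{a,m}$ and expand:
\[
\frac{1}{m\lambda N_{a,m}} = \frac{1}{\lambda n} - \frac{m}{\lambda}\, n^{-\frac32}\, \mathbf{Z}^{(n)}_{a,m} + O\big(n^{-2}(\mathbf{Z}^{(n)}_{a,m})^2\big).
\]
Since $(\mathbf{Z}\mathbf{s}^\text{v})_a = -\mathbf{Z}_{a,m}$ (rows of $\mathbf{Z}$ sum to zero), the $a$-dependent leading term is $\frac{m}{\lambda} n^{-\frac32} (\mathbf{Z}^{(n)}\mathbf{s}^\text{v})_a$. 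Exactly as in \eqref{eq:score_loss}, this gives
\[
n^{\frac32}\big(\mathrm{ESW}_{-\infty}(f_{\mathbf{s}^\text{v}}) - \mathrm{ESW}_{-\infty}(f_\mathbf{s})\big) = \frac{m}{\lambda}\,\mathbb{E}\big[H_\mathbf{s}(\mathbf{Z}^{(n)})\big] + o(1),
\]
with $\mathbf{r} = \mathbf{s}^\text{v}$.

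Steps 1--3 of the proof of \Cref{thm:optimal_deterministic_appendix} hold for any non-constant $\mathbf{r}$, so they give convergence to $\frac{m}{\lambda}\mathbb{E}[H_\mathbf{s}(\mathbf{Z})] \ge 0$. They also give equality exactly when $\mathbf{s}$ is a positive affine transform of $\mathbf{s}^\text{v}$. The same expansion shows that the profile-wise optimal rule picks, up to $o(n^{-3/2})$, the alternative with the fewest last places. Hence $\mathrm{OPT}_n - \mathrm{ESW}_{-\infty}(f_{\mathbf{s}^\text{v}}) = o(n^{-\frac32})$, and the argument at the start of \Cref{app:theorems} with $c_n = n^{\frac32}$ yields the asymptotic-optimality statement.

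The main obstacle is making the remainder $R_n$ uniformly $o(n^{-3/2})$ after averaging over $\boldsymbol{\sigma}$, including the quadratic Taylor term in $\mathbf{Z}^{(n)}$. The difficulty is atypical profiles where $N_{a,m}$ is small, so that the $1/N_{a,m}$ expansion breaks down. There I would use a Chernoff bound: $\Pr(N_{a,m} < n/(2m))$ is exponentially small. On that event I would bound the conditional minimum crudely by $\mathbb{E}[U_{(1)}] \le m\,\mathbb{E}[U] < \infty$. On the complement, the Taylor error is controlled by $\mathbb{E}[(\mathbf{Z}^{(n)}_{a,m})^2] = O(1)$. The same uniform-integrability bound as in step 2 then justifies passing from $\mathbf{Z}^{(n)}$ to $\mathbf{Z}$.
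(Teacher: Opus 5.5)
Your proposal is essentially the paper's argument in different packaging, and it reaches the same limit $\frac{m}{\lambda}\mathbb{E}[H_\mathbf{s}(\mathbf{Z})]$. Split your product formula into the last-place factor and the rest. The last-place factor integrates to exactly $\varphi(N_{a,m}) - \ell$, with $\varphi$ as in \Cref{lem:egal_bound}(b), which is the conditional mean of the paper's last-place welfare $L_a$. Your $O(n^{-2})$ bound for the remaining factors is the paper's bound $\mathbb{E}[L_a - M_a \mid \boldsymbol{\sigma}] \le C'n^{-2}$. Your Taylor expansion of $1/(m\lambda N_{a,m})$ plays the role of \Cref{lem:egal_gap}. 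Where you deviate is precisely where the paper is deliberately careful, and your version does not hold under the stated hypothesis.

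\Cref{a4} only makes $F$ continuously differentiable near $\ell$. So $\Pr(U_{(m)} \le \ell + x) = m\lambda x + o(x)$, not $m\lambda x + O(x^2)$. For example, $F(\ell + x) = \lambda x + x/\log(1/x)$ for small $x$ satisfies \Cref{a4}. In that case $\varphi(N) - \ell - \frac{1}{m\lambda N}$ is of order $1/(N\log N)$, far larger than $n^{-3/2}$. That term can neither be absorbed into an $O(n^{-2})$ error nor shown to vary across alternatives by only $O(n^{-5/2})$; in this example its variation is of order $1/(n^{3/2}\log n)$.

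What is true, and suffices, is that the variation across alternatives is $o(n^{-3/2})$ in expectation. To see this you must work with differences, as the paper does. Substitute $u = G(x) := \Pr(U_{(m)} \le \ell + x)$; $G'$ is continuous with $G'(0) = m\lambda$. Then use that $(1-u)^{N} - (1-u)^{N'} \ge 0$ for $N \le N'$. This gives $\varphi(N) - \varphi(N') = \frac{1+o(1)}{m\lambda}\big(\frac{1}{N+1} - \frac{1}{N'+1}\big)$ up to exponentially small terms, uniformly for $n/(2m) \le N \le N'$, which is the content of \Cref{lem:egal_gap}.

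Keeping $\varphi(N_{a,m})$ exact, rather than replacing it by $\ell + \frac{1}{m\lambda N_{a,m}}$, also gives the optimality bound for free. Since $\varphi$ is monotone, veto maximizes the leading term exactly at every profile, so $\mathrm{OPT}_n - \mathrm{ESW}_{-\infty}(f_{\mathbf{s}^\text{v}}) = O(n^{-2})$. Finally, the corrections from voters ranking $a$ above last position are $O(n^{-2})$ in total, which is already $o(n^{-3/2})$, so your separate fluctuation argument for them is unnecessary. With this repair the proof goes through under \Cref{a1plus}, \Cref{a4} and $\mathbb{E}[U] < \infty$.
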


\begin{proof}
Fix a preference profile $\boldsymbol{\sigma}$. For each $a \in A$ let $K_a$ be the number of voters who rank $a$ last, $M_a = \min_i u_i(a)$ the egalitarian social welfare of $a$, and $L_a$ its last-place welfare (\Cref{lem:egal_bound}). Write $f_\text{v} = f_{\mathbf{s}^\text{v}}$, $K_{\min} = \min_a K_a$ and $b = f_\mathbf{s}(\boldsymbol{\sigma})$. As $S_{\mathbf{s}^\text{v}}(a, \boldsymbol{\sigma}) = n - K_a$, veto elects an alternative ranked last least often. The proof shows that egalitarian welfare is almost exactly last-place welfare, which veto maximizes, and that any other rule loses welfare in proportion to the extra last places of its winner.

\medskip \noindent
\textbf{Step 1: veto maximizes expected last-place welfare.}
By \Cref{lem:egal_bound}(b), $\mathbb{E}[L_a \mid \boldsymbol{\sigma}] = \varphi(K_a)$ decreases with the number of voters ranking $a$ last. Veto elects an alternative with the fewest such voters, so for any social choice function $g$,
\[
\mathbb{E}[L_{g(\boldsymbol{\sigma})} \mid \boldsymbol{\sigma}] \leq \varphi(K_{\min}) = \mathbb{E}[L_{f_\text{v}(\boldsymbol{\sigma})} \mid \boldsymbol{\sigma}].
\]

\medskip \noindent
\textbf{Step 2: last-place welfare approximates egalitarian welfare.}
Typically every alternative is ranked last by about $\frac{n}{m}$ voters. Let $B$ be the event that some alternative is ranked last by fewer than $\frac{n}{2m}$ voters. By Hoeffding's inequality, $B$ has exponentially small probability, and as expected welfare is bounded, profiles in $B$ can be ignored. Outside $B$, \Cref{lem:egal_bound}(c) shows that $L_a$ and $M_a$ differ in expectation by at most $C' n^{-2}$ for every alternative. Combined with Step 1, no social choice function has expected egalitarian welfare more than $O(n^{-2})$ above that of veto, that is,
\begin{equation}
\label{eq:egal_opt}
0 \leq \mathrm{OPT}_n - \mathrm{ESW}_{-\infty}(f_\text{v}) = O(n^{-2}).
\end{equation}
For a positional scoring rule $f_\mathbf{s}$, the same bounds give
\begin{multline}
\label{eq:egal_total}
n^{\frac32}\big(\mathrm{ESW}_{-\infty}(f_\text{v}) - \mathrm{ESW}_{-\infty}(f_\mathbf{s})\big) \\ = n^{\frac32}\,\mathbb{E}\big[(\varphi(K_{\min}) - \varphi(K_b))\mathbf{1}_{B^c}\big] + o(1),
\end{multline}
so it remains to compare expected last-place welfare.

\medskip \noindent
\textbf{Step 3: the cost of electing an alternative ranked last too often.}
Electing $b$ instead of the veto winner means electing an alternative ranked last $K_b - K_{\min}$ more times. Outside $B$, both are close to $\frac{n}{m}$, so by \Cref{lem:egal_gap} this costs
\[
\varphi(K_{\min}) - \varphi(K_b) \approx \frac{m\,(K_b - K_{\min})}{\lambda n^2}.
\]
The excess $K_b - K_{\min}$ is of order $n^{\frac12}$, the scale at which the counts $K_a$ fluctuate. Let $\mathbf{Z}^{(n)}$, $\mathbf{Z}$ and $H_\mathbf{s}$ be as in the proof of \Cref{thm:optimal_deterministic_appendix} with reference vector $\mathbf{r} = \mathbf{s}^\text{v}$. As $S_{\mathbf{s}^\text{v}}(a, \boldsymbol{\sigma}) = n - K_a$, \eqref{eq:score_loss} gives $\mathbb{E}[K_b - K_{\min} \mid \boldsymbol{\sigma}] = n^{\frac12} H_\mathbf{s}(\mathbf{Z}^{(n)})$. By steps 1 and 2 of that proof, $H_\mathbf{s}(\mathbf{Z}^{(n)})$ is uniformly integrable and $\mathbb{E}[H_\mathbf{s}(\mathbf{Z}^{(n)})] \to \mathbb{E}[H_\mathbf{s}(\mathbf{Z})]$. The approximations become exact in the limit, so by \eqref{eq:egal_total} the limit exists and
\[
\Delta_{-\infty}(\mathbf{s}) = \frac{m}{\lambda}\, \mathbb{E}[H_\mathbf{s}(\mathbf{Z})].
\]
So the welfare a positional scoring rule loses is proportional to how many extra last places, on average, its winner has.

\medskip \noindent
\textbf{Step 4: veto minimizes welfare loss.}
By step 3 of the proof of \Cref{thm:optimal_deterministic_appendix} with $\mathbf{r} = \mathbf{s}^\text{v}$, $\Delta_{-\infty}(\mathbf{s}) \geq 0$, with equality if, and only if, $\mathbf{s} = \alpha\,\mathbf{s}^\text{v} + \beta\mathbf{1}$ for some $\alpha > 0$ and $\beta \in \mathbb{R}$. In particular $\Delta_{-\infty}(\mathbf{1}) > 0$. By \eqref{eq:egal_opt}, $n^{\frac32}(\mathrm{OPT}_n - \mathrm{ESW}_{-\infty}(f_\text{v})) \to 0$, so the argument at the start of \Cref{app:theorems} applies with $c_n = n^{\frac32}$ and $f^\circ = f_\text{v}$. Hence veto is, up to positive affine transformation, the unique asymptotically average-case optimal positional scoring rule. The factor $\frac{m}{\lambda}$ cancels in the ratio, so its limit, $\mathbb{E}[H_\mathbf{s}(\mathbf{Z})] / \mathbb{E}[H_\mathbf{1}(\mathbf{Z})]$, depends only on $m$ and $\mathbf{s}$ and not on $F$.
\end{proof}

\begin{lemma}[Last-place welfare]
\label{lem:egal_bound}
Assume \Cref{a1plus}, \Cref{a4} and $\mathbb{E}[U] < \infty$. Fix a preference profile $\boldsymbol{\sigma}$ and $a \in A$. Let $K_a$ be the number of voters who rank $a$ last, $M_a := \min_{i} u_i(a)$, and let $L_a := \min_{i \,:\, \sigma_i(a) = m} u_i(a)$ be the \emph{last-place welfare} of $a$, with $L_a = +\infty$ if $K_a = 0$. Then
\begin{enumerate}
    \item[(a)] $M_a \le L_a$;
    \item[(b)] $\Pr(L_a > x \mid \boldsymbol{\sigma}) = (1-F(x))^{mK_a}$ for every $x$, so $\mathbb{E}[L_a \mid \boldsymbol{\sigma}] = \varphi(K_a)$, where $\varphi(k) := \int_0^\infty (1-F(x))^{mk}\,dx$ is nonincreasing in $k$; and
    \item[(c)] there is a constant $C'$, depending only on $F$ and $m$, such that $\mathbb{E}[L_a - M_a \mid \boldsymbol{\sigma}] \leq C' n^{-2}$ whenever $K_a \geq \frac{n}{2m}$.
\end{enumerate}
\end{lemma}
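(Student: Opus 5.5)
We prove the three parts in order. Throughout we use the conditional structure established in the proof of \Cref{lemma:p0}, specialised to \Cref{a1plus}. Conditional on $\boldsymbol{\sigma}$, voters remain independent. For each voter, the utility vector is $m$ i.i.d.\ draws from $F$ assigned to alternatives according to $\sigma_i$. Hence $u_i(a)$ given $\sigma_i(a) = k$ is distributed as $U_{(k)}$. In particular, if $\sigma_i(a) = m$ then $u_i(a)$ is the minimum of $m$ i.i.d.\ draws.

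Part (a) is immediate, since $L_a$ is a minimum over a subset of the voters appearing in $M_a$. When $K_a = 0$ we have $L_a = +\infty$, and the inequality is trivial.

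For part (b), $L_a$ given $\boldsymbol{\sigma}$ is the minimum of $K_a$ independent copies of $U_{(m)}$. We have $\Pr(U_{(m)} > x) = (1-F(x))^m$, so independence gives $\Pr(L_a > x \mid \boldsymbol{\sigma}) = (1-F(x))^{mK_a}$. Utilities are positive, so the tail-integral formula gives $\mathbb{E}[L_a \mid \boldsymbol{\sigma}] = \int_0^\infty (1-F(x))^{mK_a}\,dx = \varphi(K_a)$. This is finite for $K_a \ge 1$ because the integrand is dominated by $1-F(x)$ and $\mathbb{E}[U] < \infty$. Monotonicity in $k$ holds because $0 \le 1-F \le 1$.

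Part (c) is the real content. Write $M_a = \min(L_a, N_a)$, where $N_a$ is the minimum of $u_i(a)$ over the $n - K_a$ voters with $\sigma_i(a) < m$; conditionally, $N_a$ is independent of $L_a$. Then
\[
\mathbb{E}[L_a - M_a \mid \boldsymbol{\sigma}] = \mathbb{E}[(L_a - N_a)^+ \mid \boldsymbol{\sigma}] = \int_0^\infty \Pr(N_a < x < L_a \mid \boldsymbol{\sigma})\,dx = \int \Pr(N_a < x)\,(1-F(x))^{mK_a}\,dx .
\]
The key observation is that a voter's $k$-th ranked utility with $k < m$ is at least the second-smallest of $m$ draws. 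Hence $\Pr(u_i(a) < x \mid \sigma_i(a) = k) \le \binom{m}{2} F(x)^2$, and a union bound gives $\Pr(N_a < x) \le n\binom{m}{2} F(x)^2$.

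We split the integral at $x = \ell + \delta$, using $K_a \ge n/(2m)$.
\begin{itemize}
\item \emph{Near $\ell$.} By \Cref{a4}, $F(x) \le c_1(x-\ell)$ and $1-F(x) \le e^{-c_2 (x-\ell)}$ for constants $c_1, c_2 > 0$, since $F(x) \ge (\lambda/2)(x-\ell)$ on a possibly smaller interval. The contribution is therefore at most $n\binom{m}{2}c_1^2 \int_0^\infty t^2 e^{-c_2 n t/2}\,dt = O(n \cdot n^{-3}) = O(n^{-2})$.
\item \emph{Away from $\ell$.} Here $1-F(x) \le 1-F(\ell+\delta) =: \theta < 1$, and $\Pr(N_a < x) \le 1$. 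The contribution is at most $\int (1-F(x))^{mK_a}\,dx \le \theta^{mK_a - 1}\,\mathbb{E}[U]$, which is exponentially small in $n$.
\end{itemize}
Together these give the bound $C' n^{-2}$, with $C'$ depending only on $F$ and $m$.

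The main obstacle is the near-$\ell$ estimate. The point is that the non-last voters contribute only through $F(x)^2$, not $F(x)$; a first-order bound would give only $O(n^{-1})$. Care is also needed to shrink $\delta$ so that the linear upper and lower bounds on $F$ from \Cref{a4} both hold.
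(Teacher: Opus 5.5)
Your proof is correct and follows essentially the same route as the paper: a subset argument for (a), the ``minimum of $mK_a$ draws from $F$'' computation for (b), and for (c) the product of the last-place tail $(1-F(x))^{mK_a}$ with the union bound $n\binom{m}{2}F(x)^2$, which comes from observing that a non-last voter's utility for $a$ is at least their second-smallest draw. The differences are cosmetic: you bound the near-$\ell$ integral with linear and exponential estimates in $x$ instead of substituting $u = F(x)$ to get a Beta integral, and you bound the far tail explicitly through $\mathbb{E}[U] < \infty$, where the paper only says it is exponentially small.
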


Intuitively, (b) says that the more voters rank $a$ last, the lower its expected last-place welfare. Part (c) says that the lowest utility given to $a$ almost always comes from a voter who ranks $a$ last.

\begin{proof}
Condition on $\boldsymbol{\sigma}$ throughout. Utilities of different voters remain independent. By \Cref{lemma:p0}, a voter who ranks $a$ in position $k$ gives $a$ a utility distributed as $U_{(k)}$, which by \Cref{a1plus} is the $k$-th largest of $m$ independent draws from $F$.

(a) The voters ranking $a$ last are a subset of all voters.

(b) $L_a$ is the smallest of $K_a$ independent copies of $U_{(m)}$, each the smallest of $m$ draws from $F$. So $L_a$ is the smallest of $mK_a$ independent draws from $F$, which gives the stated probability. Integrating over $x$ gives $\mathbb{E}[L_a \mid \boldsymbol{\sigma}] = \varphi(K_a)$. More voters ranking $a$ last means more draws, so $\varphi$ is nonincreasing.

(c) As $L_a - M_a = \int_0^\infty \mathbf{1}\{M_a \leq x < L_a\}\,dx$, we bound $\Pr(M_a \leq x < L_a \mid \boldsymbol{\sigma})$. This event requires two independent things. First, every voter ranking $a$ last gives $a$ more than $x$, which by (b) has probability $(1-F(x))^{mK_a}$. Second, some other voter gives $a$ at most $x$. Such a voter ranks some alternative below $a$, so at least two of their $m$ draws are at most $x$, which has probability at most $\binom{m}{2}F(x)^2$. A union bound over at most $n$ such voters gives
\[
\Pr(M_a \leq x < L_a \mid \boldsymbol{\sigma}) \leq \tbinom{m}{2}\, n\, (1-F(x))^{mK_a} F(x)^2.
\]
As $mK_a \geq \frac{n}{2}$, the factor $(1-F(x))^{mK_a}$ is exponentially small in $n$ except near $\ell$. Near $\ell$, $F$ has slope at least $\frac{\lambda}{2}$ by \Cref{a4}, so substituting $u = F(x)$ bounds the integral over $x$ by
\[
\frac{2}{\lambda}\int_0^1 (1-u)^{\frac{n}{2}} u^2\,du = O(n^{-3}).
\]
Multiplying by $\binom{m}{2} n$ gives (c).
\end{proof}

\begin{lemma}
\label{lem:egal_gap}
Assume \Cref{a1plus}, \Cref{a4} and $\mathbb{E}[U] < \infty$, and let $\varphi$ be as in \Cref{lem:egal_bound}(b). For every $\theta \in (0,1)$ there are constants $C > 0$ and $\gamma \in (0,1)$ such that, for all integers $1 \leq K \leq K'$,

\[
\big|\lambda\big(\varphi(K) - \varphi(K')\big) - A\big| \leq \theta A + C\gamma^K
\]

where

\[
A := \frac{1}{mK+1} - \frac{1}{mK'+1} = \frac{m(K' - K)}{(mK+1)(mK'+1)}.
\]

\end{lemma}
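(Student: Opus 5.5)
The approach is to write the difference of $\varphi$ values as a single integral and compare it with the exact formula available for the uniform case. By definition,
\[
\varphi(K) - \varphi(K') = \int_0^\infty \Big((1-F(x))^{mK} - (1-F(x))^{mK'}\Big)\,dx ,
\]
and the integrand is nonnegative since $K \le K'$. The benchmark is the identity $\int_0^1 \big((1-u)^{mK} - (1-u)^{mK'}\big)\,du = \frac{1}{mK+1} - \frac{1}{mK'+1} = A$. This says that if $F$ were exactly linear with slope $\lambda$ near $\ell$, the substitution $u = F(x)$, $dx = du/\lambda$, would give $\lambda(\varphi(K)-\varphi(K')) = A$ exactly. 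The proof therefore reduces to controlling the deviation of $F$ from linearity near $\ell$, and to showing that the region away from $\ell$ contributes only exponentially small terms.

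\textbf{Step 1: choose the window.} Given $\theta$, I would use continuity of $F'$ at $\ell$ (\Cref{a4}) to pick $\eta \in (0,\delta]$ such that $F'(x) \in [\lambda/(1+\theta), \lambda(1+\theta)]$ on $[\ell, \ell+\eta]$, shrinking $\theta$ by a constant factor if necessary. Set $u_0 = F(\ell+\eta) > 0$. Since $F$ is strictly increasing and $C^1$ with positive derivative on this window, it is a bijection from $[\ell,\ell+\eta]$ onto $[0,u_0]$. The range $x < \ell$ contributes nothing, because $F = 0$ there and both powers equal $1$.

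\textbf{Step 2: the main window.} On $[\ell, \ell+\eta]$ I would substitute $u = F(x)$, which gives $dx = du/F'(F^{-1}(u))$, with $\lambda/F'$ lying in $[1/(1+\theta), 1+\theta]$. The integrand $g(u) = (1-u)^{mK} - (1-u)^{mK'}$ is nonnegative, so
\[
\lambda \int_\ell^{\ell+\eta} = \int_0^{u_0} g(u)\,\frac{\lambda}{F'}\,du = \int_0^{u_0} g(u)\,du\,(1 + O(\theta)).
\]
Moreover $\int_0^{u_0} g = A - \int_{u_0}^1 g$, and the tail satisfies $0 \le \int_{u_0}^1 g \le (1-u_0)^{mK}$. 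The main term is therefore within $O(\theta) A + (1-u_0)^{mK}(1+\theta)$ of $A$.

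\textbf{Step 3: the tail region $x > \ell+\eta$.} Here $1 - F(x) \le 1 - u_0 < 1$, so
\[
(1-F(x))^{mK} - (1-F(x))^{mK'} \le (1-u_0)^{mK-1}(1-F(x)).
\]
Integrating and using $\int_0^\infty (1-F) = \mathbb{E}[U] < \infty$ bounds this part by $\mathbb{E}[U]\,(1-u_0)^{mK-1}$. Combining both error terms, I would take $\gamma = (1-u_0)^m$ and $C$ a multiple of $\lambda\,\mathbb{E}[U]/(1-u_0) + 2$, after rescaling $\theta$.

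The main obstacle I expect is bookkeeping rather than any conceptual point. The multiplicative error from $F'$ has to be absorbed into $\theta A$, which works because $g \ge 0$ lets it apply to $\int_0^{u_0} g \le A$. The substituted window's truncation and the far tail then both need to be exponentially small in $K$ uniformly in $K'$. This holds since each is bounded by a $K'$-free expression of order $(1-u_0)^{mK}$.
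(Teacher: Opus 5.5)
Your proposal is correct and matches the paper's proof essentially step for step. Both use the representation via $g(u)=(1-u)^{mK}-(1-u)^{mK'}$, the substitution $u=F(x)$ on a window $[\ell,\ell+\eta]$ where $F'$ is within a factor of $\lambda$, the truncation bound $(1-u_0)^{mK}$, and the tail bound $(1-u_0)^{mK-1}\mathbb{E}[U]$, giving $\gamma=(1-u_0)^m$. The only detail to add is shrinking the window if necessary so that $u_0<1$; the paper's argument also leaves this implicit.
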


\begin{proof}
Write $\varphi(K) - \varphi(K') = \int_0^\infty g(F(x))\,dx$ with $g(u) = (1-u)^{mK} - (1-u)^{mK'} \geq 0$, which vanishes below $\ell$ as $g(0) = 0$. If $F$ were uniform with density $\lambda$ starting at $\ell$, substituting $u = F(x)$ would give exactly $\lambda(\varphi(K) - \varphi(K')) = \int_0^1 g(u)\,du = A$. We bound the two ways $F$ can deviate from this. By \Cref{a4}, choose $\varepsilon \in (0, \delta]$ such that $F'$ is within a factor $1 \pm \frac{\theta}{2}$ of $\lambda$ on $[\ell, \ell + \varepsilon]$, and let $\eta := F(\ell + \varepsilon) > 0$.
\begin{itemize}
    \item \emph{Near $\ell$.} Substituting $u = F(x)$ on $[\ell, \ell+\varepsilon]$ gives $\lambda$ times the integral there within a factor $1 \pm \theta$ of $\int_0^\eta g(u)\,du$. This differs from $A$ by at most $\int_\eta^1 (1-u)^{mK}\,du \leq (1-\eta)^{mK}$.
    \item \emph{Beyond $\ell + \varepsilon$.} Here $F(x) \geq \eta$, so $g(F(x)) \leq (1-F(x))^{mK} \leq (1-\eta)^{mK-1}(1-F(x))$, which integrates to at most $(1-\eta)^{mK-1}\,\mathbb{E}[U]$.
\end{itemize}
Both errors are at most a constant times $\gamma^K$ with $\gamma = (1-\eta)^m$, which gives the claim.
\end{proof}


\clearpage


\subsection{Exact Maximax Social Welfare}
\label{app:maximax_exact}

The theorem in this section mirrors that of \Cref{app:egal_exact} with the roles of the lowest and highest utilities, and of the last and first positions, exchanged. We again assume \Cref{a1plus}, and now introduce an assumption on the utility density near the upper end of the utility distribution.

Let $\bar\ell = \sup \operatorname{supp}(F)$ denote the highest possible utility. Write $\mathrm{ESW}_{+\infty}(f) = \mathbb{E}[\max_{i} u_i(f(\boldsymbol{\sigma}))]$ for the expected maximax social welfare of a social choice function $f$, and take $\mathrm{OPT}_n$ in \Cref{def:opt} with respect to $\mathrm{ESW}_{+\infty}$. 

Similar to \Cref{a4}, we require that $F$ has bounded support and a continuous density near its highest value which is positive at $\bar\ell$. This holds, for example, for the uniform distribution on any interval.

\begin{itemize}
    \asmitem{A5}{a5} $\bar\ell < \infty$, and for some $\delta > 0$, $F$ is continuously differentiable on $[\bar\ell - \delta, \bar\ell]$, with $\bar\lambda := F'(\bar\ell) > 0$.
\end{itemize}

As $F$ has bounded support, $\mathbb{E}[U] < \infty$ holds automatically.

\begin{theorem}
\label{thm:maximax_sw_direct}
Assume \Cref{a1plus} and \Cref{a5}, and denote the plurality vector as $\mathbf{s}^\text{p} = (1, 0, \dots, 0)$. For every score vector $\mathbf{s} \in \mathbb{R}^m$ the limit
\[
\Delta_{+\infty}(\mathbf{s}) = \lim_{n \to \infty} n^{\frac32}\big(\mathrm{ESW}_{+\infty}(f_{\mathbf{s}^\text{p}}) - \mathrm{ESW}_{+\infty}(f_\mathbf{s})\big)
\]
exists and is nonnegative, and $\Delta_{+\infty}(\mathbf{s}) = 0$ if, and only if, $\mathbf{s} = \mathbf{s}^\text{p}$. That is, plurality is the unique asymptotically average-case optimal positional scoring rule for maximax social welfare.
\end{theorem}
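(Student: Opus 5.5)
The plan mirrors the proof of \Cref{thm:egal_sw_direct}, with lowest and highest utilities exchanged and last and first positions exchanged. Fix a profile $\boldsymbol{\sigma}$. For each $a \in A$ let $P_a$ be the number of voters ranking $a$ first, $N_a = \max_i u_i(a)$ its maximax welfare, and $T_a = \max_{i:\sigma_i(a)=1} u_i(a)$ its \emph{first-place welfare}, with $T_a = -\infty$ if $P_a = 0$. Since $S_{\mathbf{s}^\text{p}}(a,\boldsymbol{\sigma}) = P_a$, plurality elects an alternative with $P_a = P_{\max} := \max_a P_a$.

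I would first prove a mirrored version of \Cref{lem:egal_bound}. Part (a) is $T_a \le N_a$. For part (b), conditional on $\boldsymbol{\sigma}$, $T_a$ is the maximum of $mP_a$ i.i.d.\ draws from $F$ (by \Cref{lemma:p0} and \Cref{a1plus}). Hence
\[
\mathbb{E}[T_a \mid \boldsymbol{\sigma}] = \bar\varphi(P_a) := \bar\ell - \int_{-\infty}^{\bar\ell} F(x)^{mP_a}\,dx,
\]
which is nondecreasing in $P_a$. Bounded support makes this well defined. For part (c), I need $\mathbb{E}[N_a - T_a \mid \boldsymbol{\sigma}] \le C' n^{-2}$ whenever $P_a \ge \frac{n}{2m}$. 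Write $N_a - T_a = \int \mathbf{1}\{T_a \le x < N_a\}\,dx$. A voter who does not rank $a$ first but gives $a$ utility above $x$ must have two of their draws above $x$. A union bound then gives the integrand bound $\binom{m}{2} n F(x)^{mP_a}(1-F(x))^2$. Substituting $u = 1-F(x)$ near $\bar\ell$ uses \Cref{a5}; away from $\bar\ell$ the factor $F^{mP_a}$ is exponentially small and the support is bounded. Together these give $O(n\cdot n^{-3})$. Next comes the analogue of \Cref{lem:egal_gap}:
\[
\bar\lambda\big(\bar\varphi(K') - \bar\varphi(K)\big) = \frac{m(K'-K)}{(mK+1)(mK'+1)}\,(1\pm\theta) + O(\gamma^K).
\]
The proof is the same comparison with a uniform density $\bar\lambda$ at $\bar\ell$. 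The tail error is now bounded using the bounded support instead of $\mathbb{E}[U]$.

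With these lemmas, the main argument follows the egalitarian steps in order. In Step 1, for any $g$ we have $\mathbb{E}[T_{g(\boldsymbol{\sigma})}\mid\boldsymbol{\sigma}] \le \bar\varphi(P_{\max})$. In Step 2, let $B$ be the event that some $P_a < \frac{n}{2m}$. Hoeffding makes $B$ exponentially rare, and welfare is bounded by $\bar\ell$. Together with (c) this gives $0 \le \mathrm{OPT}_n - \mathrm{ESW}_{+\infty}(f_{\mathbf{s}^\text{p}}) = O(n^{-2})$, and the loss of $f_\mathbf{s}$ reduces to $n^{3/2}\mathbb{E}[(\bar\varphi(P_{\max}) - \bar\varphi(P_b))\mathbf{1}_{B^c}] + o(1)$. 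In Step 3, outside $B$ the gap lemma gives the cost $\approx m(P_{\max}-P_b)/(\bar\lambda n^2)$. I then take $\mathbf{r} = \mathbf{s}^\text{p}$ in \eqref{eq:score_loss}, which gives $\mathbb{E}[P_{\max} - P_b \mid \boldsymbol{\sigma}] = n^{1/2}H_\mathbf{s}(\mathbf{Z}^{(n)})$. Steps 1--2 of the proof of \Cref{thm:optimal_deterministic_appendix} then yield $\Delta_{+\infty}(\mathbf{s}) = \frac{m}{\bar\lambda}\mathbb{E}[H_\mathbf{s}(\mathbf{Z})]$. In Step 4, step 3 of that proof gives nonnegativity, with equality iff $\mathbf{s}$ is a positive affine transform of $\mathbf{s}^\text{p}$. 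It also gives $\Delta_{+\infty}(\mathbf{1})>0$, and the argument at the start of \Cref{app:theorems} with $c_n = n^{3/2}$ finishes the proof.

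The main obstacle is Step 3, which makes the approximation ``$\approx$'' exact in the limit. The gap lemma's relative error $\theta$ is uniform, and on $B^c$ we have $mP \approx n$ up to $O(n^{1/2})$. So $n^{3/2}$ times the cost equals $\frac{m^{-1}\cdot m^2}{\bar\lambda}\cdot\frac{P_{\max}-P_b}{n^{1/2}}\,(1+o(1))(1\pm\theta) + O(n^{3/2}\gamma^{n/2m})$. I would handle the factor $(mK+1)(mK'+1)/n^2 \to m^2/m^2$ with constants by using $P_a/n \to 1/m$ in probability together with uniform integrability of $H_\mathbf{s}(\mathbf{Z}^{(n)})$. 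Then I would let $\theta \to 0$ after $n\to\infty$.
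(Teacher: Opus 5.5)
Your proposal is correct and follows the paper's proof essentially step for step. It uses the same ingredients: the mirrored first-place-welfare and gap lemmas (with bounded support replacing $\mathbb{E}[U]$), the Hoeffding event $B$, and the reduction to $\frac{m}{\bar\lambda}\mathbb{E}[H_\mathbf{s}(\mathbf{Z})]$ with reference vector $\mathbf{s}^\text{p}$, followed by step 3 of the proof of \Cref{thm:optimal_deterministic_appendix}. Your handling of the limit in Step 3 is more explicit than the paper's remark that ``the approximations become exact in the limit''. There you use a ratio factor that is bounded on $B^c$ and tends to $1$ in probability, uniform integrability, and then $\theta \to 0$ after $n \to \infty$.
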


\begin{proof}
Fix a preference profile $\boldsymbol{\sigma}$. For each $a \in A$ let $\bar K_a$ be the number of voters who rank $a$ first, $\bar M_a = \max_i u_i(a)$ the maximax social welfare of $a$, and $\bar L_a$ its first-place welfare (\Cref{lem:malf_bound}). Write $f_\text{p} = f_{\mathbf{s}^\text{p}}$, $\bar K_{\max} = \max_a \bar K_a$ and $b = f_\mathbf{s}(\boldsymbol{\sigma})$. As $S_{\mathbf{s}^\text{p}}(a, \boldsymbol{\sigma}) = \bar K_a$, plurality elects an alternative ranked first most often. The proof mirrors that of \Cref{thm:egal_sw_direct}. Maximax welfare is almost exactly first-place welfare, which plurality maximizes, and any other rule loses welfare in proportion to the first places its winner lacks.

\medskip \noindent
\textbf{Step 1: plurality maximizes expected first-place welfare.}
By \Cref{lem:malf_bound}(b), $\mathbb{E}[\bar L_a \mid \boldsymbol{\sigma}] = \bar\varphi(\bar K_a)$ increases with the number of voters ranking $a$ first. Plurality elects an alternative with the most such voters, so for any social choice function $g$,
\[
\mathbb{E}[\bar L_{g(\boldsymbol{\sigma})} \mid \boldsymbol{\sigma}] \leq \bar\varphi(\bar K_{\max}) = \mathbb{E}[\bar L_{f_\text{p}(\boldsymbol{\sigma})} \mid \boldsymbol{\sigma}].
\]

\medskip \noindent
\textbf{Step 2: first-place welfare approximates maximax welfare.}
Typically every alternative is ranked first by about $\frac{n}{m}$ voters. Let $B$ be the event that some alternative is ranked first by fewer than $\frac{n}{2m}$ voters. By Hoeffding's inequality, $B$ has exponentially small probability, and as welfare is bounded by $\bar\ell$, profiles in $B$ can be ignored. Outside $B$, \Cref{lem:malf_bound}(c) shows that $\bar L_a$ and $\bar M_a$ differ in expectation by at most $C' n^{-2}$ for every alternative. Combined with Step 1, no social choice function has expected maximax welfare more than $O(n^{-2})$ above that of plurality, that is,
\begin{equation}
\label{eq:malf_opt}
0 \leq \mathrm{OPT}_n - \mathrm{ESW}_{+\infty}(f_\text{p}) = O(n^{-2}).
\end{equation}
For a positional scoring rule $f_\mathbf{s}$, the same bounds give
\begin{multline}
\label{eq:malf_total}
n^{\frac32}\big(\mathrm{ESW}_{+\infty}(f_\text{p}) - \mathrm{ESW}_{+\infty}(f_\mathbf{s})\big) \\ = n^{\frac32}\,\mathbb{E}\big[(\bar\varphi(\bar K_{\max}) - \bar\varphi(\bar K_b))\mathbf{1}_{B^c}\big] + o(1),
\end{multline}
so it remains to compare expected first-place welfare.

\medskip \noindent
\textbf{Step 3: the cost of electing an alternative ranked first too rarely.}
Electing $b$ instead of the plurality winner means electing an alternative ranked first $\bar K_{\max} - \bar K_b$ fewer times. Outside $B$, both are close to $\frac{n}{m}$, so by \Cref{lem:malf_gap} this costs
\[
\bar\varphi(\bar K_{\max}) - \bar\varphi(\bar K_b) \approx \frac{m\,(\bar K_{\max} - \bar K_b)}{\bar\lambda n^2}.
\]
The deficit $\bar K_{\max} - \bar K_b$ is of order $n^{\frac12}$, the scale at which the counts $\bar K_a$ fluctuate. Let $\mathbf{Z}^{(n)}$, $\mathbf{Z}$ and $H_\mathbf{s}$ be as in the proof of \Cref{thm:optimal_deterministic_appendix} with reference vector $\mathbf{r} = \mathbf{s}^\text{p}$. As $S_{\mathbf{s}^\text{p}}(a, \boldsymbol{\sigma}) = \bar K_a$, \eqref{eq:score_loss} gives $\mathbb{E}[\bar K_{\max} - \bar K_b \mid \boldsymbol{\sigma}] = n^{\frac12} H_\mathbf{s}(\mathbf{Z}^{(n)})$. By steps 1 and 2 of that proof, $H_\mathbf{s}(\mathbf{Z}^{(n)})$ is uniformly integrable and $\mathbb{E}[H_\mathbf{s}(\mathbf{Z}^{(n)})] \to \mathbb{E}[H_\mathbf{s}(\mathbf{Z})]$. The approximations become exact in the limit, so by \eqref{eq:malf_total} the limit exists and
\[
\Delta_{+\infty}(\mathbf{s}) = \frac{m}{\bar\lambda}\, \mathbb{E}[H_\mathbf{s}(\mathbf{Z})].
\]
So the welfare a positional scoring rule loses is proportional to how many fewer first places, on average, its winner has.

\medskip \noindent
\textbf{Step 4: plurality minimizes welfare loss.}
By step 3 of the proof of \Cref{thm:optimal_deterministic_appendix} with $\mathbf{r} = \mathbf{s}^\text{p}$, $\Delta_{+\infty}(\mathbf{s}) \geq 0$, with equality if, and only if, $\mathbf{s} = \alpha\,\mathbf{s}^\text{p} + \beta\mathbf{1}$ for some $\alpha > 0$ and $\beta \in \mathbb{R}$. In particular $\Delta_{+\infty}(\mathbf{1}) > 0$. By \eqref{eq:malf_opt}, $n^{\frac32}(\mathrm{OPT}_n - \mathrm{ESW}_{+\infty}(f_\text{p})) \to 0$, so the argument at the start of \Cref{app:theorems} applies with $c_n = n^{\frac32}$ and $f^\circ = f_\text{p}$. Hence plurality is, up to positive affine transformation, the unique asymptotically average-case optimal positional scoring rule. As in \Cref{thm:egal_sw_direct}, the factor $\frac{m}{\bar\lambda}$ cancels in the ratio, so its limit does not depend on $F$.
\end{proof}

\begin{lemma}[First-place welfare]
\label{lem:malf_bound}
Assume \Cref{a1plus} and \Cref{a5}. Fix a preference profile $\boldsymbol{\sigma}$ and $a \in A$. Let $\bar K_a$ be the number of voters who rank $a$ first, $\bar M_a := \max_{i} u_i(a)$, and let $\bar L_a := \max_{i \,:\, \sigma_i(a) = 1} u_i(a)$ be the \emph{first-place welfare} of $a$, with $\bar L_a = 0$ if $\bar K_a = 0$. Then
\begin{enumerate}
    \item[(a)] $\bar L_a \le \bar M_a$;
    \item[(b)] $\Pr(\bar L_a \leq x \mid \boldsymbol{\sigma}) = F(x)^{m\bar K_a}$ for every $x \geq 0$, so $\mathbb{E}[\bar L_a \mid \boldsymbol{\sigma}] = \bar\varphi(\bar K_a)$, where $\bar\varphi(k) := \bar\ell - \int_0^{\bar\ell} F(x)^{mk}\,dx$ is nondecreasing in $k$; and
    \item[(c)] there is a constant $C'$, depending only on $F$ and $m$, such that $\mathbb{E}[\bar M_a - \bar L_a \mid \boldsymbol{\sigma}] \leq C' n^{-2}$ whenever $\bar K_a \geq \frac{n}{2m}$.
\end{enumerate}
\end{lemma}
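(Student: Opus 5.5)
The plan is to mirror the proof of \Cref{lem:egal_bound}, exchanging lowest and highest utilities and last and first positions. Throughout, I condition on $\boldsymbol{\sigma}$. By \Cref{lemma:p0}, utilities of different voters remain conditionally independent, and a voter ranking $a$ in position $k$ gives $a$ a utility distributed as $U_{(k)}$. Under \Cref{a1plus}, $U_{(k)}$ is the $k$-th largest of $m$ i.i.d.\ draws from $F$.

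Part (a) is immediate, because the voters ranking $a$ first form a subset of all voters. (If $\bar K_a=0$, then $\bar L_a=0<\bar M_a$, since utilities are positive.)

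For (b), note that $\bar L_a$ is the maximum of $\bar K_a$ independent copies of $U_{(1)}$. Each copy is the maximum of $m$ draws from $F$, so $\bar L_a$ is the maximum of $m\bar K_a$ i.i.d.\ draws. This gives $\Pr(\bar L_a\le x\mid\boldsymbol\sigma)=F(x)^{m\bar K_a}$. The formula also covers $\bar K_a=0$, since then the probability is $1$ for $x\ge 0$. Because the support lies in $(0,\bar\ell]$, the tail-integral formula gives $\mathbb E[\bar L_a\mid\boldsymbol\sigma]=\int_0^{\bar\ell}(1-F(x)^{m\bar K_a})\,dx=\bar\varphi(\bar K_a)$. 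Monotonicity follows because $F(x)^{mk}$ is nonincreasing in $k$.

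For (c), I write $\bar M_a-\bar L_a=\int_0^{\bar\ell}\mathbf 1\{\bar L_a\le x<\bar M_a\}\,dx$ and bound the conditional probability of this event. The event requires two things.
\begin{itemize}
\item Every first-place voter gives $a$ utility at most $x$. This has probability $F(x)^{m\bar K_a}$.
\item Some voter who does not rank $a$ first gives $a$ utility above $x$. Such a voter ranks some alternative above $a$, so at least two of their $m$ draws exceed $x$. This has probability at most $\binom m2(1-F(x))^2$.
\end{itemize}
These two events involve disjoint sets of voters, so they are independent. A union bound over at most $n$ voters then gives
\[
\Pr(\bar L_a\le x<\bar M_a\mid\boldsymbol\sigma)\le\tbinom m2\,n\,F(x)^{m\bar K_a}(1-F(x))^2 .
\]
Using $m\bar K_a\ge n/2$, the factor $F(x)^{n/2}$ handles the range away from $\bar\ell$. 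On $[0,\bar\ell-\varepsilon]$ we have $F(x)\le 1-\eta$ for some $\eta>0$, so that part of the integral is at most $\bar\ell(1-\eta)^{n/2}$, which is exponentially small.

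Near $\bar\ell$, \Cref{a5} lets me pick $\varepsilon\le\delta$ with $F'\ge\bar\lambda/2$ on $[\bar\ell-\varepsilon,\bar\ell]$. Substituting $u=1-F(x)$ bounds this part by
\[
\frac{2}{\bar\lambda}\int_0^1(1-u)^{n/2}u^2\,du=O(n^{-3}),
\]
which is a Beta integral. Multiplying by $\binom m2 n$ gives $C'n^{-2}$, with $C'$ depending only on $F$ and $m$.

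The main obstacle is this last step: splitting the integral correctly so that the change of variables is justified only on the interval where $F'$ is bounded below, while the remainder is exponentially small. Unlike the egalitarian case, no moment assumption is needed here, because the support is bounded.
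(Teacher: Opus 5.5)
Your proposal is correct and follows essentially the same route as the paper's proof. Part (a) uses the subset argument, and part (b) views $\bar L_a$ as the maximum of $m\bar K_a$ i.i.d.\ draws. Part (c) uses the integral representation of $\bar M_a - \bar L_a$, the product of the independent factor $F(x)^{m\bar K_a}$ with the union bound $\binom{m}{2}n(1-F(x))^2$, and a Beta integral after substituting $u = 1-F(x)$ near $\bar\ell$. Your explicit split at $\bar\ell-\varepsilon$, with the bound $\bar\ell(1-\eta)^{n/2}$ on the far part, simply makes precise the paper's remark that $F(x)^{m\bar K_a}$ is exponentially small away from $\bar\ell$.
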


Intuitively, (b) says that the more voters rank $a$ first, the higher its expected first-place welfare. Part (c) says that the highest utility given to $a$ almost always comes from a voter who ranks $a$ first.

\begin{proof}
Condition on $\boldsymbol{\sigma}$ throughout. Utilities of different voters remain independent. By \Cref{lemma:p0}, a voter who ranks $a$ in position $k$ gives $a$ a utility distributed as $U_{(k)}$, which by \Cref{a1plus} is the $k$-th largest of $m$ independent draws from $F$.

(a) The voters ranking $a$ first are a subset of all voters, and utilities are positive.

(b) $\bar L_a$ is the largest of $\bar K_a$ independent copies of $U_{(1)}$, each the largest of $m$ draws from $F$. So $\bar L_a$ is the largest of $m\bar K_a$ independent draws from $F$, which gives the stated probability. As $0 \leq \bar L_a \leq \bar\ell$, integrating $\Pr(\bar L_a > x \mid \boldsymbol{\sigma})$ over $[0, \bar\ell]$ gives $\mathbb{E}[\bar L_a \mid \boldsymbol{\sigma}] = \bar\varphi(\bar K_a)$. More voters ranking $a$ first means more draws, so $\bar\varphi$ is nondecreasing.

(c) As $\bar M_a - \bar L_a = \int_0^{\bar\ell} \mathbf{1}\{\bar L_a \leq x < \bar M_a\}\,dx$, we bound $\Pr(\bar L_a \leq x < \bar M_a \mid \boldsymbol{\sigma})$. This event requires two independent things. First, every voter ranking $a$ first gives $a$ at most $x$, which by (b) has probability $F(x)^{m\bar K_a}$. Second, some other voter gives $a$ more than $x$. Such a voter ranks some alternative above $a$, so at least two of their $m$ draws exceed $x$, which has probability at most $\binom{m}{2}(1-F(x))^2$. A union bound over at most $n$ such voters gives
\[
\Pr(\bar L_a \leq x < \bar M_a \mid \boldsymbol{\sigma}) \leq \tbinom{m}{2}\, n\, F(x)^{m\bar K_a} (1-F(x))^2.
\]
As $m\bar K_a \geq \frac{n}{2}$, the factor $F(x)^{m\bar K_a}$ is exponentially small in $n$ except near $\bar\ell$. Near $\bar\ell$, $F$ has slope at least $\frac{\bar\lambda}{2}$ by \Cref{a5}, so substituting $u = 1 - F(x)$ bounds the integral over $x$ by
\[
\frac{2}{\bar\lambda}\int_0^1 (1-u)^{\frac{n}{2}} u^2\,du = O(n^{-3}).
\]
Multiplying by $\binom{m}{2} n$ gives (c).
\end{proof}

\begin{lemma}
\label{lem:malf_gap}
Assume \Cref{a1plus} and \Cref{a5}, and let $\bar\varphi$ be as in \Cref{lem:malf_bound}(b). For every $\theta \in (0,1)$ there are constants $C > 0$ and $\gamma \in (0,1)$ such that, for all integers $1 \leq K \leq K'$,

\[
\big|\bar\lambda\big(\bar\varphi(K') - \bar\varphi(K)\big) - A\big| \leq \theta A + C\gamma^K,
\]

where

\[
A := \frac{1}{mK+1} - \frac{1}{mK'+1} = \frac{m(K' - K)}{(mK+1)(mK'+1)}.
\]

\end{lemma}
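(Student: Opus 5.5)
The plan is to mirror the proof of \Cref{lem:egal_gap}, reflecting the distribution about its upper endpoint $\bar\ell$. Write
\[
\bar\varphi(K') - \bar\varphi(K) = \int_0^{\bar\ell} \big(F(x)^{mK} - F(x)^{mK'}\big)\,dx = \int_0^{\bar\ell} g\big(1 - F(x)\big)\,dx,
\]
where $g(u) = (1-u)^{mK} - (1-u)^{mK'} \geq 0$ is the same function as before. The integrand is nonnegative because $K \le K'$ and $0 \le F \le 1$. It vanishes where $F(x) = 1$, i.e.\ above $\bar\ell$, since $g(0) = 0$. In the model case where $F$ is linear with slope $\bar\lambda$ all the way down from $\bar\ell$, the substitution $u = 1 - F(x)$, with $du = -\bar\lambda\,dx$, gives exactly
\[
\bar\lambda\big(\bar\varphi(K') - \bar\varphi(K)\big) = \int_0^1 g(u)\,du = \frac{1}{mK+1} - \frac{1}{mK'+1} = A .
\]
So the whole task is to control the two ways a general $F$ departs from this model.

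\textbf{Near $\bar\ell$.} Using \Cref{a5}, choose $\varepsilon \in (0,\delta]$ small enough that $F'$ lies within a factor $1 \pm \tfrac{\theta}{2}$ of $\bar\lambda$ on $[\bar\ell - \varepsilon, \bar\ell]$. Set $\eta := 1 - F(\bar\ell - \varepsilon) > 0$; it is positive because $F' > 0$ near $\bar\ell$. On this interval, substitute $u = 1 - F(x)$, which is monotone there, so $dx = du/F'(x)$. This shows that $\bar\lambda$ times the integral over $[\bar\ell-\varepsilon,\bar\ell]$ lies within a factor $1 \pm \theta$ of $\int_0^\eta g(u)\,du$, for $\theta<1$ after adjusting constants. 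This truncated integral differs from $A = \int_0^1 g$ by at most $\int_\eta^1 (1-u)^{mK}\,du \le (1-\eta)^{mK}$.

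\textbf{Away from $\bar\ell$.} On $[0, \bar\ell - \varepsilon]$ we have $F(x) \le 1 - \eta$. Hence $g(1-F(x)) \le F(x)^{mK} \le (1-\eta)^{mK}$. Because the support is bounded (\Cref{a5}), this piece integrates to at most $\bar\ell\,(1-\eta)^{mK}$. This is simpler than the egalitarian case, which needed $\mathbb{E}[U]$ to control an unbounded tail.

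\textbf{Combining.} The three error terms sum to
\[
\theta\!\int_0^\eta g + \bar\lambda(1-\eta)^{mK} + \bar\lambda\bar\ell(1-\eta)^{mK} \le \theta A + C\gamma^K,
\]
with $\gamma = (1-\eta)^m$, a constant $C$ depending on $\bar\lambda$ and $\bar\ell$, and using $\int_0^\eta g \le A$. The only point needing care is the Jacobian bookkeeping in the near-$\bar\ell$ step, in particular getting the orientation of the substitution right so that the factor $1\pm\theta$ multiplies a nonnegative quantity. That step is the main obstacle; everything else is routine.
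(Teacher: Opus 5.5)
Your proposal is correct and follows the paper's proof essentially line for line. It uses the same rewriting as $\int_0^{\bar\ell} g(1-F(x))\,dx$, the same choice of $\varepsilon$ and $\eta = 1-F(\bar\ell-\varepsilon)$, the same two error sources (the near-$\bar\ell$ Jacobian comparison plus the tail $\int_\eta^1 g \le (1-\eta)^{mK}$, and the bounded-support bound $\bar\ell(1-\eta)^{mK}$ below $\bar\ell-\varepsilon$), and $\gamma = (1-\eta)^m$. The only slip is cosmetic: in your combined sum the middle term should be $(1-\eta)^{mK}$ rather than $\bar\lambda(1-\eta)^{mK}$, because that comparison is made after multiplying by $\bar\lambda$, and the difference is absorbed into $C$.
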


\begin{proof}
Write $\bar\varphi(K') - \bar\varphi(K) = \int_0^{\bar\ell} g(1 - F(x))\,dx$ with $g(u) = (1-u)^{mK} - (1-u)^{mK'} \geq 0$. If $F$ were uniform with density $\bar\lambda$ ending at $\bar\ell$, substituting $u = 1 - F(x)$ would give exactly $\bar\lambda(\bar\varphi(K') - \bar\varphi(K)) = \int_0^1 g(u)\,du = A$. We bound the two ways $F$ can deviate from this. By \Cref{a5}, choose $\varepsilon \in (0, \delta]$ such that $F'$ is within a factor $1 \pm \frac{\theta}{2}$ of $\bar\lambda$ on $[\bar\ell - \varepsilon, \bar\ell]$, and let $\eta := 1 - F(\bar\ell - \varepsilon) > 0$.
\begin{itemize}
    \item \emph{Near $\bar\ell$.} Substituting $u = 1 - F(x)$ on $[\bar\ell - \varepsilon, \bar\ell]$ gives $\bar\lambda$ times the integral there within a factor $1 \pm \theta$ of $\int_0^\eta g(u)\,du$. This differs from $A$ by at most $\int_\eta^1 (1-u)^{mK}\,du \leq (1-\eta)^{mK}$.
    \item \emph{Below $\bar\ell - \varepsilon$.} Here $F(x) \leq 1 - \eta$, so $g(1 - F(x)) \leq F(x)^{mK} \leq (1-\eta)^{mK}$, which integrates to at most $\bar\ell\,(1-\eta)^{mK}$.
\end{itemize}
Both errors are at most a constant times $\gamma^K$ with $\gamma = (1-\eta)^m$, which gives the claim.
\end{proof}


\clearpage




\section{Additional Experiment Discussion}
\label{app:experiments}

Our experiments utilize \texttt{SciPy} \cite{2020SciPy-NMeth} for sampling from statistical distributions and \texttt{PrefSampling} \cite{boehmer2024guide} to generate Euclidean preferences. All experiments were performed on an M2 Macbook Air and the time required to run every experiment reported is roughly 7 hours (total; not 7 hours per experiment). This submission is accompanied by code with a detailed README that contains additional information about experiment runs.

\subsection{Preference Distributions}

\paragraph{Neutral Distributions}
We use six different neutral utility distributions in our experiments. We illustrate the PDF and CDF of each distribution in  \Cref{fig:utility_dist_pdf_cdf}. These are not the only distributions we could use, they are simply chosen to represent a variety of distributions. There is no reason to believe that human (or other agent) utility values would neatly correspond to any specific distribution so we focus on finding generalities that hold across distributions.

\begin{figure*}[tbp]
    \centering
    \includegraphics[width=\linewidth]{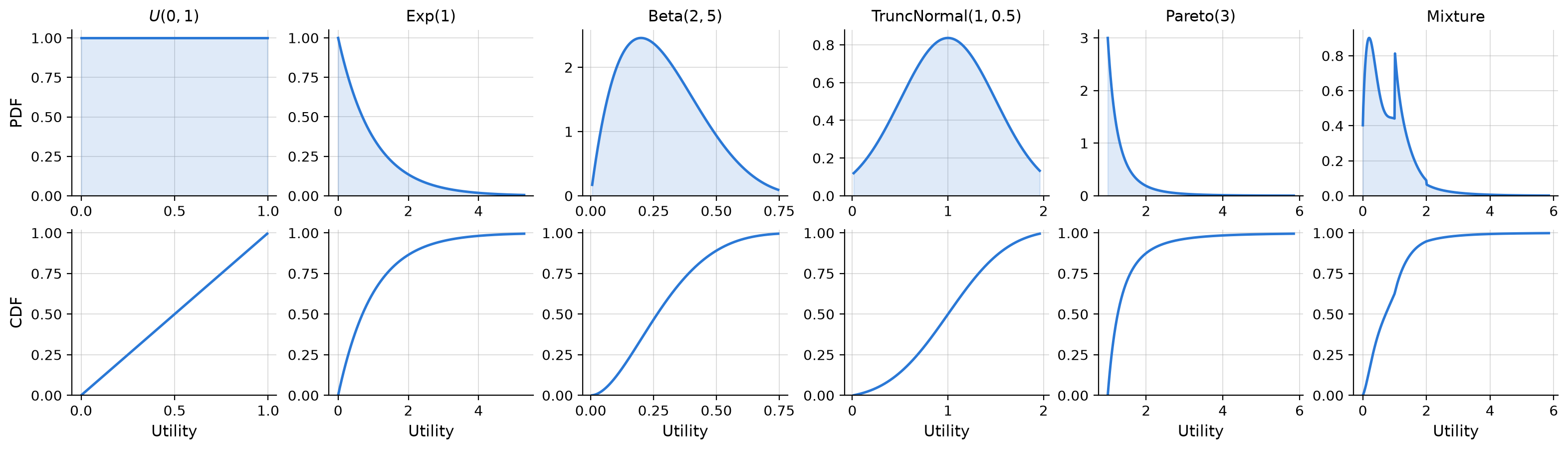}
    \caption{Probability Density Function and Cumulative Distribution Function for each utility distribution used in our experiments.}
    \label{fig:utility_dist_pdf_cdf}
\end{figure*}

\paragraph{Euclidean Distributions}

Euclidean distributions place both voters and alternatives into some underlying geometric space. We use cube and sphere topologies of varying dimensions, all with a diameter $1$. With a cube of three dimensions, voters and alternatives are placed within a cube at a location chosen uniformly at random. We use the \texttt{PrefSampling} library \cite{boehmer2024guide} to generate points within an $\ell$-dimensional cube or on the surface of an $\ell$-dimensional sphere. We illustrate a random sample of $m=20$ alternatives and $n=100$ voters in \Cref{fig:euclidean_structure}. Voter utility is proportional to the distance from the voter to each alternative with a voter at location $v_i$ receiving utility $u_i = 1 - \frac{\Vert v_i - a_j \Vert_2}{d}$ for an alternative at position $a_j$. Spheres have a diameter of $d=1$ and cubes have diameter $d = \sqrt{\ell}$.


\begin{figure*}[tbp]
    \centering
    \includegraphics[width=\linewidth]{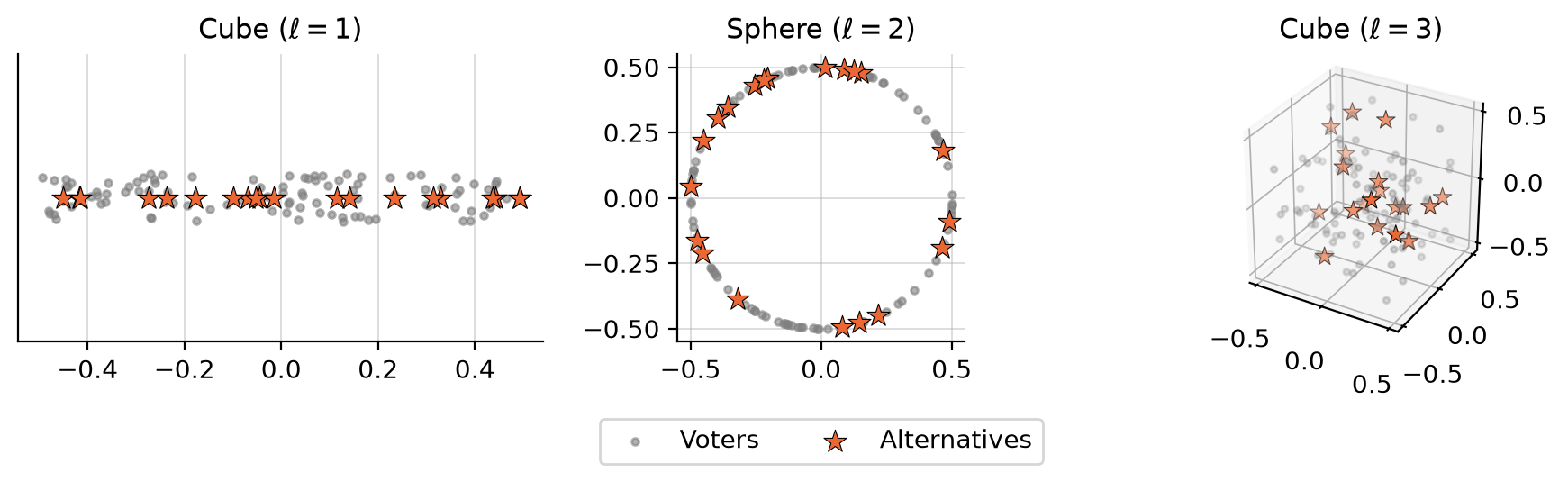}
    \caption{Example Euclidean distributions of varying dimensions and topologies. Note that height of voters in the left figure is only included for visualization and each item in the plot is defined fully by its position on the line.}
    \label{fig:euclidean_structure}
\end{figure*}

\paragraph{Plackett-Luce Distributions}

 The Plackett-Luce distribution is considered a generalization of the Bradley-Terry model of preferences and uses a vector of $m$ parameters which correspond to the quality of alternatives. An alternative with a higher quality is more likely to be given a higher utility value by a voter. We set weights for each alternative $j \in \{1, \dots, m\}$ as, 

 \[
 w_j = \exp{\gamma (\frac12 - \frac{j-1}{m-1})}
 \]

Where $\gamma$ parameterizes the family. Each voter then samples a utility value for each alternative from $Beta(w_j, 1)$. Using $\gamma = 0$ recreates a uniform distribution while higher $\gamma$ increases the expected range of utilities that a voter will take on. We show the CDF of each rank position for various $\gamma$ in \Cref{fig:pl_structure}.

\begin{figure*}[tbp]
    \centering
    \includegraphics[width=\linewidth]{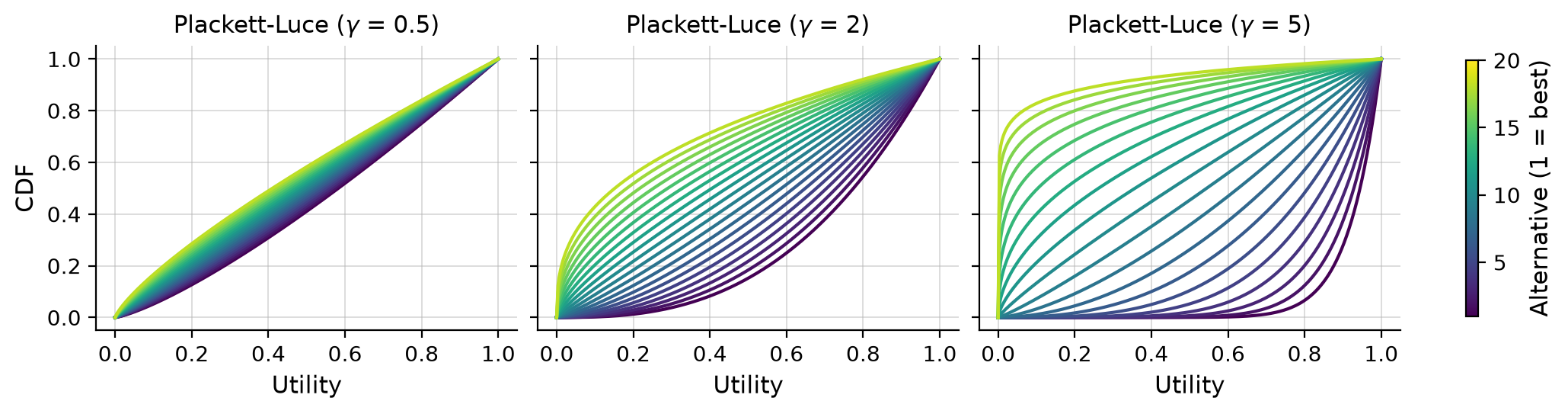 }
    \caption{CDF of Plackett-Luce utilities for various values of $\gamma$. Each series corresponds to the CDF of utility for a single rank position.}
    \label{fig:pl_structure}
\end{figure*}


\subsection{Structured Utility Distributions}

We include results on several additional parameterizations of distributions in which utilities are not sampled i.i.d. in \Cref{fig:opt_vec_nonneutral_all_dist,fig:welfare_loss_nonneutral_all_dist}. These highlight the novel dynamics shown in the paper body. The optimized score vector on Euclidean distributions in particular exhibit interesting structure which differs based on geometry.

\begin{figure*}[tbp]
    \centering
    \includegraphics[width=\linewidth]{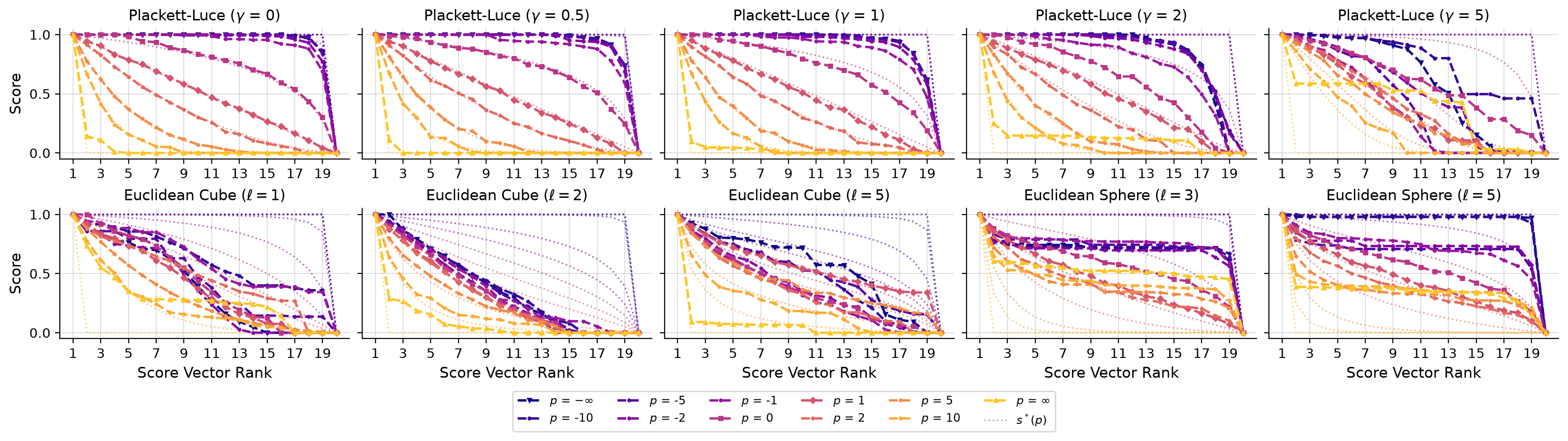}
    \caption{
    Optimized score vector across distributions as $p$ changes. Plackett-Luce with $\gamma = 0$ provides uniformly distributed utilities and shows how closely optimization matches the theoretical optimal value of $\mathbf{s}^*$.
    }
    \label{fig:opt_vec_nonneutral_all_dist}
\end{figure*}

\begin{figure*}[tbp]
    \centering
    \includegraphics[width=\linewidth]{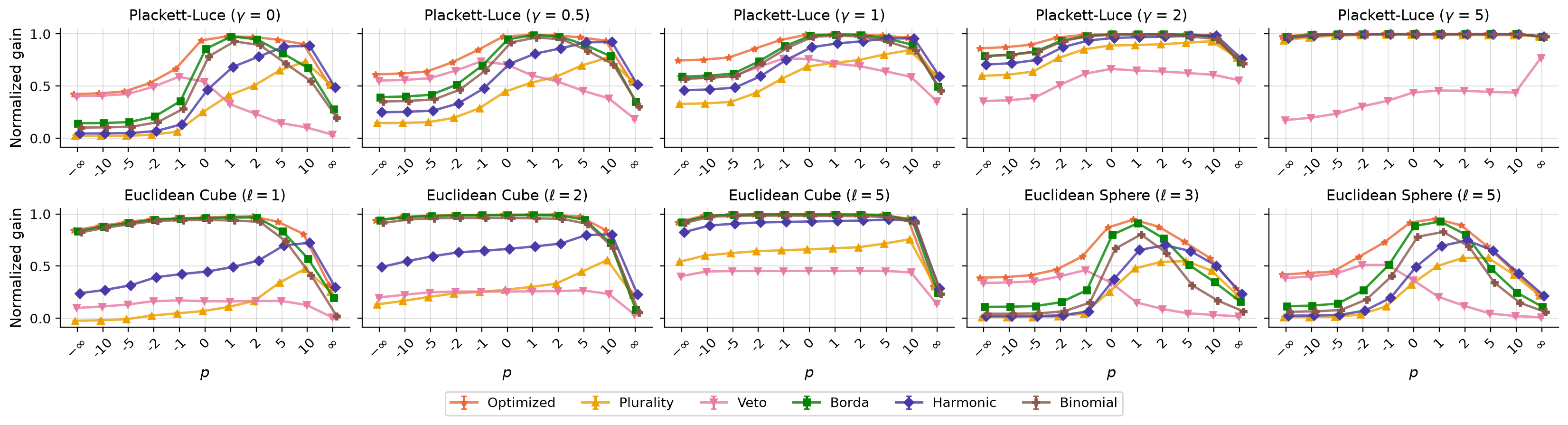}
    \caption{
    Normalized gain across additional distributions as $p$ changes. On Euclidean Cube distributions, Borda provides a very strong approximation of optimal welfare across multiple dimensions. As $\gamma$ increases, all rules but veto appear to perform near optimally with Plackett-Luce utilities.
    When $\gamma = 0$ utilities are distributed uniformly at random.
    }
    \label{fig:welfare_loss_nonneutral_all_dist}
\end{figure*}


\end{document}